\newtheorem{theorem}{Theorem}[]
\newtheorem{corollary}{Corollary}[]
\newtheorem{remark}{Remark}[]
\newtheorem{lemma}[]{Lemma}
\newtheorem{definition}{Definition}
\newcommand\blfootnote[1]{%
  \begingroup
  \renewcommand\thefootnote{}\footnote{#1}%
  \addtocounter{footnote}{-1}%
  \endgroup
}
\title{ Binomial Line Cox Processes: Statistical Characterization and Applications in Wireless Network Analysis} 
\author{Mohammad Taha Shah, {\it Graduate Student Member, IEEE}, Gourab Ghatak, {\it Member, IEEE}, Souradip Sanyal, and Martin Haenggi, {\it Fellow, IEEE}
\thanks{\footnotesize{Mohammad Taha Shah is with the Bharti School of Telecommunication Technology and Management, IIT Delhi (email: bsz218183@iitd.ac.in). Gourab Ghatak is with the Department of Electrical Engineering, IIT Delhi, India (email: gghatak@ee.iitd.ac.in). Souradip Sanyal is with Continental AG. He was with the Department of Electronics and Communication Engineering IIIT-Delhi at the time of this work (email: souradip20156@iiitd.ac.in). Martin Haenggi is with the Department of Electrical Engineering, University of Notre Dame, Notre Dame, IN 46556 USA (e-mail: mhaenggi@nd.edu).}}}
\begin{document}

\maketitle



\begin{abstract}
The current analysis of wireless networks whose transceivers are confined to streets is largely based on Poissonian models, such as Poisson line processes and Poisson line Cox processes. We demonstrate important scenarios where a model with a finite and deterministic number of streets, termed binomial line process, is more accurate. We characterize the statistical properties of the BLP and the corresponding binomial line Cox process and apply them to analyze the performance of a network whose access points are deployed along the streets of a city. Such a deployment scenario will be typical for 5G and future wireless networks. In order to obtain a fine-grained insight into the network performance, we derive the meta distribution of the signal-to-interference and noise ratio. Accordingly, we investigate the mean local delay in transmission and the density of successful transmission. These metrics, respectively, characterize the latency and coverage performance of the network and are key performance indicators of next-generation wireless systems.
\end{abstract}


\IEEEkeywords Stochastic geometry, Line processes, Binomial point processes, Cox process, Meta Distributions.

\section{Introduction}
\subsection{Motivation}
\blfootnote{The codes for generating the numerical results of this paper are available for download~\cite{codes}.}
Line processes are useful statistical tools for studying a variety of engineering problems such as transportation and urban infrastructure planning, wireless communications, and industrial automation scenarios~\cite{ripley1976foundations, baccelli1997stochastic}. In the two-dimensional Euclidean plane, a line process, is a random collection of lines whose locations and orientations reside in a parameter space (to be defined shortly) according to a spatial stochastic process. Leveraging a line process, researchers often study doubly-stochastic processes called Cox processes, which are Poisson point processes defined with the line process as their restricting domain~\cite{dhillon2020poisson}. These models are key in deriving insights into engineering and planning questions such as
\begin{itemize}
    \item {\it How many electric vehicle charging points does a city need to have along the streets?}
    \item {\it From a typical urban home, how far is the nearest bus stop? Accordingly, what should be the density of bus stops in a given city?}
    \item {\it For on-road deployments of wireless small cells as envisaged in 5G and future networks, what would be the cellular coverage performance of a user in the city?}
\end{itemize}
In the context of wireless communications in particular, AT\&T plans to deploy compact Ericsson street radio small cells integrated with the street lights to accelerate 5G deployments~\cite{press}. Along the same lines, New York City has already deployed such small cells focusing on high-speed connectivity for pedestrian users~\cite{press3}. Such deployments seamlessly integrate \acp{AP} in the urban infrastructure to provide ubiquitous connectivity to the users. In this regard, the network coverage of any point in the network for such on-street deployed small cells is a major point of interest for network operators.

\subsection{Related Work}
In the literature, several line processes have been used to study the structure of streets, e.g., the \ac{PLP}, the Poisson Voronoi tessellation (PVT), and the Poisson lilypond model ~\cite{kendall2017random, kahn2016improper, jeyaraj2020cox, chenavier2016extremes}. Out of all the candidate line processes, the most popular one for modeling streets, especially in wireless network studies, is the \ac{PLP}~\cite{chetlur2020stochastic, jeyaraj2020cox}. The statistical properties of the \ac{PLP} provide significant tractability in developing insights into the engineering problems of interest. For example, in \cite{small2019ghatak}, the authors have employed the \ac{PLP} and \ac{PLCP} for studying on-street deployment of small cell \acp{BS} and derived the performance of a pedestrian user in a multi-tier multi \ac{RAT} cellular network. Similarly, in \cite{chetlur2019coverage, chetlur2020load}, the authors have characterized the performance of vehicular networks based on the \ac{PLP} model for streets and consequently, \ac{PLCP} for emulating vehicular nodes. However, one major drawback of the \ac{PLP} is that it fails to accurately take into account some salient features of urban street networks, e.g., finite street lengths, T-junctions, and varying density of streets across a given city.\footnote{From an infrastructure planning perspective, the density of the streets is measured in terms of the total street length} per unit area. The recent work by Jeyaraj {\it et al.}~\cite{jeyaraj2020cox} proposed a generalized framework for Cox models to study vehicular networks. The authors significantly improved the accuracy of line process models to account for finite street lengths by considering T-junctions, stick processes, and Poisson lilypond models. However, their work does not consider varying line densities from the perspective of a single city. This is an important aspect of applications such as urban transport networks and wireless deployment planning, in which streets near the city center are denser as compared to the streets in the suburbs. Such characteristics of an urban scenario can only be studied using non-homogeneous line processes which is the focus of this paper.

To illustrate this, the street network of Paris is shown in Fig.~\ref{fig:Paris}, where, as we move out of the city center, the street density decreases. More precisely, in~\cite{orangelabs}, researchers have emulated the streets of Lyon with multi-density Poisson-Voronoi tessellations (PVTs) and Poisson line tessellations (PLTs). Details on the mathematical framework to develop these models can be found in~\cite{PhysRevE.83.036106}. As shown in Fig.~\ref{fig:PVT21}, the authors tuned the parameters of the PVT model to emulate the distinct street structures in different parts of the city, e.g., the red-colored zone corresponds to a \ac{PPP} of intensity 107 km$^{-2}$ as compared to the green (\ac{PPP} with intensity 50 km$^{-2}$) or purple. In contrast, in this paper, we develop a novel stochastic line process where for a fixed set of parameter values we can emulate distinct street structures in the city center and the suburbs.

\begin{figure}
    \centering
    \subfloat[]{\includegraphics[height = 0.3\textwidth, width = 0.4\textwidth]{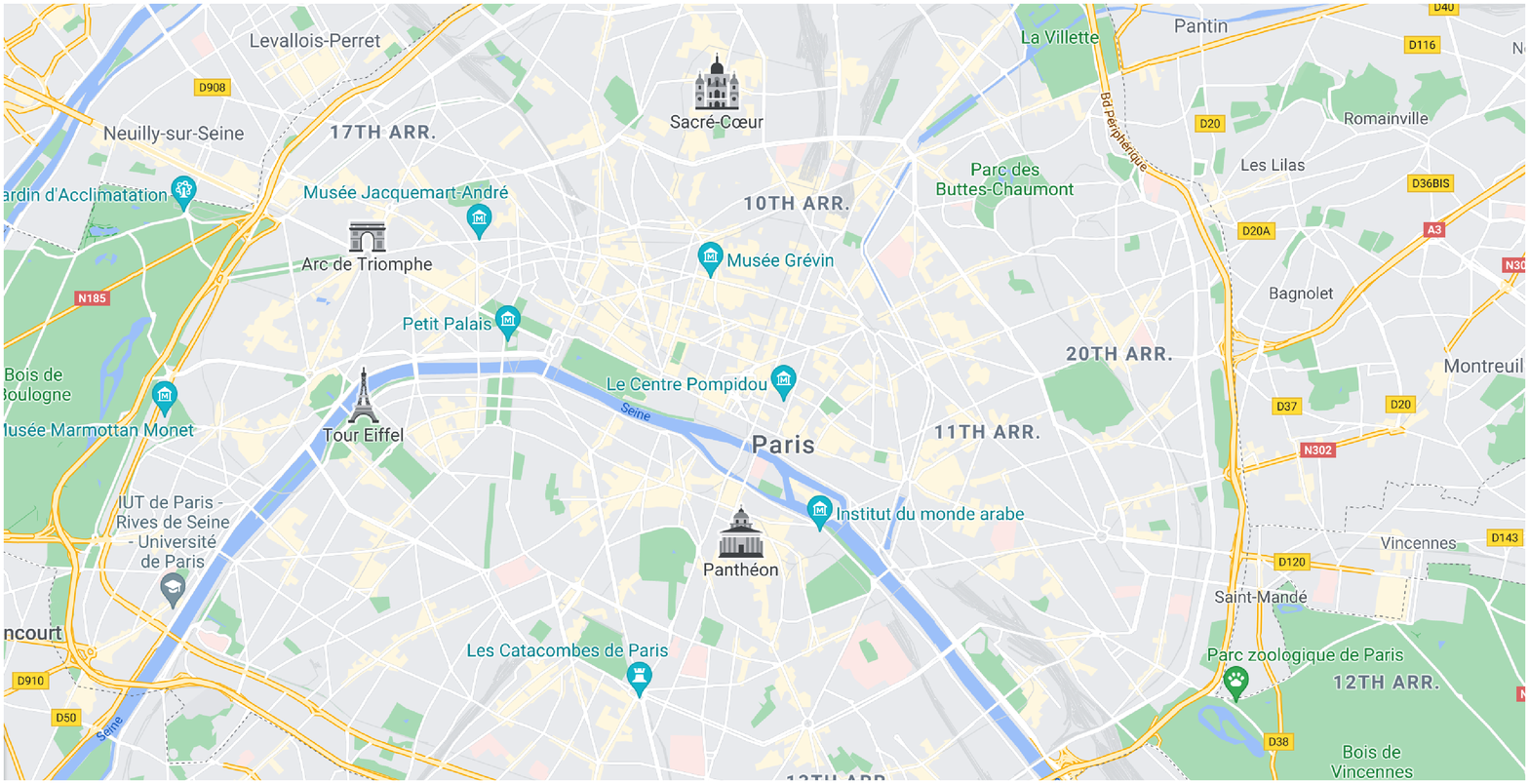}
    \label{fig:Paris}}
    \hfil
    \subfloat[]{\includegraphics[height = 0.3\textwidth, width = 0.4\textwidth]{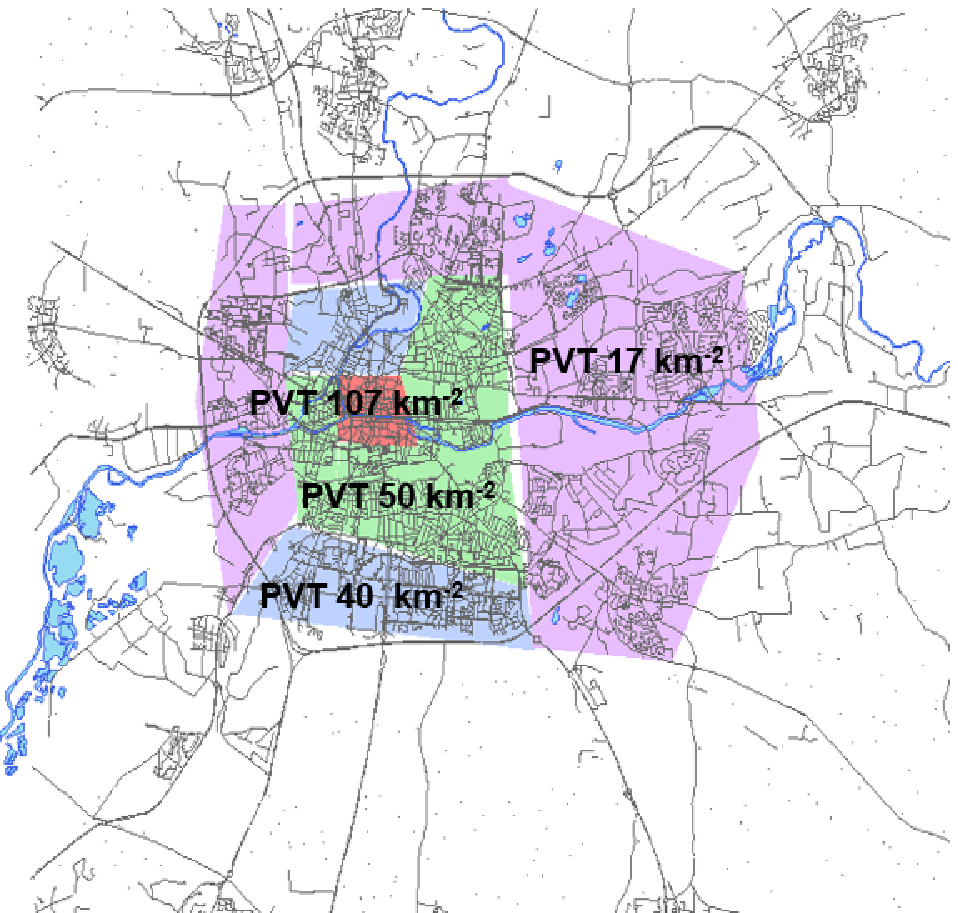}
    \label{fig:PVT21}}
    \caption{(a) Part of Paris city map. Visual inspection reveals the inhomogeneity of the street density (b) PVT models for fitting for streets of Lyon~\cite{orangelabs}. For a PVT, the model parameter is the density of the underlying \ac{PPP}.}
\end{figure}

Furthermore, since the city administrations sanction the construction of streets on a case-by-case basis, it is more likely that a deterministic number of streets characterizes the street network in a city. In other words, for initial infrastructure planning, one may be interested in the questions posed above {\it given} that there is a certain number, say $n_{\rm B}$, of streets in a city than for a given density of streets. This prompts us to study the properties of the street network parameterized by $n_{\rm B}$.

To address these aspects of urban street networks, recently in a short note, we have introduced the \ac{BLP} as a new stochastic line process that consists of a deterministic number of lines generated in a bounded generating set~\cite{ghatak22line}. Let us visualize the difference in the network formed by a \ac{BLP} as compared to a \ac{PLP}, with the help of realizations of these two candidate line processes (please refer to Section~\ref{sec:Cons} for details on the construction of a \ac{BLP}). In Fig.~\ref{fig:blp}, we plot a realization of a BLP where $n_{\rm B} = 10$ streets are generated within a distance of $R = 100$ from the origin. We compare it with a realization of a PLP in Fig.~\ref{fig:plp2} where the generating PPP has an intensity $n_{\rm B}/(2 \pi R) = 0.016$. It is apparent that the BLP captures the varying street and street length densities of a single city by virtue of its non-stationarity while the PLP-based models are stationary and thus restricted to a single fixed density. Consequently, the BLP model is necessary to contrast the performance of downtown users and sub-urban users.


\begin{figure}
\centering
\subfloat[]
{\includegraphics[width=0.4\textwidth]{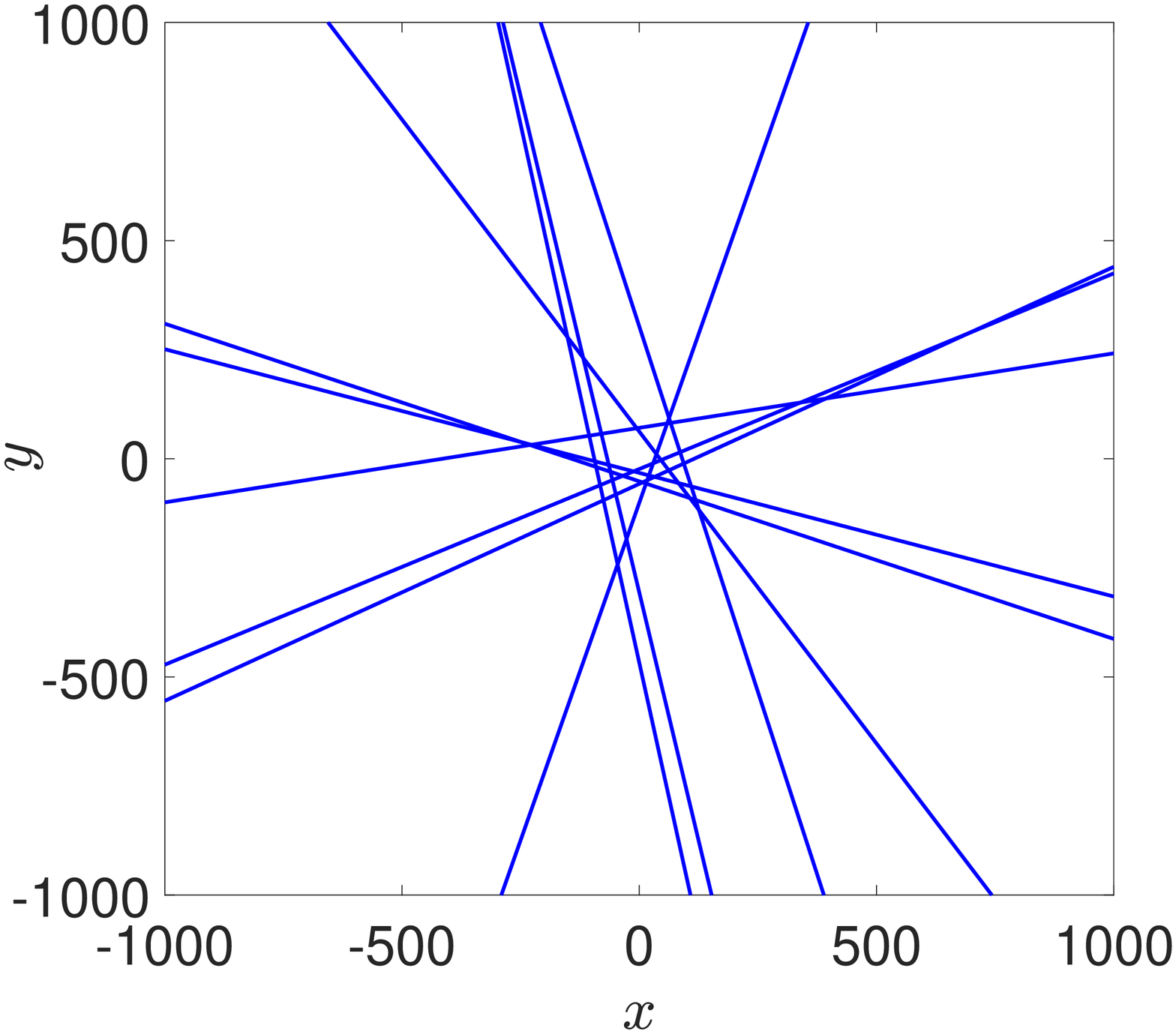}
\label{fig:blp}}
\hfil
\subfloat[]
{\includegraphics[width=0.4\textwidth]{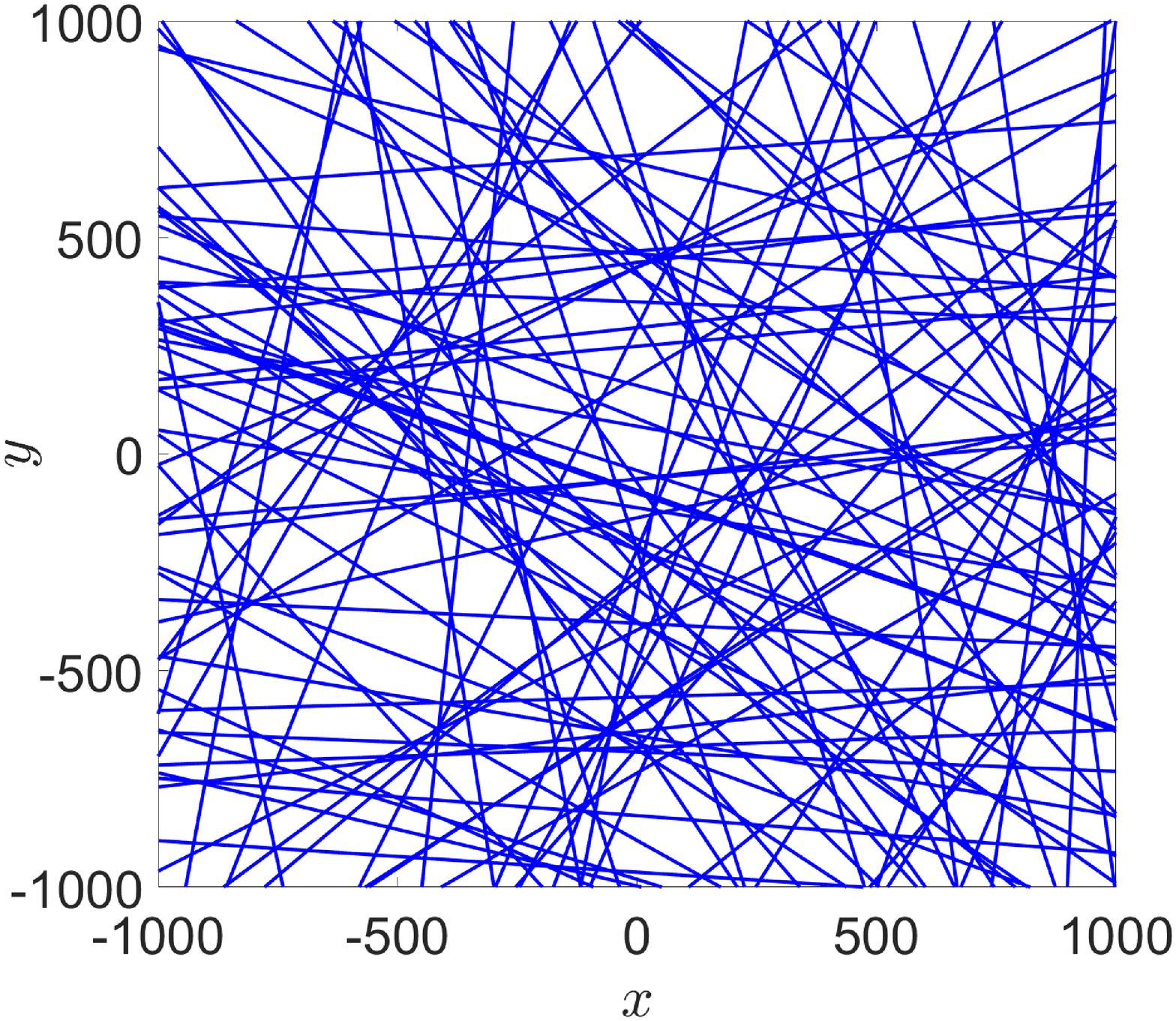}
\label{fig:plp2}}
\caption{(a) A BLP with $n_{\rm B} = 10$ and $R = 100$ (b) A PLP whose underlying PPP has intensity $\frac{n_{\rm B}}{2\pi R}$ with $n_{\rm B} = 10$ and $R = 100$. Note: Here $R$ is the radius of the circle in which BLP lines are generated.}
\label{fig:lines} 
\end{figure}

\subsection{Contributions and Organization}
The main contributions of this paper are summarized as follows:
\begin{itemize}
    \item We characterize the recently introduced \ac{BLP} and \ac{BLCP} models and give a thorough analysis of several radial density characteristics, including the line length radial density and intersection density of the BLP and PLP. We also derive the distance distribution to the nearest intersection from a test point on the \ac{BLP}.
    \item We derive the \ac{PGFL} of BLCP and provide a simpler derivation (as compared to~\cite{ghatak22line}) of the distribution to the nearest \ac{BLCP} point from a test point in the Euclidean plane.
    \item Leveraging the \ac{PGFL}, we characterize the transmission success probability in a wireless network where the {\acp{AP}} form a \ac{BLCP}. Such a model can emulate a variety of network scenarios, e.g., Wi-Fi \acp{AP} in an industrial automation setup, on-street deployment of 5G small cells, etc. In contrast, the commonly used stationary models such as the \ac{PLCP} are restricted to the homogeneous setting where the transceiver density is fixed everywhere.
    \item To obtain fine-grained insight into a wireless network modeled by a \ac{BLCP}, we characterize the meta distribution \cite{haenggi2021meta1}, \cite{haenggi2021meta2} of the \ac{SINR}. In particular, we first derive the moments of the conditional success probability and employ them to study the mean local delay and the density of successful transmission in the wireless network, which provides insight on the latency and coverage of the network. Second, the Gil-Pelaez inversion theorem results in the exact characterization of the meta distribution. 
\end{itemize}
The rest of the paper is organized as follows. The mathematical preliminaries regarding the construction and statistical characterization of the \ac{BLP} are presented in Section~\ref{sec:pre}, along with the derivation of radial and intersection density, and nearest intersection distance distribution. In Section~\ref{sec:BLCP} we define and characterize the \ac{BLCP}, its void probability, and the \ac{PGFL} of the \ac{BLCP}. Section~\ref{sec:app} presents an application of the derived framework in studying the transmission success probability in a wireless network. The meta distribution is derived in Section~\ref{sec:meta}. In Section~\ref{sec:NRD} we present numerical results to discuss the salient features of a wireless network based on a BLCP location model of the \acp{AP}. Finally, the paper concludes in Section~\ref{sec:con}.


\begin{table}[t]
\centering
\small{
\begin{tabular}{|l|l|}
\hline
Notation         & Description \\ \hline \hline
$\mathcal{L}$ & Binomial line process\\
$L_i$ & $i-$th line of a BLP\\
$R$ & Radius of circle in which BLP $\mathcal{L}$ is generated \\
$\mathcal{B}((\theta,r),t)$ & A disk of radius $t$ centered at $(\theta,r)$ \\
$n_{\rm B}$ & Number of lines of the BLP $\mathcal{L}$ \\
$\mathcal{D}_{\rm B}(r_0, t)$ & Domain band corresponding to $\mathcal{B}((0,r_0), t)$\\
$A_{\rm D} (r_0,t)$ & Area of the domain bands corresponding to $\mathcal{B}((0,r_0), t)$\\
$\mathcal{V}_{\rm BLP}(n_{\rm B}, \mathcal{B})$& Void probability of the BLP with $n_{\rm B}$ lines on $\mathcal{B}$\\
$\mathcal{V}_{\rm BLCP}(n_{\rm B}, \mathcal{B})$& Void probability of the BLCP with $n_{\rm B}$ lines on $\mathcal{B}$\\
$\rho_S$ & Density of length of chords/line segments in a bounded Borel set $S$ \\
$\rho_i(w)$ & Density of length of chords/line segments in the $i-$th annulus of equal width $w$ \\
$\mathcal{R}$ & Line length measure \\
$\rho(r)$ & Line length density \\
$\mathcal{N}$ & Average number of intersections within a disk centered at the origin \\
$\rho_{{\times}, i}(w)$ & Density of intersections in the $i-$th annulus of equal width $w$ \\
$\mathcal{R_{\times}}$ & Intersection measure of BLP\\
$\rho_{\times}(r)$ & Intersection density of BLP\\
$\rho_{\rm p}(\lambda_{\rm PPP})$ & Intersection density of PLP\\
$A_{D_{\bf I}} (r_0,t)$ & Area of the domain bands corr. to the nearest intersection from the test point\\
$\Phi_i$ & 1D PPP on $L_i$\\
$\lambda$ & Density of $\Phi_i$\\
$C(\theta, r)$ & Length of a chord generated by a line corresponding to $(\theta, r)$ on $\mathcal{B}((0,r_0), t)$.\\
$\Phi$ & Binomial line Cox process with parameters $n_{\rm B}$ and $R$\\
$\xi(r_0)$ & SINR received at the test point \\
$p_{\rm S}(\gamma)$ & Success probability at threshold $\gamma$\\
$\mathcal{P}_{\rm M}(\gamma,\beta)$ & Meta distribution evaluated at SINR threshold $\gamma$ and reliability threshold $\beta$ \\ \hline
\end{tabular}
}
\caption{Summary of notations used in the paper.}
\label{tab:notations}
\end{table}

\subsection{Notations}
Line processes are denoted by calligraphic letters, e.g., $\mathcal{L}$, while point processes are denoted by $\Phi$. An instance of a point process is denoted by $\phi$. A BLP is presumed to consist of $n_{\rm B}$ lines unless stated otherwise. Capital math-font characters are random variables unless otherwise stated, e.g., $Z$, while an instance of $Z$ is denoted by $z$. Similarly, other scalar quantities are denoted by small letters, e.g., $d$. In the two-dimensional Euclidean plane, $\mathbb{R}^2$, locations are denoted by bold font, e.g., ${\bf x}$. $\mathbb{E}$ and $\mathbb{P}$ represent the expectation and the probability operators, respectively. The other notations used in the paper are summarized in Table~\ref{tab:notations}.


\section{Preliminaries}
\label{sec:pre}
We will restrict our discussion to $\mathbb{R}^2$ due to its relevance to the locations of users and \acp{AP} in wireless networks.
\subsection{Binomial Line Process - Construction}
\label{sec:Cons}
A \ac{BLP} $\mathcal{L}$ is a finite collection of lines in the two-dimensional Euclidean plane. Formally, 
\begin{align}
\mathcal{L} \subset Q \triangleq \bigcup_{r \in \mathbb{R} \cap [-R,R], \theta \in [0,\pi)} \{(x,y) \in \mathbb{R}^2 \colon x\cos \theta + y\sin \theta=r \}.    
\end{align}
Each line of $\mathcal{L}$ corresponds to a point of a \ac{BPP} defined on the finite cylinder $\mathcal{D}:=$ [$0,\pi$) $\times$ $[-R, R]$. We will call $\mathcal{D}$ the generating set or the domain set of $\mathcal{L}$, and a point $(\theta_i, r_i)  \in \mathcal{D}$, corresponding to a line $L_i \in \mathcal{L}$, the generating point of $L_i$. The line segment drawn from the origin to $(\theta_i, r_i)$ forms the normal to the line $L_i$. The pair of coordinates $(\theta_i, r_i)$ of any point in $\mathcal{D}$ has a bijective mapping to a line $L_i$ in $\mathbb{R}^2$. 
\begin{figure}
    \centering
    \includegraphics[width = 0.5\textwidth]{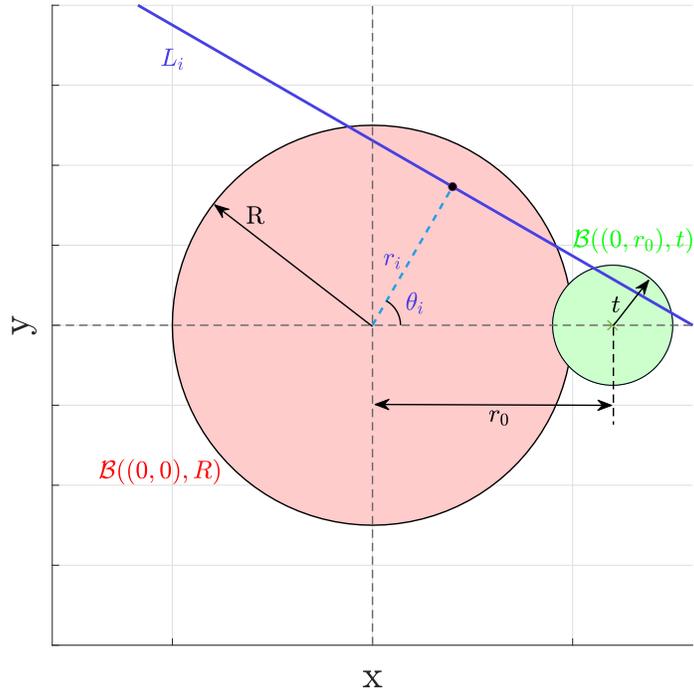}
    \caption{Illustration of the construction of a BLP and intersecting lines on $\mathcal{B}((0,r_0), t)$.}
    \label{fig:Construction}
\end{figure}
This is illustrated in Fig.~\ref{fig:Construction}, where the generating points of the lines are restricted to the disk $\mathcal{B}((0,0), R)$ (shown in red in the figure). It should be noted that the generating points do not form a BPP in $\mathcal{B}((0,0), R)$ but rather in $\mathcal{D}$\footnote[1]{The former would imply uniform location of points in the disk, while the latter correspond to uniform distances of the generating points between 0 and $R$.}. Furthermore, the \ac{BLP} is a non-stationary process, and unlike stationary point processes, the statistics of the \ac{BLP} cannot be characterized from the perspective of a single typical point located, say, at the origin. However, due to the isotropic construction of the \ac{BLP}, the properties of the \ac{BLP} as seen from a point depend only on its distance from the origin and not its orientation. Accordingly, without loss of generality, let us consider a test point located at $(0,r_0)$. First, we study the set of lines that intersect a disk of radius $t$ centered at the test point $(0,r_0)$, denoted by $\mathcal{B}((0,r_0), t)$ and shown in green in Fig.~\ref{fig:Construction}. This will then lead to the distance distribution of the nearest line as discussed in the next subsection.

\subsection{Domain Bands in $\mathcal{D}$ and the Distance Distribution to the Nearest \ac{BLP} Line}
Let us consider the line $L_i = \left\{(x, y) \in \mathbb{R}^2: x \cos(\theta_i) + y \sin(\theta_i) = r_i\right\}$, where $(\theta_i, r_i) \in \mathcal{D}$ and, the boundary of the disk $\mathcal{B}((0,r_0), t)$ is $(x - r_0)^2 + y^2 = t^2$. Solving these two equations simultaneously for $x$ and $y$,  we obtain the abscissa of the intersection points of $L_i$ and $\mathcal{B}((0,r_0), t)$ as:
\begin{align}
    x_i^* &= r_i\cos\theta_i + r_0\sin^2\theta_i \pm \sin\theta_i \sqrt{t^2 - (r_i - r_0\cos\theta_i)^2}.
    \label{eq:abscissa} 
\end{align}

In order to find the subset $\{(\theta_i, r_i)\} \subset \mathcal{D}$ for which the generated lines $\{L_i\}$ intersect $\mathcal{B}((0,r_0), t)$, we find the $(\theta_i, r_i)$ which results in $L_i$ being a tangent to $\mathcal{B}((0,r_0), t)$. For a given $\theta_i = \theta$, the value(s) of $r_i$ for which \eqref{eq:abscissa} has only one solution is obtained by solving $t^2 - (r_i - r_0\cos\theta_i)^2 = 0$, which results in
\begin{align}
    r_i^*(\theta) = r_0 \cos(\theta) \pm t.
    \label{eq:domain}
\end{align}

Let the two solutions above be represented by $r_1(\theta)$ and $r_2(\theta)$ for the positive and the negative signs, respectively. In other words, all the points $(\theta_i , r_i) \in \mathcal{D}$ that fall within the set bounded by the two curves of \eqref{eq:domain} generate lines $L_i$ in $\mathbb{R}^2$ that intersect $\mathcal{B}((0,r_0), t)$. Let us denote this set by $\mathcal{D}_{\rm B}(r_0, t)$.

\begin{figure}
\centering
    \subfloat[]
    {\includegraphics[width=0.4\textwidth]{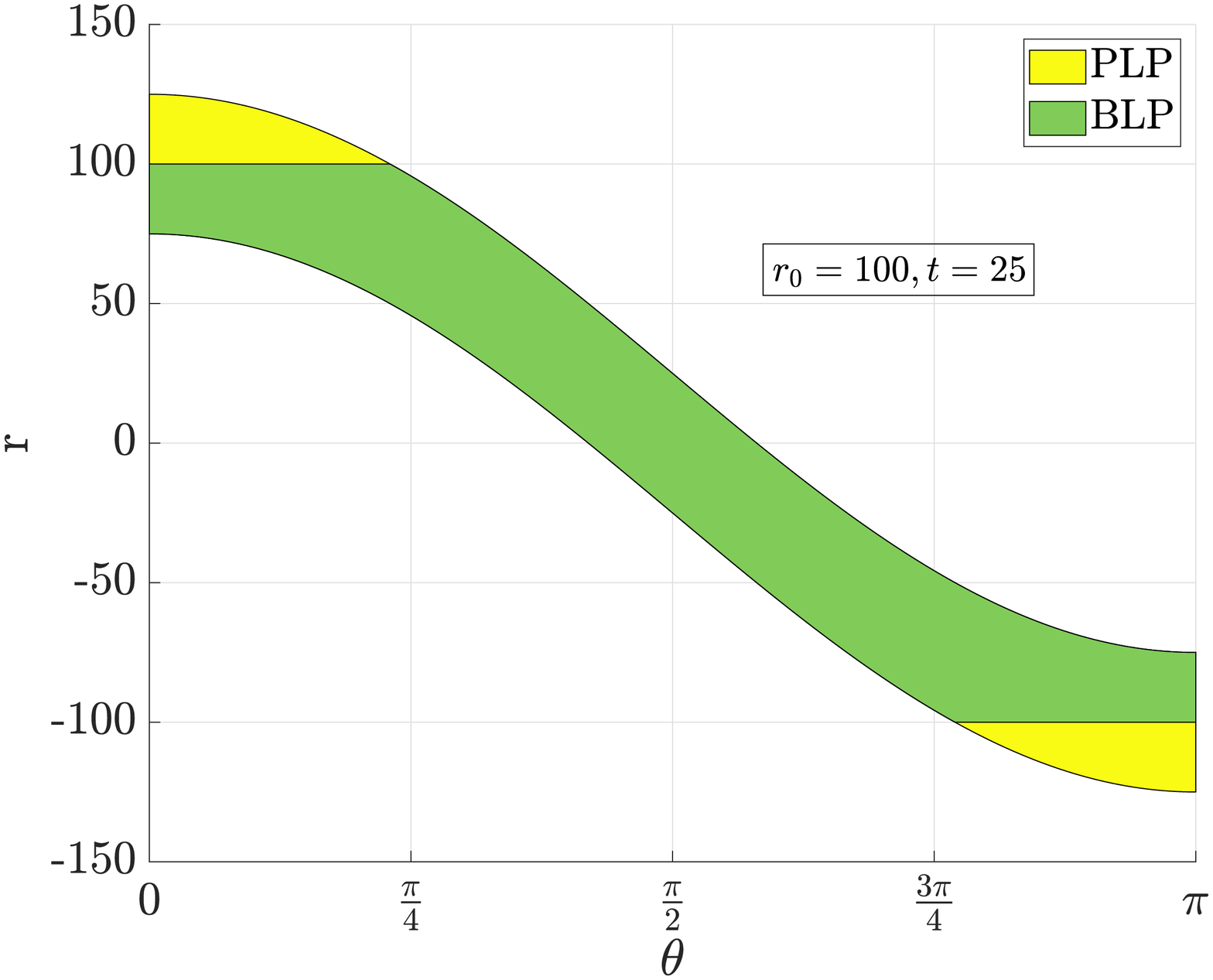}
    \label{fig:Domain0}}
    \hfil
    \subfloat[]
    {\includegraphics[width=0.4\textwidth]{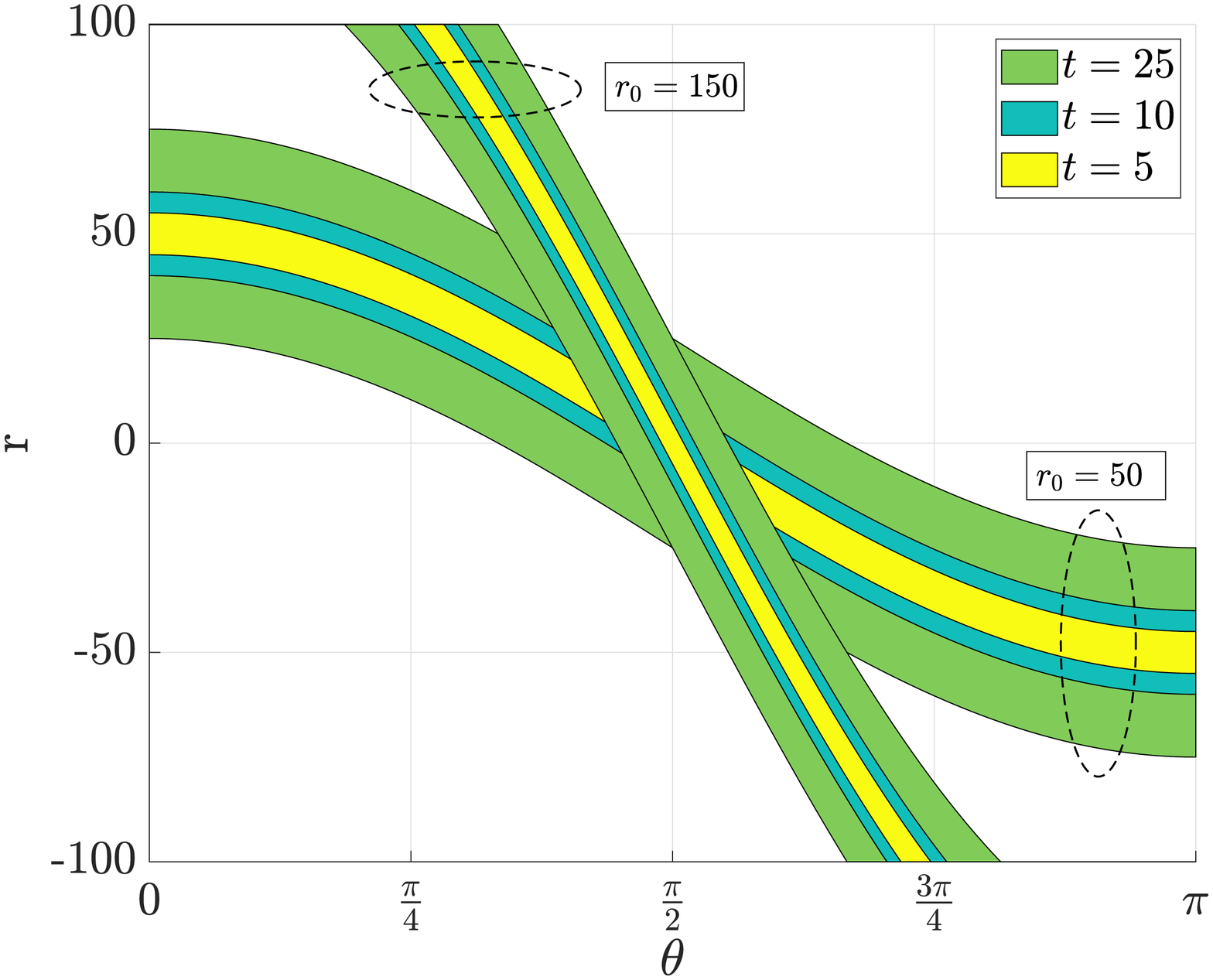}
    \label{fig:Domain1}}
\caption{(a) Illustration of the domain bands for a PLP and a BLP with $R = 100$. (b) Domain bands for different values of $t$ and $r_0$. Here $R = 100$. Note that when $r_0 + t \leq R$, the domain bands for PLP and BLP coincide.}
\label{fig:Domain} 
\end{figure}

In Fig.~\ref{fig:Domain0}, we plot $\mathcal{D}_{\rm B}(r_0, t)$, and due to their structure, $\mathcal{D}_{\rm B}(r_0, t)$ is referred to as {\it domain bands}. It is interesting to note that the domain band for the BLP is a clipped version of the PLP due to the restriction of the points to lie within $\mathcal{B}((0,0), R)$.
In Fig.~\ref{fig:Domain1}, we plot the domain bands for the BLP for different values of $r_0$ and $t$. Naturally, when $|r_0| + t < R$, the domain bands for \ac{BLP} and \ac{PLP} coincide. Additionally, the width of the band decreases as $t$ decreases or $r_0$ increases.

\begin{remark}
The area of the domain band for a PLP is $2 \pi t$ and is independent of $r_0$.
\end{remark}
The following classical result of the PLP follows.
\begin{corollary}{\bf \cite{6260478}}
The number of lines of a PLP intersecting $\mathcal{B}((0,r_0),t)$ is Poisson distributed with parameter $2\pi \lambda t$. Accordingly, the probability that no lines intersect $\mathcal{B}((0,r_0),t)$ is given by $\exp(- 2 \pi \lambda t)$.
\end{corollary}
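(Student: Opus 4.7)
The plan is to reduce the counting of intersecting lines to a point-counting problem on the generating cylinder, where we can invoke the standard Poisson property directly. For the \ac{PLP}, lines are in bijection with points of a \ac{PPP} of intensity $\lambda$ on $[0,\pi)\times\mathbb{R}$ (the analog of $\mathcal{D}$ with $R\to\infty$). By the construction recalled before the statement, a line $L_i$ with generating point $(\theta_i,r_i)$ intersects $\mathcal{B}((0,r_0),t)$ if and only if $(\theta_i,r_i)$ lies in the domain band $\mathcal{D}_{\rm B}(r_0,t)$ bounded by the two curves $r=r_0\cos\theta\pm t$ derived in \eqref{eq:domain}. So the number of intersecting lines equals the number of PPP points falling in $\mathcal{D}_{\rm B}(r_0,t)$.

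First, I would compute the area of $\mathcal{D}_{\rm B}(r_0,t)$ for the PLP. Since the $r$-width of the band at every $\theta$ is the constant $r_1(\theta)-r_2(\theta)=2t$, the area is $\int_0^\pi 2t\,\mathrm{d}\theta = 2\pi t$, independent of $r_0$ (this is exactly the remark preceding the corollary, which I would cite rather than rederive). Next, by the defining property of a homogeneous \ac{PPP}, the number of generating points falling in any Borel set of area $A$ is Poisson distributed with mean $\lambda A$. Applying this with $A=2\pi t$ immediately gives that the number of intersecting lines is Poisson with parameter $2\pi\lambda t$.

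Finally, the void probability follows by evaluating the Poisson probability mass function at $0$, yielding $\exp(-2\pi\lambda t)$. There is no real obstacle here: the only subtlety is to verify that the bijection between lines and generating points is measure-preserving in the sense needed, i.e., that ``$L_i$ intersects $\mathcal{B}((0,r_0),t)$'' coincides $\mathbb P$-a.s.\ with ``$(\theta_i,r_i)\in\mathcal{D}_{\rm B}(r_0,t)$'', which is guaranteed by the tangency analysis carried out around \eqref{eq:abscissa}--\eqref{eq:domain}. The PLP case is precisely the limit $R\to\infty$ of the BLP construction, so no new machinery beyond what has already been set up is required.
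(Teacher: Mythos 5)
Your proof is correct and follows exactly the route the paper intends: the corollary is stated as an immediate consequence of the preceding remark that the PLP domain band has area $2\pi t$, so counting intersecting lines reduces to counting PPP points of intensity $\lambda$ in a set of that area, giving the Poisson($2\pi\lambda t$) law and the void probability $\exp(-2\pi\lambda t)$. The paper itself only cites the classical reference rather than writing out the argument, but your derivation is the standard one and matches its setup.
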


However, in the case of a BLP, the area of the domain band needs to be computed by deriving the angles at which the domain bands are clipped due to its construction (contrary to the PLP where the domain bands are not clipped for any $\theta$)~\cite{ghatak22line}.

\begin{theorem}{\bf \cite{ghatak22line}} The area of the domain band $\mathcal{D}_{\rm B}(r_0, t)$ for a BLP is
\begin{align}
    &A_{\rm D}(r_0,t) = 
    \begin{cases}
        2 \pi t ; \quad  \hspace{7.6cm}  \text{for } r_0 + t \leq R \\
        2 \pi t - 2r_0 \sqrt{1 - \left(\frac{R - t}{r_0}\right)^{2}} + 2\left(R - t\right)  \cos^{-1} \left(\frac{R - t}{r_0}\right); \hspace{1cm} \text{for } r_0 + t > R \text{ and } r_0 - t \leq R \\
        2 \pi t - 2r_0 \left(\sqrt{1 - \left(\frac{R - t}{r_0}\right)^{2}} - \sqrt{1 - \left(\frac{R + t}{r_0}\right)^{2}} \right)  + 2\left(R - t\right) \cos^{-1} \left(\frac{R - t}{r_0}\right) \\
        \hspace{4.3cm} - 2\left(R + t\right)  \cos^{-1} \left(\frac{R + t}{r_0}\right); \hspace{0.5cm} \text{for } r_0 - t \geq R.
    \end{cases}
    \label{eq:areaofdomain}
\end{align}
\end{theorem}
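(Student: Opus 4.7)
The plan is to express $A_{\rm D}(r_0,t)$ as the integral of the $r$-length of the domain band at each fixed $\theta \in [0,\pi)$. By \eqref{eq:domain}, the band is bounded above by $r_1(\theta) = r_0\cos\theta + t$ and below by $r_2(\theta) = r_0\cos\theta - t$, and for the BLP it must additionally lie in the strip $r \in [-R,R]$. The starting point is therefore
\begin{equation*}
A_{\rm D}(r_0,t) = \int_0^\pi \max\bigl\{0,\; \min(R, r_1(\theta)) - \max(-R, r_2(\theta))\bigr\}\, d\theta,
\end{equation*}
so the problem reduces to tracking where the two sinusoids $r_1, r_2$ cross the horizontal lines $r = \pm R$.

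Setting $r_1(\theta) = R$ gives $\cos\theta = (R-t)/r_0$, so whenever that argument lies in $[-1,1]$ the crossing occurs at $\theta = \alpha := \cos^{-1}((R-t)/r_0)$; analogously $r_2(\theta) = R$ at $\theta = \beta := \cos^{-1}((R+t)/r_0)$ when defined. The crossings of $r_1$ with $-R$ and of $r_2$ with $-R$ are obtained from these via the involution $\theta \mapsto \pi - \theta$, which sends $r_1$ to $-r_2$ and $r_2$ to $-r_1$. This reflection symmetry is what will make the clipping on the top and the bottom of the strip contribute equally.

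Three regimes then arise. (i) If $r_0 + t \leq R$, neither sinusoid leaves the strip for any $\theta$, the integrand equals the constant $2t$, and $A_{\rm D} = 2\pi t$ drops out immediately. (ii) If $r_0 + t > R$ but $r_0 - t \leq R$, only $\alpha$ is defined: the upper curve is clipped down to $R$ on $[0,\alpha)$ and, by the $\theta \mapsto \pi - \theta$ symmetry, the lower curve is clipped up to $-R$ on $(\pi-\alpha,\pi)$, with no simultaneous two-sided clipping in between. Subtracting these two equal clipped areas from the unrestricted value $2\pi t$ and using $\int\cos\theta\,d\theta = \sin\theta$ together with $\sin\alpha = \sqrt{1-((R-t)/r_0)^2}$ produces the Case~2 formula. (iii) If $r_0 - t \geq R$, both $\alpha$ and $\beta$ are defined; the band is empty on $[0,\beta)\cup(\pi-\beta,\pi)$, and the remaining interval partitions into $[\beta,\alpha]$, $[\alpha,\pi-\alpha]$, $[\pi-\alpha,\pi-\beta]$, corresponding respectively to clipping only at the top, no clipping, and clipping only at the bottom. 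Integrating piecewise and again invoking the reflection symmetry yields the Case~3 expression.

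The main difficulty is bookkeeping rather than anything conceptual: one has to verify that the three cases are exhaustive and mutually exclusive, that the arguments $(R\pm t)/r_0$ of $\cos^{-1}$ lie in $[-1,1]$ in exactly the regimes in which they appear, and that no further intersections of $r_1, r_2$ with $\pm R$ occur in $[0,\pi)$ beyond $\alpha,\beta,\pi-\alpha,\pi-\beta$. Once the case analysis is pinned down, every remaining integral is elementary in $\sin\theta$ and $\theta$.
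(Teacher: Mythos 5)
Your proposal is correct and takes essentially the same route that the paper attributes to~\cite{ghatak22line}: write $A_{\rm D}(r_0,t)$ as the integral over $\theta$ of the band width clipped to $[-R,R]$, and locate the angles $\alpha=\cos^{-1}\left(\frac{R-t}{r_0}\right)$ and $\beta=\cos^{-1}\left(\frac{R+t}{r_0}\right)$ at which $r_0\cos\theta\pm t$ crosses $\pm R$. One small caveat: your claim of ``no simultaneous two-sided clipping'' in case (ii), and the analogous three-interval partition in case (iii), fail when $t>R$ (then $\alpha>\pi/2$ and the clipping intervals $[0,\alpha)$ and $(\pi-\alpha,\pi)$ overlap), but since you subtract the top and bottom excess areas additively the overlap is accounted for exactly and the stated formulas are unaffected.
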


\begin{corollary}{\bf \cite{ghatak22line}} {[Void Probability]}
  The probability that no line of the BLP intersects with $\mathcal{B}((0,r_0), t)$ is
\begin{align*}
    \mathcal{V}_{\rm BLP}(n_{\rm B}, \mathcal{B}((0,r_0), t)) = \left(\frac{2\pi R - A_{\rm D}(r_0,t)}{2\pi R}\right)^{n_{\rm B}},
\end{align*}
where, $n_{\rm B}$ is the number of lines of the BLP $\mathcal{L}$ and $A_{\rm D}$ is area of the domain bands.
\end{corollary}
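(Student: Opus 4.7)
The plan is to reduce the event ``no line of $\mathcal{L}$ intersects $\mathcal{B}((0,r_0),t)$'' to an event about the generating points in $\mathcal{D}$, and then use the fact that those generating points form a BPP of $n_{\rm B}$ i.i.d.\ uniform points on $\mathcal{D}$. By construction in Section~\ref{sec:Cons}, there is a bijection between a line $L_i \in \mathcal{L}$ and its generating point $(\theta_i,r_i) \in \mathcal{D} = [0,\pi)\times[-R,R]$. By the definition of the domain band given just before the theorem, $L_i$ intersects $\mathcal{B}((0,r_0),t)$ if and only if $(\theta_i,r_i) \in \mathcal{D}_{\rm B}(r_0,t)$. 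Hence the event of interest is equivalent to ``none of the $n_{\rm B}$ generating points lies in $\mathcal{D}_{\rm B}(r_0,t)$.''

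Next, I would invoke the uniformity of a single generating point. Since $(\theta_i,r_i)$ is uniformly distributed on $\mathcal{D}$, whose Lebesgue measure is $\pi\cdot 2R = 2\pi R$, the probability that it avoids $\mathcal{D}_{\rm B}(r_0,t)$ is
\begin{equation*}
\mathbb{P}\bigl((\theta_i,r_i) \notin \mathcal{D}_{\rm B}(r_0,t)\bigr) \;=\; \frac{2\pi R - A_{\rm D}(r_0,t)}{2\pi R},
\end{equation*}
where $A_{\rm D}(r_0,t)$ is the area supplied by the preceding theorem. Because a BPP consists of $n_{\rm B}$ independent points, the void event for the full process factorizes, and raising the single-point avoidance probability to the power $n_{\rm B}$ yields the stated expression for $\mathcal{V}_{\rm BLP}$.

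The only subtlety worth flagging is the domain-measure bookkeeping: one must be careful to use the area $2\pi R$ of $\mathcal{D}$ (i.e.\ the cylinder in $(\theta,r)$ coordinates) rather than, say, the area of $\mathcal{B}((0,0),R)$, since the generating points are uniform in the parameter space, not in the disk itself (as emphasized in the footnote of Section~\ref{sec:Cons}). Once this is set up correctly, no further computation is needed, because all the work of characterizing $A_{\rm D}(r_0,t)$ for the three geometric regimes ($r_0+t\leq R$, $r_0-t<R<r_0+t$, $r_0-t\geq R$) has already been absorbed into the preceding theorem. Thus the proof is essentially a two-line combinatorial step on top of the previously proved area formula.
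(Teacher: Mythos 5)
Your proof is correct and matches the reasoning the paper relies on (the corollary is quoted from \cite{ghatak22line} without a written-out proof, but the independence-of-generating-points factorization you use is exactly the argument implicit in the construction and reused in the paper's BLCP void-probability proof). The key observations --- the bijection between lines and points of $\mathcal{D}$, the equivalence ``$L_i$ intersects $\mathcal{B}((0,r_0),t)$ iff $(\theta_i,r_i)\in\mathcal{D}_{\rm B}(r_0,t)$,'' and the use of the measure $2\pi R$ of the parameter cylinder rather than the disk --- are all handled correctly.
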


\begin{corollary}{\bf \cite{ghatak22line}}
The CDF of the distance to the nearest line of the BLP from a test point at $(0,r_0)$ is,
\begin{align*}
    F_d(t) &= 1 - \mathcal{V}_{\rm BLP}(n_{\rm B}, \mathcal{B}((0,r_0), t)).
\end{align*} 
\end{corollary}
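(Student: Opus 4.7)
The plan is to recognize that this CDF follows almost immediately from the void probability established in the previous corollary, by translating the event "nearest line is within distance $t$" into a statement about intersections of BLP lines with the disk $\mathcal{B}((0,r_0),t)$. The argument is essentially a one-step complementation, so the proposal is short.

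First, I would let $d$ denote the (random) distance from the test point at $(0,r_0)$ to the closest line of $\mathcal{L}$, so that
\begin{equation}
d = \inf_{L_i \in \mathcal{L}} \operatorname{dist}\bigl((0,r_0), L_i\bigr).
\end{equation}
The key observation is the event-level equivalence
\begin{equation}
\{d > t\} = \{\mathcal{L} \cap \mathcal{B}((0,r_0),t) = \varnothing\},
\end{equation}
which holds because a line $L_i$ lies at distance strictly greater than $t$ from $(0,r_0)$ if and only if $L_i$ does not touch the closed disk of radius $t$ around that point. Indeed, the domain-band construction in Section~\ref{sec:pre} shows that a line with parameters $(\theta_i,r_i)$ intersects $\mathcal{B}((0,r_0),t)$ precisely when $(\theta_i,r_i)\in\mathcal{D}_{\rm B}(r_0,t)$, i.e., when the perpendicular distance from $(0,r_0)$ to $L_i$ is at most $t$.

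Next, I would take probabilities on both sides and invoke the void probability corollary, which states
\begin{equation}
\mathbb{P}\bigl(\mathcal{L} \cap \mathcal{B}((0,r_0),t) = \varnothing\bigr) = \mathcal{V}_{\rm BLP}\bigl(n_{\rm B}, \mathcal{B}((0,r_0),t)\bigr).
\end{equation}
Complementing yields
\begin{equation}
F_d(t) = \mathbb{P}(d \le t) = 1 - \mathbb{P}(d > t) = 1 - \mathcal{V}_{\rm BLP}\bigl(n_{\rm B}, \mathcal{B}((0,r_0),t)\bigr),
\end{equation}
as claimed. I would finally add a small remark that since $\mathcal{V}_{\rm BLP}$ is continuous and strictly decreasing in $t$ (because $A_{\rm D}(r_0,t)$ is continuous and strictly increasing in $t$ on the relevant range), $F_d$ is a bona fide CDF with $F_d(0)=0$ and $F_d(t)\to 1$ as $t$ grows large enough that $\mathcal{D}_{\rm B}(r_0,t)$ covers $\mathcal{D}$.

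There is no substantive obstacle; the only thing to be a little careful about is the measure-zero ambiguity between $\{d \le t\}$ and $\{d < t\}$, which is immaterial because $d$ has a continuous distribution under the BPP construction of the generating points. Hence the corollary reduces to a direct complementation of the previous corollary's void probability.
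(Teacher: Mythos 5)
Your proof is correct and is exactly the argument the paper relies on (the corollary is stated without proof as an immediate complementation of the void probability, citing \cite{ghatak22line}): the event $\{d>t\}$ coincides with no line meeting $\mathcal{B}((0,r_0),t)$, whose probability is $\mathcal{V}_{\rm BLP}(n_{\rm B},\mathcal{B}((0,r_0),t))$. No issues; your added remarks on continuity and the $t\to\infty$ limit are sound but not needed.
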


\subsection{Line Length Density and Measure}
Recall that one of the objectives of studying the BLP is to emulate different densities of streets in the city center and the suburbs. We characterize it using the {\it line length density} and \textit{line length measure}, defined as follows
\begin{definition}
    The line length measure is
    \begin{align*}
        \mathcal{R}(S) = n_{\rm B}\, \mathbb{E}\left(|L\cap S|_1\right), \quad S \subset R^2,
    \end{align*}
     where $|\cdot|_1$ is the Lebesgue measure in one dimension and $L$ is a line of the BLP. The corresponding radial density is
     \begin{align*}
         \rho(r)=\lim_{u\to 0} \frac{\mathcal{R} \big(\mathcal{B}((0,0),r+u) \setminus \mathcal{B}((0,0),r)\big)}{\pi \left(2u+u^2\right)}.
     \end{align*}
\end{definition}
The line length measure follows by integrating $\rho(r)$, i.e.,
\begin{align*}
    \mathcal{R} (S) = \int_{S} \rho(|{\bf x}|)\; \mathrm{d}{\bf x}, \quad S \subset \mathbb{R}^2.
\end{align*}

In order to study the {line length density}, we first determine the expected total length of chords in a disk, i.e., the line length measure.
\begin{theorem}
For a BLP generated by $n_{\rm B}$ lines within a disk of radius $R$,
\begin{align*}
    \rho (r) {=} \begin{cases} 
    \frac{n_{\rm B}}{2R},  &\text{if } r \leq R    \\
    \frac{n_{\rm B}}{\pi R} \arcsin{\left(\frac{R}{r}\right)}& \text{if }r > R.
    \end{cases}
\end{align*}
\begin{proof}
Let $S = \mathcal{B}\left((0,r_0), t\right)$. We have
\begin{equation}
    \mathcal{R}(S) = \mathbb{E} \left[K \bar{L}_1\right] = \bar{L}_1 \mathbb{E} \left[K \right] \overset{(a)}{=} \bar{L}_1 n_{\rm B} \left(\frac{A_{\rm D}(r_0,t)}{2\pi R}\right),
    \label{eq:llm}
\end{equation}
where $K$ is the number of lines intersecting disk $\mathcal{B}\left((0,r_0), t\right)$ and $\bar{L}_1$ is the expected length of the chord formed by a single line in the disk $S$. Step (a) follows from the expectation of the binomial distribution. For $\mathcal{B}\left((0,r_0), t\right)$, $\bar{L}_1$ is evaluated as 
\begin{align*}
    \bar{L}_1 = \frac{1}{A_{\rm D}(r_0,t)} \iint\limits_{A_{\rm D}} 2 \sqrt{\left(t^2 - (r_0 \cos(\theta) - r)^2)\right)}\, {\rm d}r {\rm d}\theta,
\end{align*}
where $A_{\rm D}(r_0,t)$ is obtained from \eqref{eq:areaofdomain}. 

\begin{figure}[t]
\centering
\subfloat[]
{\includegraphics[width=0.4\textwidth]{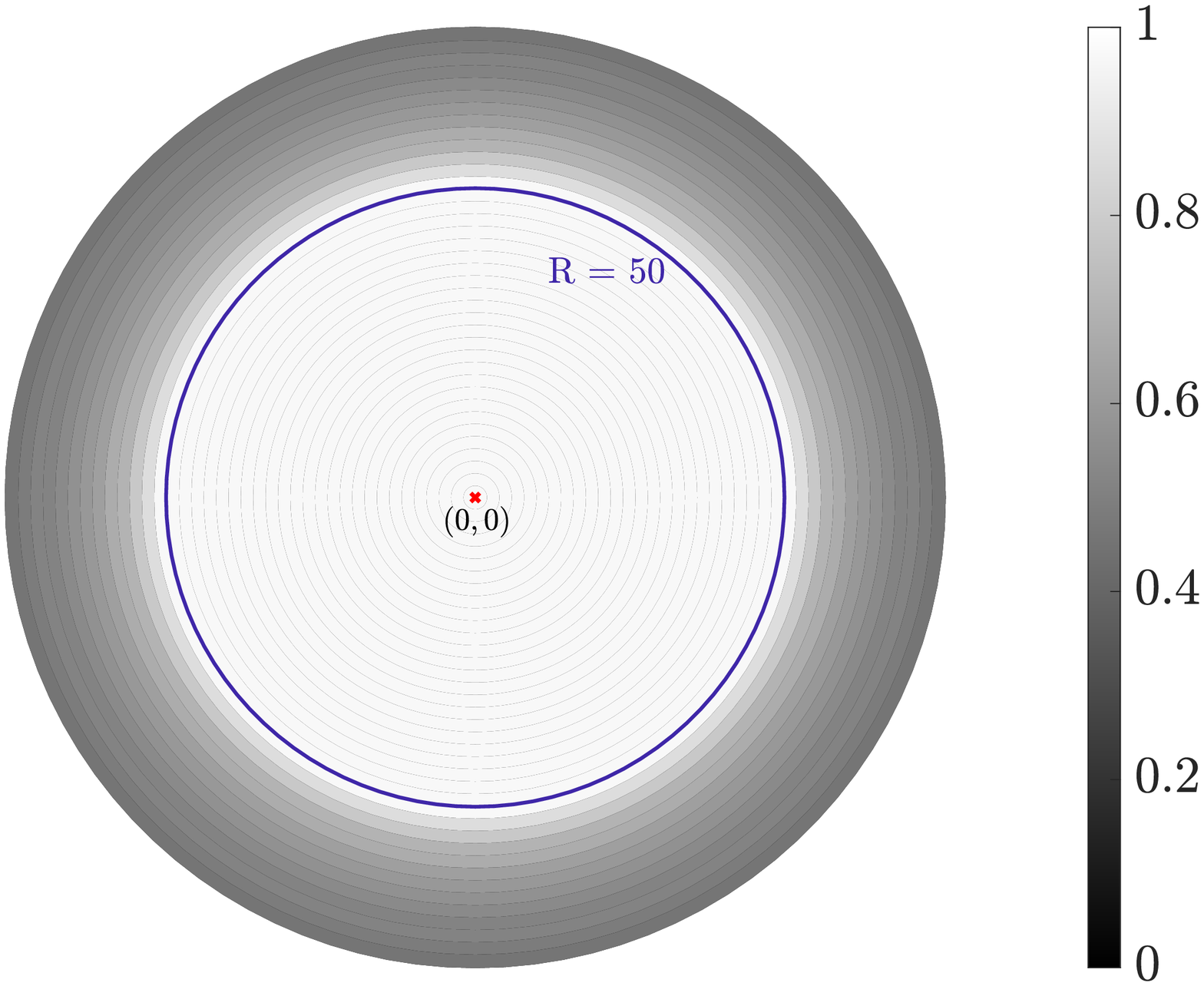}
\label{fig:annulis}}
\hfil
\subfloat[]
{\includegraphics[width=0.4\textwidth]{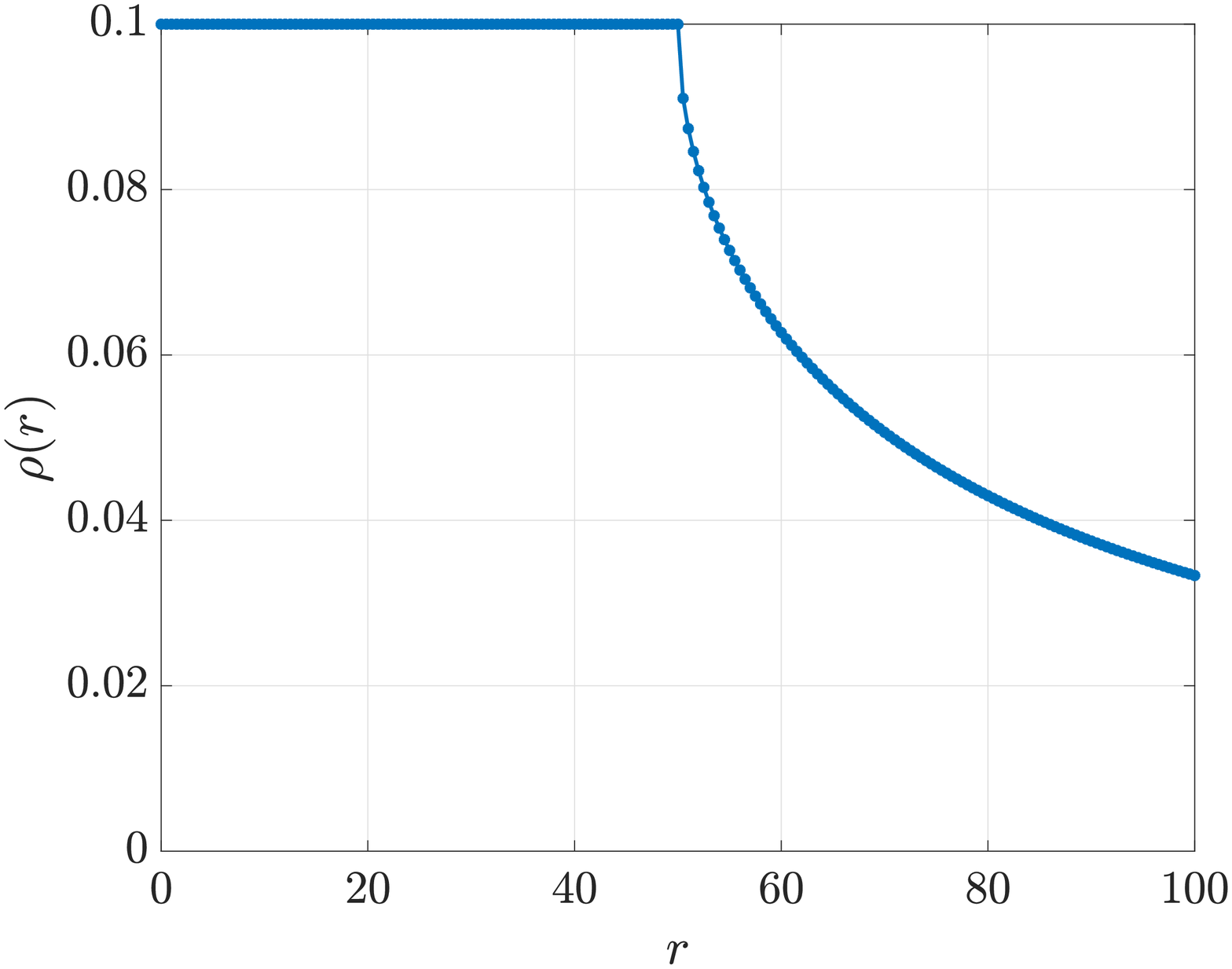}
\label{fig:lld}}
\caption{(a) Ratio of the line length measure to the area in concentric annuli of equal width $w = 2$. Here, $R=50$. (b) Line length density $\rho (r)$ for $R=50$ and $n_{\rm B} = 10$.}
\label{fig:lld2} 
\end{figure}

Next, let us consider concentric circles centered at the origin i.e. $r_0 = 0$ having radii $l = \{w, 2w, 3w, \ldots\}$, where $R$ is an integer multiple of $w$. The $i-${th} annulus is defined as the region between the circles of radii $iw$ and $(i+1)w$. Thus, the annuli generated due to these concentric circles have equal width $w$. Using (\ref{eq:llm}) we can find the ratio of the average length of line segments to the area, in the $i-$th annulus of width $w$ denoted by $\rho_{i}(w)$ as
\begin{align}
    &\rho_{i} (w) = 
    \begin{cases}
        \frac{n_{\rm B}}{2R} ; \quad  \hspace{10.2cm}  \text{for } (i+1)w \leq R, \\
        \frac{n_{\rm B}}{\pi w^2 (2i+1)} \Bigg(\left(\sqrt{(i+1)^2w^2 - R^2} + \arcsin\left({\frac{R}{(i+1)w}}\right)\times \frac{(i+1)^2w^2}{R}\right) \\
        \hspace{4.3cm} -\left(\sqrt{i^2w^2 - R^2} + \arcsin\left({\frac{R}{iw}}\right)\times \frac{i^2w^2}{R}\right) \Bigg); \hspace{0.5cm} \text{for } (i+1)w>R.
    \end{cases}
    \label{eq:annuluslength}
\end{align}
In Fig.~\ref{fig:annulis} we show a grey-scale plot for $\rho_i(2)$ for $R = 50$ and observe that the line length measure in the annuli decreases as $r_0$ increases. Leveraging this, we can characterize the line length density of the BLP as a limiting function of the density in annuli. Precisely, the statement of the theorem is obtained by substituting $iw = r$ and taking the limit $w \to 0$ in \eqref{eq:annuluslength}.
\end{proof}
\end{theorem}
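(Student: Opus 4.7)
The plan is to exploit the isotropy of the BLP about the origin so that the line length measure need only be computed on origin-centered disks, where the chord length takes a particularly simple form. From the radial-density definition one has $\rho(r) = \frac{1}{2\pi r}\frac{\mathrm d}{\mathrm d r}\mathcal{R}(\mathcal{B}((0,0),r))$, so the task reduces to finding $\mathcal{R}(\mathcal{B}((0,0),t))$ in closed form and differentiating, which avoids the full angle-dependent domain-band analysis needed for off-center disks.

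First I would observe that a line with generating parameters $(\theta, r_i) \in \mathcal{D}$ intersects $\mathcal{B}((0,0),t)$ iff $|r_i| \leq t$, and then produces a chord of length $2\sqrt{t^2-r_i^2}$ that does not depend on $\theta$. Since the $n_{\rm B}$ generating points are i.i.d.\ uniform on $\mathcal{D} = [0,\pi)\times[-R,R]$, the $\theta$-integration is trivial and
\begin{align*}
\mathcal{R}(\mathcal{B}((0,0),t)) \;=\; \frac{n_{\rm B}}{2R}\int_{-\min\{t,R\}}^{\min\{t,R\}} 2\sqrt{t^2-r^2}\,\mathrm{d}r.
\end{align*}
Splitting into the two regimes and using the standard antiderivative $\int 2\sqrt{t^2-r^2}\,\mathrm{d}r = r\sqrt{t^2-r^2} + t^2\arcsin(r/t)$ evaluates this to $\mathcal{R}(\mathcal{B}((0,0),t)) = n_{\rm B}\pi t^2/(2R)$ for $t \leq R$ and $\mathcal{R}(\mathcal{B}((0,0),t)) = n_{\rm B}\bigl[\sqrt{t^2-R^2} + (t^2/R)\arcsin(R/t)\bigr]$ for $t > R$.

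Finally, differentiating with respect to $t$ and dividing by $2\pi t$ produces the two claimed expressions. For $t \leq R$ this is immediate. For $t > R$ the crucial step is that the derivative of $(t^2/R)\arcsin(R/t)$ contributes a term $-t/\sqrt{t^2-R^2}$ that exactly cancels $\tfrac{\mathrm d}{\mathrm d t}\sqrt{t^2-R^2} = t/\sqrt{t^2-R^2}$, so only $(2t/R)\arcsin(R/t)$ survives and the claimed $\rho(r) = n_{\rm B}\arcsin(R/r)/(\pi R)$ drops out. The main obstacle I anticipate is keeping this cancellation bookkeeping straight, since a miscalculation of $\tfrac{\mathrm d}{\mathrm d t}\arcsin(R/t)$ by a sign or a factor would destroy the simplification. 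As a sanity check one can verify continuity at $r = R$, where $\arcsin(1) = \pi/2$ yields the matching value $n_{\rm B}/(2R)$ from both pieces of the piecewise formula.
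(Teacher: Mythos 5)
Your proposal is correct and follows essentially the same route as the paper: both compute the expected total chord length of the BLP in origin-centered regions (using that a line with generating point $(\theta,r_i)$ meets $\mathcal{B}((0,0),t)$ iff $|r_i|\le\min\{t,R\}$ with chord $2\sqrt{t^2-r_i^2}$) and then pass to the radial density, your differentiation of the disk measure being the continuous version of the paper's annulus-ratio limit as $w\to 0$. Your execution is somewhat cleaner in that it bypasses the general-$r_0$ domain-band area $A_{\rm D}$ and the expected single-chord length $\bar L_1$, but the underlying argument and the key integral are the same.
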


The density $\rho(r)$ remains constant at $\frac{n_{\rm B}}{2R}$ for $r \leq R$ and then decreases as $\mathcal{O}(1/r)$ as $r\to\infty$. 

\subsection{Intersection Density}
In this section, we study the point process formed by the intersections of the lines of the \ac{BLP}. Accordingly, we introduce and characterize the intersection measure and the intersection density of the BLP.
\begin{theorem}
\label{theo:intersection}
The radial intersection density at a distance $r$ from the origin for a BLP generated by $n_{\rm B}$ lines within a disk of radius $R$ is
\begin{align*}
    \rho_{\times} (r) {=} \begin{cases} 
    \frac{n_{\rm B} (n_{\rm B} - 1)}{4 \pi R^2},  &\text{if } r \leq R,    \\
    \frac{n_{\rm B} (n_{\rm B} - 1)}{4\pi^2 R^2 r} \left(2r \arcsin{\left(\frac{R}{r}\right)} - \frac{2R}{r}\sqrt{r^2-R^2}\right) & \text{if } r > R.
    \end{cases}
\end{align*}
\end{theorem}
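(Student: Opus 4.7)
The plan is to compute the mean number of pairwise intersections inside a disk centered at the origin, $\mathcal{N}(\ell):=\mathbb{E}\bigl|\mathcal{I}\cap\mathcal{B}((0,0),\ell)\bigr|$ where $\mathcal{I}$ denotes the intersection point process, and then recover the radial density via $\rho_\times(\ell)=\frac{1}{2\pi\ell}\mathcal{N}'(\ell)$, which is legitimate because the BLP is isotropic. Since the generating coordinates $(\theta_i,r_i)$ are i.i.d.\ continuous, the $n_{\rm B}$ lines are a.s.\ pairwise non-parallel and hence each pair contributes exactly one intersection point of $\mathbb{R}^2$; linearity of expectation over the $\binom{n_{\rm B}}{2}$ unordered pairs then gives
\begin{align*}
  \mathcal{N}(\ell)=\binom{n_{\rm B}}{2}\,\mathbb{P}\bigl(L_1\cap L_2\in\mathcal{B}((0,0),\ell)\bigr).
\end{align*}
Equivalently, if $f_{\mathbf{P}}$ denotes the Lebesgue density of the intersection $\mathbf{P}:=L_1\cap L_2$ of two i.i.d.\ BLP lines, then isotropy yields $\rho_\times(|\mathbf{x}|)=\binom{n_{\rm B}}{2}f_{\mathbf{P}}(\mathbf{x})$, so it suffices to compute $f_{\mathbf{P}}$.

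For this I would change variables from $(\theta_1,r_1,\theta_2,r_2)$, jointly uniform on $[0,\pi)^2\times[-R,R]^2$ with density $1/(4\pi^2R^2)$, to $(\theta_1,\theta_2,x_0,y_0)$ with $(x_0,y_0)=L_1\cap L_2$. Because $r_i=x_0\cos\theta_i+y_0\sin\theta_i$ is linear in $(x_0,y_0)$, the Jacobian of $(x_0,y_0)\mapsto(r_1,r_2)$ with $\theta_1,\theta_2$ held fixed is the classical integral-geometric factor $|\sin(\theta_2-\theta_1)|$. Marginalizing over $\theta_1,\theta_2$ subject to the box constraints $|r_i|\le R$ then yields
\begin{align*}
  f_{\mathbf{P}}(\mathbf{x})=\frac{1}{4\pi^2R^2}\int_0^\pi\!\!\int_0^\pi|\sin(\theta_2-\theta_1)|\,\mathbf{1}\{|\mathbf{x}\cdot\hat{n}_{\theta_1}|\le R\}\,\mathbf{1}\{|\mathbf{x}\cdot\hat{n}_{\theta_2}|\le R\}\,\mathrm{d}\theta_1\,\mathrm{d}\theta_2,
\end{align*}
where $\hat{n}_\theta:=(\cos\theta,\sin\theta)$.

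For $s:=|\mathbf{x}|\le R$, both indicators are identically one since $|\mathbf{x}\cdot\hat{n}_{\theta_i}|\le s\le R$, and the $\pi$-periodicity of $|\sin|$ immediately gives $\int_0^\pi\!\int_0^\pi|\sin(\theta_2-\theta_1)|\,\mathrm{d}\theta_1\,\mathrm{d}\theta_2=2\pi$, so $f_{\mathbf{P}}(\mathbf{x})=1/(2\pi R^2)$, delivering the first branch $\rho_\times(r)=n_{\rm B}(n_{\rm B}-1)/(4\pi R^2)$. For $s>R$, isotropy lets me take $\mathbf{x}=(s,0)$, whereupon the constraints collapse to $|\cos\theta_i|\le R/s$, i.e.\ $\theta_i\in[\beta,\pi-\beta]$ with $\beta:=\arccos(R/s)$. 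The shift $u_i:=\theta_i-\beta$ together with $L:=\pi-2\beta$ reduces the remaining integral to $\int_0^L\!\int_0^L|\sin(u_2-u_1)|\,\mathrm{d}u_1\,\mathrm{d}u_2=2L-2\sin L$ by direct antidifferentiation. Finally, the complementary-angle identity $L=2\arcsin(R/s)$ and the double-angle formula $\sin L=2R\sqrt{s^2-R^2}/s^2$, substituted back and multiplied by $\binom{n_{\rm B}}{2}$, produce the second branch after routine algebra.

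I expect the main obstacle to be this restricted double integral in the second case. Since $|\sin(\theta_2-\theta_1)|$ is not separable over the square $[\beta,\pi-\beta]^2$, the clean closed form relies on the shift above together with the per-slice identity $\int_{-u}^{L-u}|\sin v|\,\mathrm{d}v=2-\cos u-\cos(L-u)$ for $u\in[0,L]$ and $L\le\pi$, which then integrates in $u$ to $2L-2\sin L$. The remaining trigonometric simplification (turning $\pi-2\arccos(R/s)$ into $2\arcsin(R/s)$ and collapsing $\sin(2\arcsin(R/s))$) is routine but must be executed carefully to match the stated form.
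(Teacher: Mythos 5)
Your proposal is correct, and it reaches both branches of the theorem by a route that is genuinely different from the paper's. The paper conditions on a reference line $L_0$ meeting the disk $\mathcal{B}((0,0),t)$, derives the (possibly clipped) domain band in $\mathcal{D}$ for a second line to cross $L_0$ inside the disk, averages over the foot point $r_0$, counts intersections through a binomial expansion over the number of lines hitting the disk (with a factor $\tfrac12$ to undo double counting), and only then extracts the density by an equal-width-annulus limit. You instead use linearity of expectation over the $\binom{n_{\rm B}}{2}$ unordered pairs and compute the exact Lebesgue density of the intersection point of two i.i.d.\ BLP lines via the change of variables $(r_1,r_2)\mapsto(x_0,y_0)$ with the integral-geometric Jacobian $|\sin(\theta_2-\theta_1)|$; the box constraints $|r_i|\le R$ become the angular window $[\beta,\pi-\beta]$ with $\beta=\arccos(R/s)$ for $s>R$, and the restricted double integral $2L-2\sin L$ with $L=2\arcsin(R/s)$ reproduces the stated second branch exactly (I checked the algebra: your $\frac{n_{\rm B}(n_{\rm B}-1)}{2\pi^2R^2}\bigl(\arcsin(R/s)-\frac{R}{s^2}\sqrt{s^2-R^2}\bigr)$ equals the theorem's expression). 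Your argument buys rigor and brevity: it avoids the heuristic uniform averaging over $r_0$, the intersect-the-disk/intersect-the-line conflation, and the annulus limit, and it makes the normalization $\int_{\mathbb{R}^2}\rho_\times=\binom{n_{\rm B}}{2}$ automatic since $f_{\mathbf P}$ is a probability density. What the paper's longer route buys is the closed-form cumulative count $\mathcal{N}(t)$ for a disk of radius $t$ (its equation (16)), which is of independent interest and which you would have to recover by integrating your density; it also reuses the domain-band machinery developed earlier in the paper, keeping the exposition uniform across the line-length and intersection computations.
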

\begin{proof}
See Appendix~\ref{app:intersection}
\end{proof}
\begin{figure}[t]
\centering
\includegraphics[width = 0.45\textwidth]{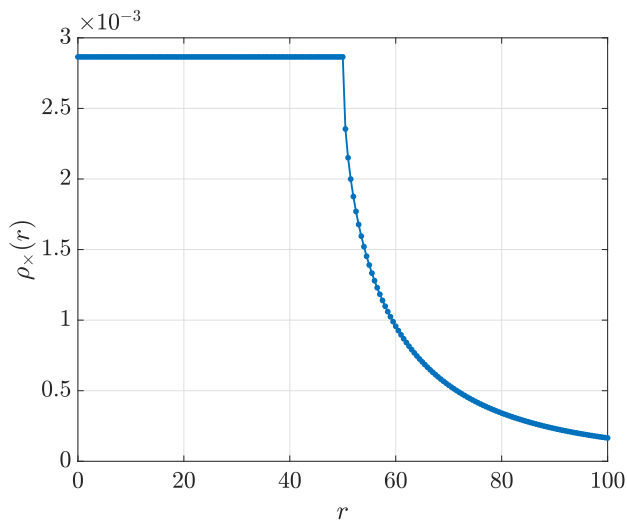}
\caption{Intersection density for $R=50$ and $n_{\rm B} = 10$.}
\label{fig:intersectionDensity}
\end{figure}

In Fig~\ref{fig:intersectionDensity} we see that intersection density first remains constant and then scales as $\mathcal{O}\left(\frac{1}{r}\right)$ as $r\to\infty$. By integrating the intersection density, we get the intersection measure
\begin{align*}
    \mathcal{R} (S) = \int_S \rho_{\times} (|{\bf x}|) \; \mathrm{d}{\bf x}, \quad S \subset \mathbb{R}^2.
\end{align*}
\begin{remark}
    The intersection measure in the $\mathbb{R}^2$ plane is
    \begin{align*}
        \mathcal{R}_{\times} = \int_0^{2\pi} \int_0^\infty \rho_{\times} (r)\; r\; \mathrm{d}r\; \mathrm{d}\theta = \binom{n_{\rm B}}{2} = \frac{n_{\rm B} (n_{\rm B} - 1)}{2},
    \end{align*}
    as expected.
\end{remark}

\subsection{Intersection Density of PLP}
In this section, we determine the intersection density and measure of PLP and compare it with that of the BLP.
\begin{theorem}
The intersection density for a PLP with density $\lambda_{\rm PPP}$  is
\begin{align*}
    \rho_{\rm P}(\lambda_{\rm PPP}) =  \pi\lambda_{\rm PPP}^2.
\end{align*}
\end{theorem}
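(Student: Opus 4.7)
The plan is to exploit the stationarity (and isotropy) of the PLP, which implies that the intersection density $\rho_{\rm P}$ is a constant independent of location. Consequently, it suffices to compute the expected number of intersections inside any bounded reference region (a disk $\mathcal{B}(\mathbf{o}, t)$ of radius $t$ centered at the origin is most convenient) and then divide by its area $\pi t^2$. The structural strategy parallels the proof of Theorem \ref{theo:intersection} for the BLP, with the Binomial count of intersecting lines replaced by a Poisson count.

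First, I would recall the standard PLP result that the number $N_L$ of lines intersecting $\mathcal{B}(\mathbf{o}, t)$ is Poisson distributed with mean $2\pi \lambda_{\rm PPP} t$, since the domain band on the cylinder $[0,\pi)\times\mathbb{R}$ associated with $\mathcal{B}(\mathbf{o}, t)$ has area $2\pi t$. Conditionally on $N_L=n$, the $n$ generating pairs $(\theta_i, r_i)$ are i.i.d.\ uniform on $[0,\pi)\times[-t,t]$. The number of intersections falling inside the disk is then $\binom{N_L}{2}$ times the probability $p$ that two independently chosen uniform chords meet inside $\mathcal{B}(\mathbf{o}, t)$. Using $\mathbb{E}\bigl[\binom{N_L}{2}\bigr] = (2\pi\lambda_{\rm PPP}t)^2/2 = 2\pi^2\lambda_{\rm PPP}^2 t^2$, the expected intersection count becomes $2\pi^2\lambda_{\rm PPP}^2 t^2 \cdot p$.

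The main obstacle is establishing $p = 1/2$, i.e.\ that a uniformly random chord-pair of a disk intersects inside the disk with probability exactly one half. I would verify this by the same integral-geometric bookkeeping that underlies the BLP computation in Appendix A: writing the intersection coordinate $(x^\star, y^\star)$ of the two lines $(\theta_1,r_1)$ and $(\theta_2,r_2)$ explicitly, and integrating $\mathbf{1}\bigl((x^\star)^2+(y^\star)^2 \leq t^2\bigr)$ against the uniform density $(2\pi\cdot 2t)^{-2}$ on $\bigl([0,\pi)\times[-t,t]\bigr)^2$. In fact, the BLP computation already implicitly yields this value: specializing the BLP result to $r \leq R$ gives $\rho_{\times}(r) \cdot \pi r^2 = \frac{n_{\rm B}(n_{\rm B}-1) r^2}{4R^2}$, whereas $\mathbb{E}\bigl[\binom{N_L}{2}\bigr] = \binom{n_{\rm B}}{2}(r/R)^2$, whose ratio is exactly $1/2$; the same intersection-probability value transfers to the PLP case because it depends only on the joint distribution of two uniform lines through the disk, which coincides in both models.

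Combining the ingredients yields $\mathbb{E}[N_\times(\mathcal{B}(\mathbf{o},t))] = \pi^2 \lambda_{\rm PPP}^2 t^2$, and dividing by $\pi t^2$ produces the claimed $\rho_{\rm P}(\lambda_{\rm PPP}) = \pi \lambda_{\rm PPP}^2$. As a sanity check I would also note that this is consistent with the limiting regime $R\to\infty$, $n_{\rm B}\to\infty$ with $n_{\rm B}/(2\pi R)\to\lambda_{\rm PPP}$, under which Theorem \ref{theo:intersection} for $r\leq R$ collapses to $\frac{n_{\rm B}(n_{\rm B}-1)}{4\pi R^2} \to \pi\lambda_{\rm PPP}^2$, providing an independent cross-verification of the answer.
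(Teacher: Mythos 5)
Your proposal is correct and follows essentially the same route as the paper: both count the Poisson$(2\pi\lambda_{\rm PPP}t)$ lines hitting the disk, use the fact that two independent uniform chords of the disk meet inside it with probability $\tfrac{1}{2}$ (obtained from the same domain-band computation), evaluate the second factorial moment of the Poisson count, and divide by the area. Your $\mathbb{E}\bigl[\binom{N_L}{2}\bigr]p$ bookkeeping is algebraically identical to the paper's per-line sum with the double-counting factor $\tfrac{1}{2}$, and your direct division by $\pi t^2$ via stationarity replaces only the paper's annulus-limit step.
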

\begin{proof}
Let $S = \mathcal{B}\left((0,0), t\right)$ and consider a PLP line $L_0$ generated at the point $(0,0)$ without loss of generality. The domain band $\mathcal{D}_{\rm PPP}$ corresponding to the intersection on $L_0$ depends only on the area in which the intersection density is calculated i.e., for a given $\theta$, the range of $r$ where if a line is generated intersects $(0,r_0)$ within $S$ is,
\begin{align*}
    \left(r_0\cos\theta - \sqrt{t^2 - r_0^2}\sin\theta\right) \leq r_i \leq \left(r_0 \cos\theta + \sqrt{t^2 - r_0^2} \sin\theta\right).
\end{align*}
Consequently, the area of the domain band averaged for uniformly distributed $r_0$ between 0 and $t$ can be written as
\begin{align*}
    A_{D_{\rm P}}(t) &= \frac{1}{t}\int_0^t\int_0^\pi \left(r_0\cos\theta + \sqrt{t^2 - r_0^2}\sin\theta\right)\; \mathrm{d}\theta\; \mathrm{d}r_0 - \frac{1}{t}\int_0^t\int_0^\pi \left(r_0 \cos\theta - \sqrt{t^2 - r_0^2} \sin\theta\right)\; \mathrm{d}\theta\; \mathrm{d}r_0 = \pi t.
\end{align*}
Accordingly, the probability that a line of the PLP intersects a single line within $S$ is
\begin{align*}
    \mathcal{P}_{\rm P} (t) = \frac{A_{D_{\rm P}}(t)}{2\pi t} =   \frac{1}{2}.
\end{align*}
Let us now suppose that $k$ lines are generated in $S$. With probability $\mathcal{P}_{\rm P} (t)$, each of them intersect line $L_0$. Thus, the average number of intersections on $L_0$ from the $k$ lines within $S$ is evaluated as
\begin{align*}
    \mathcal{N}^\prime = \sum_{j=0}^k j \binom{k}{j} \left(\mathcal{P}_{\rm P}(t) \right)^{j} \left(1 - \mathcal{P}_{\rm P}(t) \right)^{k-j} = \frac{k}{2}.
\end{align*}
Finally in order to determine the average number of intersections on all the lines within $S$, we take the expectation over the number of lines that are generated within $S$. This is evaluated as
\begin{align}
    \mathcal{N} &= \sum_{k=0}^\infty \frac{\left( 2\pi t\lambda_{\rm PPP}\right)^{(k+1)} e^{\left(-2\pi t\lambda_{\rm PPP} \right)}}{(k+1)!} \times (k+1) \times \frac{k}{2} \times \frac{1}{2}  = \frac{\left( 2\pi t\lambda_{\rm PPP}\right)^2}{4} = \pi^2 t^2\lambda^2_{\rm PPP}.
    \label{eq:intersectionPPP}
\end{align}
Next, similar to subsection 2.3, we consider concentric circles centered at the origin having radii $l = \{w, 2w, \dots \}$. As a result, the annuli formed by these concentric circles have the same width, $w$. In the $i-${th} annulus of width $w$, we use (\ref{eq:intersectionPPP}) to determine the ratio of the average number of intersections to the area to find the intersection density 
\begin{align}
    \rho_{{\rm P}, i} (w) &=  \frac{1}{\pi \left((i+1)w\right)^2 - \pi (iw)^2} \left(\left(\lambda_{\rm PPP} \pi (i+1)w\right)^2 - \left(\lambda_{\rm PPP} \pi iw\right)^2\right) = \pi\lambda_{\rm PPP}^2 .
    \label{eq:intersectionAnnulusP}
\end{align}
This implies that the density at any point at any distance from the origin is also constant. 
\end{proof}
As expected, the intersection density scales as $\lambda^2_{\rm PPP}$.

\subsection{Distance Distribution to the Nearest Intersection}
\begin{figure}[t]
    \centering
    \includegraphics[width = 0.45\textwidth]{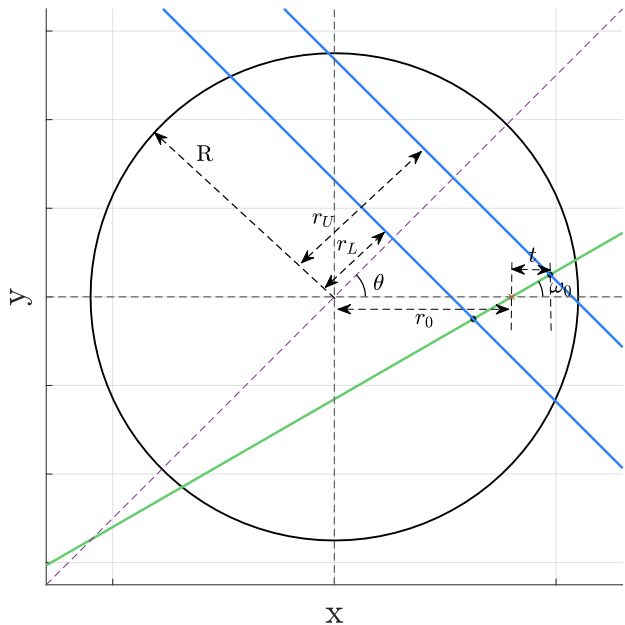}
    \caption{Illustration of distances $r_{\rm L}$ and $r_{\rm U}$ (denoted by blue lines respectively) for a line (denoted by the green line) passing through $(0, r_0)$ and having angle $\omega_0$ with the x-axis.}
    \label{fig:intersectionline}
\end{figure}
Let us consider a test point $(0, r_0)$ that lies on a line of the BLP. Consequently, $n_{\rm B} - 1$ lines of the BLP intersect the line containing the test point almost surely. Let $t$ be the distance to the nearest intersection from the test point. Also, let $\omega_0$ be the angle formed between the line passing through $(0, r_0)$ and the x-axis. This is illustrated in Fig. \ref{fig:intersectionline}, where for an intersection to be located at a distance of $t$ from the test point at an angle of $\omega_0$, there may exist a set of $r$ for a given $\theta$ wherein no lines should be generated. 
Accordingly, in order to find the distance distribution, we need the set of all such $(\theta_i,r_i)$ for which lines $L_i \in \mathcal{L}$  do not intersect the line passing through $(0, r_0)$ at an angle of $\omega_0$ within a distance of $t$ from the test-point. Now, for a given $\theta$, the range of $r$ where if a line is generated, intersects the test point within a distance $t$ is $[r_{\rm L}, r_{\rm U}]$ calculated as:

\begin{equation}
r_{\rm L} =
\begin{cases}
\max{\{-R,\min{\{R,(r_0\cos{(\theta)} - t\cos{(\theta-\omega_0)}\}}\}} & \text{if } \omega_0 \leq \frac{\pi}{2}\; \text{and}\; \theta\leq\omega_0+\frac{\pi}{2},
\\
\max{\{-R,\min{\{R,(r_0\cos{(\theta)} + t\cos{(\theta-\omega_0)}\}}\}} & \text{if } \omega_0 \leq \frac{\pi}{2}\; \text{and}\; \theta > \omega_0+\frac{\pi}{2},
\\
\min{\{R,\max{\{-R,r_0\cos{(\theta)} + t\cos{(\theta-\omega_0)}\}}\}} & \text{if } \omega_0 > \frac{\pi}{2}\; \text{and}\; \theta\leq\omega_0-\frac{\pi}{2},
\\
\max{\{-R,\min{\{R,(r_0\cos{(\theta)} - t\cos{(\theta-\omega_0)}\}}\}} & \text{if } \omega_0 > \frac{\pi}{2}\; \text{and}\; \theta > \omega_0-\frac{\pi}{2}.
\end{cases}
\label{eq:domainIL}
\end{equation}

\begin{equation}
r_{\rm U} =
\begin{cases}
\max{\{-R,\min{\{R,(r_0\cos{(\theta)} + t\cos{(\theta-\omega_0)}\}}\}} & \text{if } \omega_0 \leq \frac{\pi}{2}\; \text{and}\; \theta\leq\omega_0+\frac{\pi}{2},
\\
\max{\{-R,\min{\{R,(r_0\cos{(\theta)} - t\cos{(\theta-\omega_0)}\}}\}} & \text{if } \omega_0 \leq \frac{\pi}{2}\; \text{and}\; \theta > \omega_0+\frac{\pi}{2},
\\
\min{\{R,\max{\{-R,r_0\cos{(\theta)} - t\cos{(\theta-\omega_0)}\}}\}} & \text{if } \omega_0 > \frac{\pi}{2}\; \text{and}\; \theta\leq\omega_0-\frac{\pi}{2},
\\
\max{\{-R,\min{\{R,(r_0\cos{(\theta)} + t\cos{(\theta-\omega_0)}\}}\}} & \text{if } \omega_0 > \frac{\pi}{2}\; \text{and}\; \theta > \omega_0-\frac{\pi}{2}.
\end{cases}
\label{eq:domainIU}
\end{equation}

\begin{figure}[t]
    \centering
    {\includegraphics[width = 0.75\textwidth]{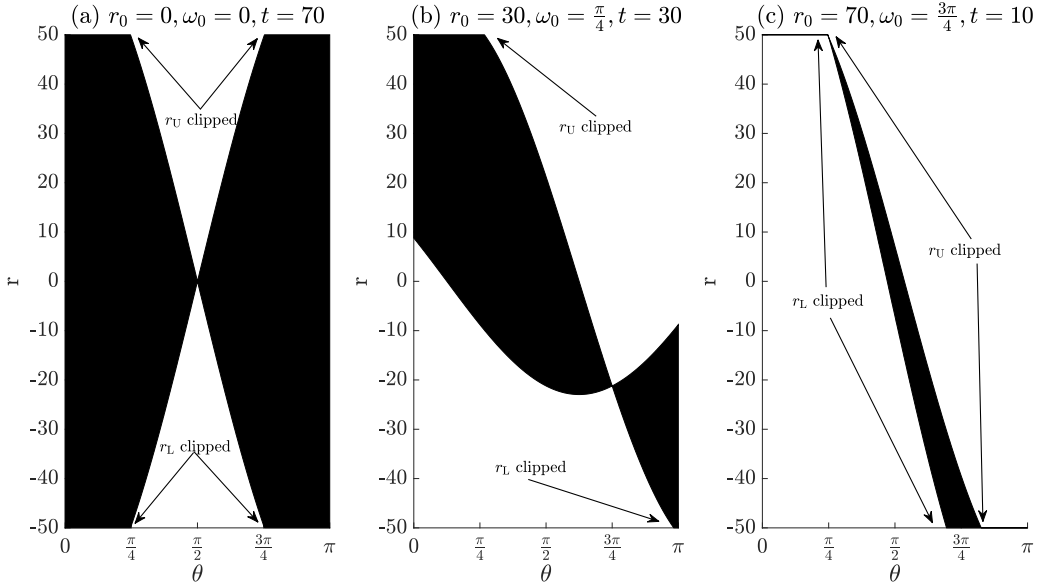}}
    \caption{Illustration of the region $D_{\rm I} (r_0, t)$ for different values of $r_0, \omega_0$ and $t$. Here $R=50$.}
    \label{fig:clipedI}
\end{figure}

The above equations of $r_{\rm L}$ and $r_{\rm U}$ are obtained using simple trigonometric calculations and include all possible cases for different values of $r_0, t, \theta$, and $\omega_0$. As an example, the case presented in Fig.~\ref{fig:intersectionline} corresponds to $\theta<\omega_0 + \frac{\pi}{2}$ and $\omega_0<\frac{\pi}{2}$, accordingly, we have $r_{\rm L} < r_{\rm U}$. We see that when $\theta>\omega_0 + \frac{\pi}{2}$ we can end up with scenarios $r_{\rm L} > r_{\rm U}$, thus equation \eqref{eq:domainIL} and \eqref{eq:domainIU} have been defined in such a way that $r_{\rm L}$ will always be less than $r_{\rm U}$ for all possible values of $r_0$ and $t$. Also, \eqref{eq:domainIL} and \eqref{eq:domainIU} consider the cases when $r_{\rm L}$ and $r_{\rm U}$ exceed $[R,-R]$.
\par Let $D_{\rm I}$ be the set of all such $(\theta, r)$ for which lines $L_i \in \mathcal{L}$ do not intersect the line passing through $(0, r_0)$ at angle of $\omega_0$ within a distance of $t$ from the test-point. Fig.~\ref{fig:clipedI} shows some examples of the domain band regions wherein a line should not be generated for it to not intersect the line passing through $(0, r_0)$ within a distance $t$ from the test point at an angle of $\omega_0$. In Fig \ref{fig:clipedI}(a), as $r_0 = 0, \omega_0=0$ and $t > R$ we see that $|r_0|+t>R$, the domain bands are getting clipped at 50 and -50 for most of the initial and final values of $\theta$. Likewise in Fig. \ref{fig:clipedI} (c) when the test point lies outside the circle of radius $R$ and $t = 10$ i.e., $|r_0|-t>R$ and, more values of $r_{\rm L}$ and $r_{\rm U}$ are clipped for $\theta<\frac{\pi}{2}$ and the total width of the band is also small, thus showcasing that test points lying outside and having small $t$ would experience fewer intersections.

\begin{corollary}{}
For a BLP line passing through $(0, r_0)$, the CDF of the distance $d_{\rm I}$ to the nearest intersection is
\begin{align*}
    F_{d_{\bf I}}(t) = 1 - \mathcal{V}_{B_{\bf I}} (r_0,t).
\end{align*}
\end{corollary}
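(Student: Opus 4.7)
The plan is to reduce the CDF to a void-probability computation, in direct analogy to how the distance to the nearest BLP line was handled earlier in the paper. Let the test point $(0,r_0)$ lie on a BLP line $L_0$ making angle $\omega_0$ with the $x$-axis. Conditionally on $L_0$, the remaining $n_{\rm B}-1$ lines of $\mathcal{L}$ are generated by $n_{\rm B}-1$ i.i.d.\ points drawn uniformly on the cylinder $\mathcal{D}=[0,\pi)\times[-R,R]$. I would first argue that the event $\{d_{\rm I}>t\}$ is exactly the event that none of these $n_{\rm B}-1$ generating points falls in the set $D_{\rm I}(r_0,t)\subset\mathcal{D}$, where $D_{\rm I}(r_0,t)$ is the set of $(\theta,r)$ for which the generated line crosses $L_0$ within distance $t$ of $(0,r_0)$. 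This equivalence is the content of the geometric analysis preceding the statement: for each $\theta$, the admissible slice of $r$-values is precisely the interval $[r_{\rm L}(\theta),r_{\rm U}(\theta)]$ given in \eqref{eq:domainIL}--\eqref{eq:domainIU}, and those formulas already incorporate the clipping to $[-R,R]$ enforced by the BLP construction.

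Next I would write
\begin{align*}
 \mathbb{P}(d_{\rm I}>t)
 \;=\;\prod_{i=1}^{n_{\rm B}-1}\mathbb{P}\bigl((\theta_i,r_i)\notin D_{\rm I}(r_0,t)\bigr)
 \;=\;\left(1-\frac{A_{D_{\bf I}}(r_0,t)}{2\pi R}\right)^{n_{\rm B}-1},
\end{align*}
using i.i.d.\ uniformity on $\mathcal{D}$ and the fact that $\mathcal{D}$ has area $2\pi R$. The right-hand side is by definition the void probability $\mathcal{V}_{B_{\bf I}}(r_0,t)$ associated with intersections, so
\begin{align*}
 F_{d_{\bf I}}(t)=\mathbb{P}(d_{\rm I}\le t)=1-\mathcal{V}_{B_{\bf I}}(r_0,t),
\end{align*}
which is the claim. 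Note that no further integration is required at this level of generality: the corollary expresses the CDF in terms of the intersection-void probability, and the explicit evaluation of $A_{D_{\bf I}}(r_0,t)$ is a separate geometric task that \eqref{eq:domainIL}--\eqref{eq:domainIU} already set up.

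The main obstacle, and the only delicate point, is justifying that the area of $D_{\rm I}(r_0,t)$ is indeed $\int_0^\pi\bigl(r_{\rm U}(\theta)-r_{\rm L}(\theta)\bigr)\,\mathrm{d}\theta$ across all four regimes of $(\omega_0,\theta)$ distinguished in \eqref{eq:domainIL}--\eqref{eq:domainIU}, together with the additional clipping at $\pm R$ illustrated in Fig.~\ref{fig:clipedI}. I would verify this case-by-case using elementary trigonometry: for each regime, a line with generating coordinates $(\theta,r)$ meets $L_0$ at signed arc-length $(r-r_0\cos\theta)/\cos(\theta-\omega_0)$ from $(0,r_0)$, so the condition $|\cdot|\le t$ yields the stated interval, and intersecting with $[-R,R]$ gives the clipping. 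Once this geometric bookkeeping is done, the rest of the argument is a one-line application of the i.i.d.\ uniform distribution of the generating points, mirroring the proof of the earlier void-probability corollary for $\mathcal{V}_{\rm BLP}$.
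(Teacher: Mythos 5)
Your proposal follows essentially the same route as the paper: identify $\{d_{\rm I}>t\}$ with the event that no generating point of the remaining lines falls in $D_{\rm I}(r_0,t)$, then invoke the i.i.d.\ uniformity of the generating points on $\mathcal{D}$ (area $2\pi R$) to obtain a binomial void probability. Two details differ and are worth flagging. First, you raise the void factor to the power $n_{\rm B}-1$, whereas the paper's displayed $\mathcal{V}_{B_{\rm I}}(r_0,t)=\bigl(1-A_{D_{\rm I}}(r_0,t)/(2\pi R)\bigr)^{n_{\rm B}}$ uses $n_{\rm B}$; your count is the one consistent with the paper's own setup, which states that only the other $n_{\rm B}-1$ lines intersect the line $L_0$ containing the test point, so conditionally on $L_0$ only $n_{\rm B}-1$ independent generating points can produce an intersection within distance $t$. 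Second, you keep the domain area conditional on the orientation $\omega_0$ of $L_0$, i.e.\ a single integral $\int_0^\pi\bigl(r_{\rm U}(\theta)-r_{\rm L}(\theta)\bigr)\,\mathrm{d}\theta$, whereas the paper integrates over $\omega_0$ inside $A_{D_{\rm I}}$ before exponentiating; since $x\mapsto\bigl(1-x/(2\pi R)\bigr)^{n}$ is nonlinear, averaging over $\omega_0$ and raising to the power do not commute, so if an unconditional CDF is intended the expectation over $\omega_0$ should be taken outside the power, which your conditional formulation accommodates naturally. Finally, your signed arc-length computation $(r-r_0\cos\theta)/\cos(\theta-\omega_0)$ correctly reproduces the intervals $[r_{\rm L},r_{\rm U}]$ of \eqref{eq:domainIL}--\eqref{eq:domainIU} including the clipping to $[-R,R]$, so the geometric step you single out as delicate is indeed handled.
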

The area of $D_{\rm I}$ for a BLP $\mathcal{L}$ defined on $[0,2\pi)\times [0,R]$ corresponding to $(0, r_0)$ and $t$ is
    $A_{D_{\rm I}}(r_0,t) = \int_0^\pi\int_0^\pi r_{\rm U}\; \mathrm{d}\theta\; \mathrm{d}\omega_0 - \int_0^\pi\int_0^\pi r_{\rm L}\; \mathrm{d}\theta\; \mathrm{d}\omega_0.$
Accordingly, the probability that no line of the BLP intersects the line passing through $(0, r_0)$ within a distance $t$ from the test point is given by
\begin{align*}
    \mathcal{V}_{B_{\rm I}} (r_0,t) = \left(1 - \frac{A_{D_{\rm I}}(r_0,t)}{2\pi R}\right)^{n_{\rm B}}. 
\end{align*}
\begin{figure}[t]
    \centering
    {\includegraphics[width = 0.5\textwidth]{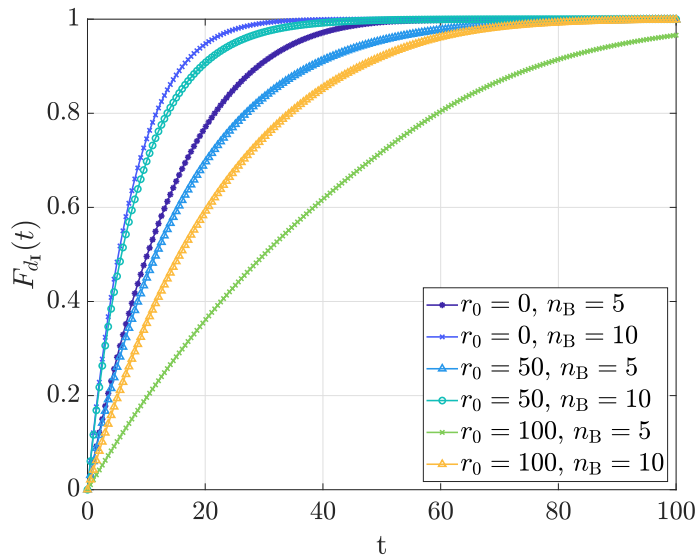}}
    \caption{Distance distribution to the nearest intersection from $(0, r_0)$.}
    \label{fig:cdfCliped}
\end{figure}
\\
In Fig. \ref{fig:cdfCliped} we plot the CDF for different values of $n_{\rm B}$ and $r_0$ for $R = 50$. We see that the nearest intersection is closer for higher $n_{\rm B}$ and smaller $r_0$ values.


\section{Binomial Line Cox Process}
\label{sec:BLCP}
On each line $L_i$ of $\mathcal{L}$, let us define an independent 1D \ac{PPP} $\Phi_i$ with intensity $\lambda$. A \ac{BLCP} $\Phi$, is the collection of all such points on all lines of the \ac{BLP}, i.e., $\Phi = \bigcup\limits_{i=1}^{n_{\rm B}} \Phi_i$. Thus, the \ac{BLCP} is a doubly-stochastic or Cox process of random points defined on random lines.

\subsection{Void Probability} 
Similar to the \ac{BLP}, we derive the void probability of the BLCP and consequently, its nearest point distance distribution. This has previously been reported in~\cite{ghatak22line}, however, compared to~\cite{ghatak22line}, we take a simpler approach for the derivation. In particular, leveraging the polar coordinates leads to much simpler trigonometric manipulations, and accordingly, we reduce the number of events for which void probability needs to be calculated explicitly.
\begin{theorem}
The probability that the disk $\mathcal{B}((0,r_0), t)$ contains no points of $\Phi$ is given by
\begin{align}
    \mathcal{V}_{\rm BLCP}\left(n_{\rm B}, \mathcal{B}((0,r_0), t)\right) &= \left[\frac{1}{2 \pi R} \int_0^{2\pi} \int_{r_0 \cos\theta - t}^{r_0 \cos\theta + t}  \exp\left({-\lambda C(\theta, r)}\right) \, \mathrm{d}r \, \mathrm{d}\theta \right]^{n_{\rm B}},
\end{align}
where,
\begin{align}
    C(\theta, r) &= 
            \begin{cases}
                2\sqrt{t^2 - (r_0 \cos \theta - r)^2}; & t \geq |r_0 \cos \theta - r|, \\
                0; & \text{otherwise},
            \end{cases}
\end{align}
is the length of the chord created by a line corresponding to $(\theta, r) \in \mathcal{D}$ in the disk $\mathcal{B}((0,r_0), t)$.
\end{theorem}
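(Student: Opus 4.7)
The proof will proceed by exploiting two layers of independence: the independence of the $n_{\rm B}$ generating points defining the BLP, and the mutual independence of the $n_{\rm B}$ 1D PPPs placed on the resulting lines. The plan is first to condition on the BLP $\mathcal{L}$. Given $\mathcal{L}$, each line $L_i$ intersects $\mathcal{B}((0,r_0),t)$ in a chord of (possibly zero) length $|L_i\cap\mathcal{B}|$, and the points of $\Phi_i$ restricted to this chord form a 1D PPP of rate $\lambda$ on a segment of that length. The classical void probability for a 1D PPP then gives $\Pr(\Phi_i\cap\mathcal{B}=\varnothing\mid L_i)=\exp(-\lambda|L_i\cap\mathcal{B}|)$, so conditional independence of the $\Phi_i$ yields
\begin{equation*}
\Pr\bigl(\Phi\cap\mathcal{B}=\varnothing\mid\mathcal{L}\bigr)=\prod_{i=1}^{n_{\rm B}}\exp\bigl(-\lambda|L_i\cap\mathcal{B}|\bigr).
\end{equation*}

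The second step is to de-condition on $\mathcal{L}$. Because the generating points of the BLP are i.i.d.\ uniform on the finite cylinder $\mathcal{D}$ of total measure $2\pi R$, the chord lengths $|L_i\cap\mathcal{B}|$ are i.i.d.\ random variables, and the product expectation factorises into the $n_{\rm B}$-th power of a single-line expectation,
\begin{equation*}
\mathcal{V}_{\rm BLCP}=\Bigl(\mathbb{E}\bigl[\exp(-\lambda\,C(\Theta,\mathrm{R}))\bigr]\Bigr)^{n_{\rm B}},
\end{equation*}
where $(\Theta,\mathrm{R})$ is uniformly distributed on $\mathcal{D}$. The inner expectation then unfolds as a double integral against the uniform density $1/(2\pi R)$, and the resulting formula in the statement follows once one uses the piecewise definition of $C(\theta,r)$: for $(\theta,r)$ outside the domain band one has $C=0$, contributing trivially $1$ to the integrand, while inside the domain band the chord length is obtained from the standard circle-line intersection as $C(\theta,r)=2\sqrt{t^{2}-(r_0\cos\theta-r)^{2}}$, which is exactly the length of the segment cut from the line by the disk.

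The reason this derivation is simpler than the original one in \cite{ghatak22line} is that the polar parameterisation lets one absorb the intersection indicator entirely into the integration bounds: one writes $r\in[r_0\cos\theta-t,\,r_0\cos\theta+t]$ and lets $\theta$ sweep the full angular range, rather than separately enumerating the geometric cases according to whether the test point lies inside $\mathcal{B}((0,0),R)$ or whether the domain band clips against the boundary of $\mathcal{D}$. The $1/(2\pi R)$ prefactor is then exactly the uniform density on the generating set, raised to the power $n_{\rm B}$ at the end because of the i.i.d.\ structure.

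The main obstacle I anticipate is bookkeeping rather than conceptual: one must reconcile the two parameterisations of a line (e.g.\ $(\theta,r)$ vs.\ $(\theta+\pi,-r)$) so that the extension of $\theta$ to $[0,2\pi)$ in the stated integral is matched by the correct normalisation, and one must verify that extending the $r$-integration to values possibly outside $[-R,R]$ does not over-count, because wherever the chord truly fails to exist the integrand collapses to $1$ and the extra mass coincides with the contribution of the non-intersecting part of $\mathcal{D}$. Once the identification of the effective integration region with the admissible generating points is established, the rest is routine and the claimed closed form drops out immediately.
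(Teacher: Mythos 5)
Your proposal is correct and follows essentially the same route as the paper's proof: condition on the line, apply the 1D PPP void probability $\exp(-\lambda C)$ to the chord, and use the independence of the $n_{\rm B}$ i.i.d.\ generating points together with the independence of the $\Phi_i$ to raise the single-line expectation $\mathbb{E}\left[\exp(-\lambda C(\Theta,\mathrm{R}))\right]$ to the power $n_{\rm B}$. Your explicit remark that generating points outside the domain band contribute a factor of $1$ (since $C=0$ there) is the cleanest way to read the single-line expectation as an integral over the generating set, and correctly accounts for the non-intersecting lines.
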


\begin{proof}
Let us first consider a line $L$ generated by a uniformly located point in $\mathcal{D}$ and the corresponding BLCP (which is simply a 1D PPP) defined on this line. Then the void probability for a single line case can be found as,
\begin{align}
     \mathcal{V}_{\rm BLCP}( 1, \mathcal{B}((0,r_0), t)) &= \frac{1}{2 \pi R} \int_0^{2\pi} \int_{r_0 \cos(\theta) - t}^{r_0 \cos(\theta) + t}  \exp\left({-\lambda C(\theta, r)}\right) \, \mathrm{d}r \, \mathrm{d}\theta.
\end{align}
The void probability is the probability that no BLP lines contain a BLCP point inside $\mathcal{B}((0,r_0), t)$. Let us recall that the distance of a line $L$ corresponding to $(\theta, r) \in \mathcal{D}$ from the test point $(0,r_0)$ is, $|r_0 \cos(\theta) - r|$. Accordingly, if $t < |r_0 \cos(\theta) - r|$, then $L$ does not intersect $\mathcal{B}((0,r_0), t)$ resulting in a chord length 0. Then, given that $C(\theta, r) \geq 0$, the probability that no points of the BLCP lie on that chord is given as $\exp\left(-\lambda C\right)$ following the void probability of a PPP with intensity $\lambda$. Due to the independence of the locations of the $n_{\rm B}$ generating points in $\mathcal{D}$ and the independence of the individual 1D PPPs $\Phi_i$, the void probability can be evaluated as
\begin{align}
    \mathcal{V}_{\rm BLCP}\left(n_{\rm B}, \mathcal{B}((0,r_0), t)\right) = &\frac{1}{(2 \pi R)^{n_{\rm B}}} \prod_{L \in \mathcal{L}}\left[\int_0^{2\pi} \int_{r_0 \cos\theta - t}^{r_0 \cos\theta + t}  \exp\left({-\lambda C(r, \theta)}\right) \, \mathrm{d}r \, \mathrm{d}\theta \right] \nonumber \\
    =& \left[\frac{1}{2 \pi R} \int_0^{2\pi} \int_{r_0 \cos(\theta) - t}^{r_0 \cos(\theta) + t}  \exp\left({-\lambda C(\theta, r)}\right) \, \mathrm{d}r \, \mathrm{d}\theta \right]^{n_{\rm B}}.
\end{align}
\end{proof}

\begin{corollary}{[Distance Distribution]}
\label{cor:distdist}
Following the void probability, the distance distribution of the nearest BLCP point from the test point $(0,r_0)$ is
\begin{align}
    F_{d_1}(t) = 1 - \mathcal{V}_{\rm BLCP}\left(n_{\rm B}, \mathcal{B}((0,r_0), t)\right). \nonumber 
\end{align}
\end{corollary}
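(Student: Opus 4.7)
The plan is to exploit the standard identity linking the contact distribution of a point process to its void probability. Specifically, if $d_1$ denotes the distance from the test point $(0,r_0)$ to the nearest atom of $\Phi$, then the event $\{d_1 > t\}$ coincides precisely with the event that the open disk $\mathcal{B}((0,r_0), t)$ contains no point of $\Phi$. This is a set-theoretic equivalence that holds pointwise on the underlying probability space, so no measure-theoretic subtleties arise beyond noting that $\Phi$ is almost surely simple (it is a Cox process driven by independent 1D PPPs of intensity $\lambda$ on the lines of a BLP, and atoms on the boundary $\partial \mathcal{B}((0,r_0), t)$ occur with probability zero).

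The key steps I would carry out are as follows. First, write
\begin{align*}
F_{d_1}(t) \;=\; \mathbb{P}(d_1 \leq t) \;=\; 1 - \mathbb{P}(d_1 > t).
\end{align*}
Second, identify the complementary event with a void event:
\begin{align*}
\{d_1 > t\} \;=\; \bigl\{\Phi \cap \mathcal{B}((0,r_0), t) = \emptyset\bigr\}.
\end{align*}
Third, substitute the expression for the void probability obtained in the preceding theorem, namely $\mathbb{P}(\Phi \cap \mathcal{B}((0,r_0), t) = \emptyset) = \mathcal{V}_{\rm BLCP}(n_{\rm B}, \mathcal{B}((0,r_0), t))$. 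Combining these three steps yields the claimed expression.

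There is essentially no obstacle here; the corollary is a direct consequence of the definition of the contact distance together with the void probability already established. The only point worth mentioning in the write-up is the justification that $(0,r_0)$ is almost surely not itself a point of $\Phi$ (the test point is treated as an external observer, and the underlying BLP lines avoid any fixed point almost surely), so that $d_1 > 0$ with probability one and the CDF is well defined at $t = 0$. Beyond this sanity check, the proof is a one-line consequence of the preceding theorem.
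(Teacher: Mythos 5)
Your proof is correct and matches the paper's (implicit) argument exactly: the corollary is stated as an immediate consequence of the void probability theorem, via the identity $\{d_1 > t\} = \{\Phi \cap \mathcal{B}((0,r_0),t) = \emptyset\}$. The additional remarks on simplicity of $\Phi$ and on $d_1 > 0$ almost surely are sound sanity checks but not needed beyond what the paper assumes.
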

From a wireless network perspective where the locations of the \ac{AP}s are modeled as points of a BLCP, the above result characterizes the distance distribution to the nearest \ac{AP}. This will be used in deriving the communication performance metrics in Section~\ref{sec:app}.

\subsection{Palm Perspective of the \ac{BLCP}}
Next, we study the \ac{BLCP} from the perspective of a point of the process itself, using Palm calculus\footnote[2]{In point process theory, the Palm probability refers to the probability measure conditioned on a point of the process being at a certain location~\cite{haenggi2012stochastic}.}.
Let us recall that for a PLCP $\Phi_{\rm PLCP}$ with $\lambda$ as the density of the points on the lines, we have $\mathbb{P}(\Phi_{\rm PLCP} \in Y \mid o) = \mathbb{P}(\Phi_{\rm PLCP} \cup \Phi_0 \cup \{o\} \in Y)$, where $\Phi_0$ is a 1D PPP on a line that passes through the origin. In other words, the Palm distribution, i.e., conditioning on a point of the PLCP $\Phi_{\rm PLCP}$ to be at the origin, is equivalent to \textit{adding} (i) an independent Poisson process of intensity $\lambda$ on a line through the origin with uniform independent angle and (ii) an atom at the origin to the PLCP. Similarly, for a BLCP, conditioning on a point to be located at ${\bf x}$ is equivalent to considering an atom at ${\bf x}$, a 1D PPP on a line passing through ${\bf x}$ and a BLCP $\Phi^!$ defined on a BLP consisting of $n_{\rm B} - 1$ lines in the same domain. Thus, the Palm measure of the BLCP can be expressed as follows.
\begin{lemma}
For a BLCP $\Phi$ defined on a BLP $\mathcal{P}$ with $n_{\rm B}$ lines, we have

\begin{align}
\mathbb{P}\left(\Phi \in Y \mid {\bf x} \in \Phi\right) = \mathbb{P}\left(\Phi^! \cup \Phi_{\bf x} \cup \{\bf x\} \in Y\right),
\end{align}

where $\Phi_{\bf x}$ is a 1D PPP on a randomly oriented line that passes through ${\bf x}$.
\end{lemma}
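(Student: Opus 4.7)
The plan is to verify the Palm identity by starting from the explicit two-stage generative description of the BLCP --- a BPP of $n_{\rm B}$ i.i.d.\ uniform points on the cylinder $\mathcal{D} = [0,\pi) \times [-R, R]$ followed by conditionally independent 1D PPPs of intensity $\lambda$ on each of the induced lines --- and applying the Campbell--Mecke formula to a generic bounded measurable test functional $f$. Specifically, I would expand the first-moment sum
\begin{align*}
\mathbb{E} \sum_{{\bf x} \in \Phi} f({\bf x}, \Phi)
\end{align*}
by conditioning on which of the $n_{\rm B}$ lines hosts ${\bf x}$ and then on the location of ${\bf x}$ along that line, and match the result against the expectation of $f$ under the candidate Palm distribution $\Phi^! \cup \Phi_{\bf x} \cup \{{\bf x}\}$.

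The decomposition rests on three ingredients. First, by exchangeability of the $n_{\rm B}$ generating points of the BPP on $\mathcal{D}$, the sum over lines contributes a factor $n_{\rm B}$ and may be reduced to a single distinguished line $L_\star$; the elementary reduced-Palm property of the BPP then implies that the remaining $n_{\rm B} - 1$ generating points form an independent BPP of the smaller size on $\mathcal{D}$, inducing the independent BLCP $\Phi^!$. Second, Slivnyak's theorem applied to the 1D PPP on $L_\star$ gives that conditioning on a BLCP point at ${\bf x}$ on that line returns the same 1D PPP independently, with an added atom at ${\bf x}$; this supplies the two remaining ingredients $\Phi_{\bf x}$ and $\{{\bf x}\}$. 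Third, the orientation of $L_\star$ must be shown to be uniformly distributed on $[0, \pi)$, independent of the rest.

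For the orientation step, I would disintegrate the uniform measure on $\mathcal{D}$ along the curve $\{(\theta, r) : r = x \cos\theta + y \sin\theta\}$ determined by ${\bf x} = (x, y)$, noting that the Jacobian of this reparameterization cancels against the arclength element along $L_\star$ that appears in the 1D-PPP intensity. The resulting angular density is constant on $[0, \pi)$, confirming uniform orientation. Inserting these three reductions into the Campbell--Mecke identity and invoking the characterization of Palm distributions by test functionals establishes the claimed distributional equality.

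I expect this last disintegration to be the delicate step, because the event $\{{\bf x} \in L(\theta, r)\}$ is a null event for the BPP on $\mathcal{D}$, so direct conditioning on it is not meaningful. The cleanest workaround is to keep the computation at the level of expectations of test functionals throughout, so that the conditioning on a null set is replaced by a density manipulation under the Campbell--Mecke integral; once the Jacobian cancellation is identified, the uniform law of $\Theta$ on $[0,\pi)$ falls out and the Palm characterization follows.
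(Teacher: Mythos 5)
The paper does not actually prove this lemma: it asserts it by analogy with the known Palm/Slivnyak characterization of the PLCP, replacing ``add an independent line'' by ``the remaining $n_{\rm B}-1$ lines form an independent BLP of reduced size.'' Your proposal therefore goes well beyond what the paper offers, and the route you choose is the right one: the Campbell--Mecke expansion of $\mathbb{E}\sum_{{\bf x}\in\Phi}f({\bf x},\Phi)$, exchangeability of the $n_{\rm B}$ generating points plus the reduced-Palm property of a BPP (which yields the independent $(n_{\rm B}-1)$-line BLCP $\Phi^!$), Slivnyak on the 1D PPP along the distinguished line, and a change of variables $(\theta,r,s)\mapsto(\theta,{\bf x})$ whose Jacobian is indeed $1$ because $(r,s)\mapsto{\bf x}$ is a rotation for fixed $\theta$. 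Your instinct to keep everything at the level of test functionals, rather than conditioning on the null event $\{{\bf x}\in L(\theta,r)\}$, is exactly how one makes the statement rigorous; the paper's formulation $\mathbb{P}(\Phi\in Y\mid{\bf x}\in\Phi)$ glosses over this. One caveat you should make explicit: the angular density of the distinguished line under the Palm measure is constant only on the set $\Theta_{\bf x}=\{\theta\in[0,\pi): |x\cos\theta+y\sin\theta|\le R\}$, since the generating point must lie in $\mathcal{D}=[0,\pi)\times[-R,R]$. For $\|{\bf x}\|\le R$ this set is all of $[0,\pi)$ and your conclusion of uniform orientation holds verbatim; for $\|{\bf x}\|>R$ the orientation is uniform only on the restricted arc $\Theta_{\bf x}$, a point the lemma's phrase ``randomly oriented line'' leaves ambiguous and that your disintegration would surface automatically. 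With that qualification your argument is complete and supplies the proof the paper omits.
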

The Palm perspective is necessary to characterize the statistics of the performance metrics conditioned on an event that a node, e.g., a transmitter or an \ac{AP} is located at a given point. For example, consider a network in which the location of the \ac{AP}s are modeled according to a BLCP. If a receiver located at the origin associates with a transmitter located at ${\bf x}$, then the interfering \ac{AP}s are located not only in the other $n_{\rm B} - 1$ lines but also on a line necessarily passing through ${\bf x}$. The applications of this concept will be evident in the next section where we employ the derived framework to analyze a wireless communication network. Prior to that, let us derive the \ac{PGFL} of the shifted and reduced point process by conditioning on the location of the nearest point from the origin.

\subsection{Probability Generating Functional}
Here, we characterize the \ac{PGFL} of the BLCP $\Phi$. The \ac{PGFL} of a point process $\Phi$ evaluated for a function $\nu$ is defined mathematically as the Laplace functional of $- \log \nu$ and is calculated as $\mathbb{E} \left[\prod\limits_{{\bf x} \in \Phi} \nu({\bf x})\right]$. In this paper, we are interested in isotropic functions that depend only on the distance of the points from the origin, i.e., we consider functions of the form $f(||{\bf x}||)$. 
\begin{definition}
Let ${\bf x}_1$ be the nearest point of a BLCP from $(0, r_0)$. Then, the shifted and reduced point process is defined as $\Phi' = \Phi - (0, r_0) \backslash \{{\bf x}_1\}$.
\end{definition}
The motivation for studying the properties of $\Phi'$ in the context of wireless networks is as follows; if the \ac{AP} locations are modeled as a BLCP, then $\Phi'$ represents the locations of the interfering \acp{AP} from the perspective of a user located at a distance $r_0$ from the origin and connected to an \ac{AP} located at a distance $||{\bf x}_1||$ from the user. The next theorem characterizes the PGFL for the shifted and reduced BLCP $\Phi'$.

\begin{theorem}
\label{theo:pgfl}
For a shifted and reduced BLCP $\Phi' = \Phi - (0, r_0) \backslash \{{\bf x}_1\}$ defined on a BLP with $n_{\rm B}$ lines generated within $\mathcal{B}((0,0), R)$, the PGFL of a function $f(r) = f(||{\bf x}||),$ conditioned on $d_1 = ||{\bf x}_1||$ is given as
    \begin{multline*}
    G(r_0, f(\cdot)) = \frac{1}{A_{\rm D}(r_0,d_1)} \iint\limits_{\mathcal{D}_{\rm B}(0, d_1)} \exp\left(-2\lambda\int_{\sqrt{d_1^2 - ( r_0 \cos\theta - r)^2}}^{\infty}\!1- f\left(\sqrt{y^2 + (r_0\cos{\theta}-r)^2}\,\right)\:{\rm d}y\right){\rm d}r {\rm d}\theta \times\\
    \left(\frac{1}{2\pi R}\right)^{n_{\rm B}-1} \Bigg(\;\,\iint\limits_{\mathcal{D}_{\rm B}(0, d_1)}  \exp\left(-2\lambda\int_{\sqrt{d_1^2 - ( r_0 \cos\theta - r)^2}}^{\infty}\!1- f\left(\sqrt{y^2 + (r_0\cos{\theta}-r)^2}\,\right)\:{\rm d}y\right) {\rm d}r {\rm d}\theta + \\ \iint\limits_{\mathcal{D} \backslash \mathcal{D}_{\rm B}(0, d_1)} \exp\left(-2\lambda\int_{0}^{\infty}\!1- f\left(\sqrt{y^2 + (r_0\cos{\theta}-r)^2}\,\right)\:{\rm d}y\right){\rm d}r {\rm d}\theta \Bigg)^{n_{\rm B}-1} .
    \end{multline*}
    where $d_1 = ||{\bf x}_1||$ is the distance to the nearest point of $\Phi - (0, r_0)$ from the origin. Consequently, the PGFL of $\Phi'$ is evaluated as $\mathbb{E}_{d_1}\left[G(r_0, f(\cdot))\right]$, where the distribution of $d_1$ is given by Corollary~\ref{cor:distdist}.
\end{theorem}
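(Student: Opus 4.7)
The plan is to exploit (i) the independence of the $n_{\rm B}$ lines of the underlying BLP, (ii) the independence of the 1D PPPs carried by them, and (iii) Slivnyak-type arguments to split the conditional PGFL into the contribution of the \emph{serving line} (the one containing the nearest point $\mathbf{x}_1$) and an $(n_{\rm B}-1)$-fold power coming from the remaining lines. Since, given $d_1=\|\mathbf{x}_1\|$, the lines are exchangeable, I would pick one line to be the serving line and treat the other $n_{\rm B}-1$ lines identically, then combine at the end.

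For each non-serving line $L_i$, conditioning on $d_1$ being the nearest distance to $(0,r_0)$ is equivalent to requiring its 1D PPP to have no point inside $\mathcal{B}((0,r_0),d_1)$, while leaving the PPP on the complement of that disk untouched. I would then split the integration over the uniform generating parameters $(\theta,r)\in\mathcal{D}$ into two sub-domains: $\mathcal{D}_{\rm B}(r_0,d_1)$, whose lines intersect the disk along a chord of half-length $\sqrt{d_1^{2}-(r_0\cos\theta-r)^{2}}$, and its complement, whose lines avoid the disk entirely. Parameterising a point on $L_i$ by its signed arc-length $y$ measured from the foot of the perpendicular from $(0,r_0)$, the Euclidean distance to $(0,r_0)$ is $\sqrt{y^{2}+(r_0\cos\theta-r)^{2}}$, and the PGFL of a 1D PPP on the allowed portion of $L_i$ equals the exponential of $-2\lambda\int_{\ell}^{\infty}\bigl(1-f(\sqrt{y^{2}+(r_0\cos\theta-r)^{2}})\bigr)\,\mathrm{d}y$, with $\ell=\sqrt{d_1^{2}-(r_0\cos\theta-r)^{2}}$ in the first sub-domain and $\ell=0$ in the second. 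Integrating over $\mathcal{D}$ and normalising by its area $2\pi R$ gives the per-line factor whose $(n_{\rm B}-1)$-th power yields the bracket in the second line of the claim.

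Turning to the serving line, by exchangeability and Slivnyak's theorem applied to the 1D PPP along $L_1$, conditioning on $\mathbf{x}_1$ at distance $d_1$ renders $(\theta_1,r_1)$ uniformly distributed on $\mathcal{D}_{\rm B}(r_0,d_1)$ (with normalisation $A_{\rm D}(r_0,d_1)$ from the theorem for the domain-band area) and leaves the residual PPP on $L_1\setminus\mathcal{B}((0,r_0),d_1)$ as an independent 1D PPP of the same intensity. Its PGFL contribution is therefore the same exponential integrand as above with $\ell=\sqrt{d_1^{2}-(r_0\cos\theta-r)^{2}}$, but now integrated only over $\mathcal{D}_{\rm B}(r_0,d_1)$ and divided by $A_{\rm D}(r_0,d_1)$; this is the prefactor on the first line of the claim. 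Multiplying the serving-line factor by the $(n_{\rm B}-1)$-th power of the non-serving factor produces the displayed expression, and averaging over $d_1$ via Corollary~\ref{cor:distdist} yields the unconditional PGFL of $\Phi'$.

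The main obstacle is the Palm-type step on the serving line: one must justify that, conditional on $d_1$ and on which line is the serving line, $(\theta_1,r_1)$ is genuinely uniform on $\mathcal{D}_{\rm B}(r_0,d_1)$ and that the residual PPPs on all $n_{\rm B}$ lines are mutually independent ordinary PPPs with the appropriate exclusion regions. This follows from Slivnyak applied to the 1D PPP on $L_1$ combined with a change-of-variables from (line parameters, position of the point on $L_1$ nearest to $(0,r_0)$) to (line parameters, $d_1$); verifying that the induced marginal of $(\theta_1,r_1)$ is flat on $\mathcal{D}_{\rm B}(r_0,d_1)$ and that cross-line conditioning factorises cleanly is where the bookkeeping is delicate, and the remaining computation is only routine once this decomposition is in hand.
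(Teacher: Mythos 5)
Your proposal is correct and follows essentially the same route as the paper: condition on $d_1$ so that each line carries its residual 1D PPP only outside $\mathcal{B}((0,r_0),d_1)$, split the generating set into the domain band $\mathcal{D}_{\rm B}$ (chord half-length $\sqrt{d_1^2-(r_0\cos\theta-r)^2}$) and its complement, treat the serving line separately as almost surely intersecting (hence averaged over $\mathcal{D}_{\rm B}$ with normalization $A_{\rm D}(r_0,d_1)$), and raise the per-line average over $\mathcal{D}$ to the power $n_{\rm B}-1$. The only cosmetic difference is that the paper reaches that $(n_{\rm B}-1)$-th power by writing an explicit binomial sum over how many of the remaining lines intersect the disk and then collapsing it with the binomial theorem, which is identical to your direct average over $\mathcal{D}$.
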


\begin{proof}
Let $f(r)$ be a positive, measurable, monotonic, and bounded function for the first part of the proof. Here we will find the PGFL of the point process $\Phi'\cap \mathcal{B}((0,0), t)$. In other words, the shifted and reduced point process of interest is restricted to the disk centered at the origin with radius $t$. The theorem follows from the monotone convergence theorem with $t \to \infty$. Recall that for a PPP of intensity $\lambda$ and a function $\nu(\cdot)$, the PGFL is~\cite{stoyan2013stochastic}:
\begin{align}
    G_{\text{PPP}} = \exp\left(-\lambda \int_{\mathbb{R}^d} \left(1 - \nu({\bf x})\right) {\rm d}{\bf x}\right).
    \label{eq:PPPPGFL}
\end{align}
Next, note that the distance of a BLP line $L_i$ corresponding to the generating point $(r, \theta)$ in $\mathcal{D}$ from the origin in $\mathbb{R}^2$ is $|r_0 \cos \theta - r|$. A point located at a distance $y$ from the perpendicular projection of $(0,r_0)$ to $L_i$, has a distance $\sqrt{y^2 + (r_0 \cos \theta - r)^2}$ from $(0,r_0)$. The length of the chord is $2\sqrt{t^2 - (r_0 \cos \theta - r)^2}$ when $t \geq |r_0 \cos \theta - r|$.
\begin{align*}
    G_1(r_0,r,\theta) &= \lim_{t\to \infty} \exp\left(-2\lambda\int_{0}^{\sqrt{t^2 - (r_0 \cos \theta - r)^2}}\!1- f\left(\sqrt{y^2 + (r_0\cos{\theta}-r)^2}\,\right)\:dy\right) \nonumber \\
    &\overset{(a)}{=} \exp\left(-2\lambda\int_{0}^{\infty}\!1- f\left(\sqrt{y^2 + (r_0\cos{\theta}-r)^2}\,\right)\:dy\right),
\end{align*}
where step (a) is due to the monotone convergence theorem.
In $\Phi'$, each line can be either of two types (a) intersecting with $\mathcal{B}((0,0), d_1)$ and (b) non-intersecting with $\mathcal{B}((0,0), d_1)$, where $d_1$ is the distance from the origin to the nearest point of $\Phi- (0, r_0)$. For a particular $r_0$ and $d_1$, a line is intersecting if $|r_0\cos{\theta} - r| \geq d_1$ and non-intersecting otherwise. Thus, we can write the conditional PGFL of the intersecting and non-intersecting lines after averaging over $(r, \theta) \in \mathcal{D}$ as
 \begin{equation}\label{eqivsni}
  \begin{split}
    G_{\rm I}(r_0,d_1) &= \frac{1}{A_{\rm D}(r_0,d_1)} \iint\limits_{\mathcal{D}_{\rm B}(0, d_1)} \exp\left(-2\lambda\int_{\sqrt{d_1^2 - ( r_0 \cos\theta - r)^2}}^{\infty}\!1- f\left(\sqrt{y^2 + (r_0\cos{\theta}-r)^2}\,\right)\:dy\right) {\rm d}r {\rm d}\theta, \\
    G_{\rm NI}(r_0,d_1) &= \frac{1}{(2\pi R - A_{\rm D}(r_0, d_1))} \iint\limits_{\mathcal{D} \backslash \mathcal{D}_{\rm B}(0, d_1)} \exp\left(-2\lambda\int_{0}^{\infty}\!1- f\left(\sqrt{y^2 + (r_0\cos{\theta}-r)^2}\,\right)\:dy\right) {\rm d}r {\rm d}\theta.
  \end{split}
 \end{equation}
 
Next, note that the line containing the nearest point (at a distance $d_1$) of the BLCP intersects the disk $\mathcal{B}((0,0), d_1)$ almost surely. Whereas, the other $n_{\rm B} - 1$ lines may or may not intersect the disk depending on their generating point. Accordingly, the PGFL for $\Phi - (0, r_0)$ is evaluated as
 \begin{align}
    G(r_0,f(\cdot)) &\overset{(a)}{=}
    \underbrace{G_{\rm I}(r_0,d_1)}_{{\rm T}_1} \sum_{n=0}^{n_{\rm B}-1} \binom{n_{\rm B}-1}{n} \Biggl[\underbrace{\left(\frac{A_{\rm D}(r_0, d_1)}{2 \pi R}\times G_{\rm I}(r_0,d_1)\right)^n}_{{\rm T}_2} \nonumber \\
    & \hspace*{6cm} \times \underbrace{\left(\left(1 - \frac{A_{\rm D}(r_0, d_1)}{2 \pi R}\right)\times G_{\rm NI}(r_0,d_1)\right)^{n_{\rm B} - n -1}}_{{\rm T}_3}\Biggr] \nonumber \\
    &\overset{(b)}{=} G_{\rm I}(r_0,d_1) \left(\frac{1}{2\pi R}\right)^{n_{\rm B}-1}  \left(A_{\rm D}(r_0, d_1) \; G_{\rm I}(r_0, d_1) + \left({2\pi R - A_{\rm D}(r_0, d_1)}\right) \; G_{\rm NI}(r_0, d_1)\right)^{n_{\rm B}  - 1}. \label{eq:cpgfl}
\end{align} 
In step (a), the term ${\rm T}_1$ corresponds to the line containing the nearest point (recall the Palm perspective discussed in the previous sub-section). The term ${\rm T}_2$ corresponds to the probability that a set of $n$ lines intersect the disk and the conditional PGFL given that the lines intersect the disk. The term ${\rm T}_3$ corresponds to the probability that a set of $n_{\rm B} - n - 1$ lines do not intersect the disk and the conditional PGFL given that the lines do not intersect the disk. The statement of the theorem follows from the above.
\end{proof}


\section{Application - Transmission Success Probability in Wireless Networks}
\label{sec:app}
In wireless networks, several performance metrics are studied using the {\it transmission success probability}. It is the \ac{CCDF} of the \ac{SINR} over the fading coefficients and the spatial process governing the locations of the \ac{AP}s. In this section, first we define this metric and then we characterize it using the results derived in the previous sections.

\subsection{Success Probability - Definition}
Let $\Phi$ be a point process (not necessarily a BLCP) consisting of points $\{{\bf x}_i\} \subset \mathbb{R}^2$, $i = 1, 2, \ldots$. Consider a separate test point located at the origin. For convenience, let us assume that the points of $\Phi$ are ordered according to their distance from the origin, i.e., $||{\bf x_1}|| \leq ||{\bf x_2}|| \leq \ldots$. If the points of $\Phi$ emulate the locations of the \ac{AP}s of a wireless network relative to a receiver located at the origin, the receiver connects to the \ac{AP} located at ${\bf x_1}$. This is known as the nearest-AP association. 

Each wireless link experiences fluctuations of the received power due to constructive and destructive superposition of multiple reflecting paths in the propagation environment. This is termed small-scale fading. Classically, the impact of this is taken into account by multiplying the received signal with a random variable $h$ with exponential distribution with parameter 1~\cite{haenggi2009stochastic}. For a path-loss exponent $\alpha$, the \ac{SINR} $\xi (r_0)$ is
\begin{align}
    \xi (r_0) = \frac{\xi_0||{\bf x_1}||^{-\alpha}h_1}{1 + \xi_0\sum_{{\bf x} \in \Phi\backslash \{{\bf x_1}\}} ||{\bf x}||^{-\alpha}h_{\bf x}},
\end{align}
where $\xi_0$ is a constant that takes into account the transmit power, AWGN noise, path-loss constant, as well as the transmit and receive antenna gains. We assume that this parameter is the same for each transmit node. Typically, each $h_{\bf x}$ is independent of each other and identically distributed~\cite{haenggi2009stochastic}. For the ease of notation, let us represent $||{\bf x_i}||$ by $d_{\bf x}$. Now, the transmission success probability at a threshold of $\gamma$ is defined as the CCDF of $\xi (r_0)$~\cite{haenggi2009stochastic}:
\begin{align}
    p_{\rm S}(\gamma) = \mathbb{P}[\xi (r_0) > \gamma].
\end{align}
This represents the probability that an attempted transmission by the nearest \ac{AP} located at ${\bf x_1}$ is decoded successfully by the receiver at the origin. In what follows, we refer to the transmission success probability as {\it success probability}.

\subsection{Success probability for BLCP Locations of \acp{AP}}
The \ac{BLCP} is a relevant model for studying deployment locations of \ac{AP}s along the streets of a city, or e.g., along alleyways of industrial warehouses. As discussed before, the analysis of the network performance depends on the location of the test point\footnote{It may be noted that the wireless network performance analyzed at the test point referred here corresponds to the performance evaluated at the typical point at a location ${\bf x}$ of a stationary receiver point process.}. However, since the BLP is isotropic, we may infer that its properties as seen from a point only depend on its distance from the center and not its orientation. Accordingly, the test point can be considered to be located along the x-axis, i.e., we analyze the performance from the perspective of a test point located at $(0,r_0)$, without loss of generality. Equivalently, we can consider the receiver at the origin and study the statistics of the shifted point process $\Phi - (0, r_0)$. Let us assume that the receiver establishes a connection with its nearest \ac{AP} (i.e., the \ac{AP} located at the nearest BLCP point from the receiver), consequently experiencing interference from all other APs. In such a case, the success probability is characterized by the following result.
\begin{theorem}
For the network where locations of the \acp{AP} are modeled as BLCP, the success probability for a receiver located at $(0, r_0)$ is given by
\begin{align*}
    p_{\rm S}(\gamma) = \mathbb{E}_{d_1} \left[\exp \left( \frac{-\gamma}{\xi_0 {d_1}^{-\alpha}} \right) G\left(r_0, \frac{1}{1 + \frac{\gamma r^{-\alpha}} {d_1^{-\alpha}}} \right)\right],
\end{align*}
where $G(r_0, f(\cdot))$ is given by Theorem~\ref{theo:pgfl}.
\end{theorem}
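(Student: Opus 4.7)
The plan is to follow the standard stochastic-geometry recipe for computing coverage under Rayleigh fading, while invoking the conditional PGFL from Theorem~\ref{theo:pgfl} at the key step. First I would condition on the nearest-AP distance $d_1 = \|{\bf x}_1\|$, which governs both the useful signal power and (through the Palm structure described in Section~\ref{sec:BLCP}) the law of the interfering point process $\Phi' = \Phi - (0,r_0)\setminus\{{\bf x}_1\}$. Writing the event $\{\xi(r_0) > \gamma\}$ explicitly as $\{h_1 > \gamma\, d_1^{\alpha}\,(1+I)/\xi_0\}$ with $I = \xi_0 \sum_{{\bf x}\in\Phi'} \|{\bf x}\|^{-\alpha} h_{\bf x}$, and using that $h_1 \sim \mathrm{Exp}(1)$, I would convert this tail probability into an expectation of an exponential, yielding
\begin{equation*}
\mathbb{P}[\xi(r_0) > \gamma \mid d_1] = \exp\!\left(-\tfrac{\gamma}{\xi_0 d_1^{-\alpha}}\right)\;\mathbb{E}\!\left[\exp\!\left(-\tfrac{\gamma}{\xi_0 d_1^{-\alpha}}\,I\right)\Big|\,d_1\right].
\end{equation*}

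Next I would handle the Laplace-transform factor. Because the $h_{\bf x}$ are i.i.d.\ $\mathrm{Exp}(1)$ and independent of $\Phi'$, I can first take the expectation over the fading inside the product, giving for each interferer at distance $r$ a factor $1/(1+\gamma r^{-\alpha}/d_1^{-\alpha})$. What remains is precisely the PGFL of the shifted and reduced BLCP evaluated at the function $f(r) = 1/(1+\gamma r^{-\alpha}/d_1^{-\alpha})$. At this point I invoke Theorem~\ref{theo:pgfl}, which supplies $G(r_0, f(\cdot))$ conditional on $d_1$, and conclude by de-conditioning:
\begin{equation*}
p_{\rm S}(\gamma) = \mathbb{E}_{d_1}\!\left[\exp\!\left(-\tfrac{\gamma}{\xi_0 d_1^{-\alpha}}\right)\, G\!\left(r_0,\,\tfrac{1}{1+\gamma r^{-\alpha}/d_1^{-\alpha}}\right)\right],
\end{equation*}
with the distribution of $d_1$ provided by Corollary~\ref{cor:distdist}.

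The main obstacle is not the algebra but the careful bookkeeping of the Palm structure of the BLCP. Specifically, one must ensure that the line carrying ${\bf x}_1$ is treated separately from the remaining $n_{\rm B}-1$ lines, because on the former there is an independent 1D PPP of interferers whose nearest point to the origin lies at distance $d_1$, while each of the other lines is either intersecting or non-intersecting with $\mathcal{B}((0,0),d_1)$ with the clipped-domain probabilities $A_{\rm D}(r_0,d_1)/(2\pi R)$ and $1-A_{\rm D}(r_0,d_1)/(2\pi R)$. This bookkeeping is exactly what Theorem~\ref{theo:pgfl} already absorbs into $G(r_0,f(\cdot))$ through the three-term decomposition ${\rm T}_1, {\rm T}_2, {\rm T}_3$ in \eqref{eq:cpgfl}, so once that theorem is applied with $f(r)=1/(1+\gamma r^{-\alpha}/d_1^{-\alpha})$ the result follows directly, and no further manipulation of the integrals over $\mathcal{D}_{\rm B}(0,d_1)$ and its complement is required.
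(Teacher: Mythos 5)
Your proposal is correct and follows essentially the same route as the paper's proof: condition on $d_1$, exploit the exponential distribution of $h_1$ to turn the tail probability into a Laplace transform of the interference, average out the i.i.d.\ fading to obtain the per-interferer factor $1/(1+\gamma d_{\bf x}^{-\alpha}/d_1^{-\alpha})$, recognize the remaining Palm expectation as $G(r_0, f(\cdot))$ from Theorem~\ref{theo:pgfl}, and de-condition over $d_1$ via Corollary~\ref{cor:distdist}. The paper additionally writes out the explicit integral form of $G$ for this choice of $f$, but that is not needed to establish the stated identity.
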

\begin{proof}
The success probability can be evaluated as
\begin{align}
    p_{\rm S}(\gamma) &= \mathbb{P}[\xi (r_0) > \gamma] = \mathbb{P}\left[\frac{\xi_0 {d_1}^{-\alpha} h_1} {1 +  \xi_0 \sum_{{\bf x} \in \Phi^{\prime}} {d_{\bf x}}^{-\alpha} {h_{\bf x}}} > \gamma \right] = \mathbb{P}\left[h_1 > \frac{\gamma \xi_0 \sum_{{\bf x} \in \Phi^{\prime}} {d_{\bf x}}^{-\alpha} {h_{\bf x}} + \gamma} {\xi_0 d_1^{-\alpha}}\right] \nonumber \\
    & =  \mathbb{E} \left[\exp\left(\frac{-\gamma \xi_0 \sum_{{\bf x} \in \Phi^{\prime}} {d_{\bf x}}^{-\alpha} {h_{\bf x}} - \gamma} {\xi_0 d_1^{-\alpha}}\right)\right]
    = \mathbb{E}_{d_1} \left[\exp \left(\frac{-\gamma}{\xi_0 d_1^{-\alpha}} \right) \mathbb{E}^!_{{\bf x}_1,h_{\bf x}} \left[ \exp \left(\frac{-\gamma  \sum_{{\bf x} \in \Phi'}d_{\bf x}^{-\alpha} h_{\bf x}}{d_1^{-\alpha}}\right) \right] \right]. \label{eq:intermediate}
\end{align}
Here, $\mathbb{E}^!_{{\bf x}_1}$ refers to the expectation taken with respect to the Palm probability of the shifted and reduced point process, i.e., conditioned on a point of $\Phi - (0,r_0)$ being located at ${\bf x_1}$ and then removing it. The 1st term, $\exp\left(\frac{-\gamma}{\xi_0 d_1^{-\alpha}} \right)$, takes into account the impact of the noise and thus only depends on $d_1$ and $N_0$. The second term $\mathbb{E}^!_{{\bf x}_1,h_{\bf x}} \left[ \cdot\right]$, takes into account the impact of the interference and can be further simplified
\begin{align*}
    \mathbb{E}^!_{{\bf x}_1, h_{\bf x}} &\left[ \exp \left(\frac{-\gamma \sum_{{\bf x} \in \Phi^{\prime}} d_{\bf x}^{-\alpha} h_{\bf x}}{d_1^{-\alpha}}\right)\right] = \mathbb{E}^!_{{\bf x}_1} \Bigg[\mathbb{E}_{h_1} \Bigg[\prod_{{\bf x} \in \Phi^{\prime}} \exp\bigg(\frac{-\gamma d_{\bf x}^{-\alpha} h_{\bf x}} {d_1^{-\alpha}}\bigg)\Bigg]\Bigg]\\
    & = \mathbb{E}^!_{{\bf x}_1} \left[\prod_{{\bf x} \in \Phi^{\prime}} \left[\mathbb{E}_{h_1} \exp \left(\frac{-\gamma d_{\bf x}^{-\alpha} h_{\bf x}} {d_1^{-\alpha}}\right)\right]\right]= \mathbb{E}^!_{{\bf x}_1} \left[\prod_{{\bf x} \in \Phi^{\prime}} \frac{1}{1 + \frac{\gamma d_{\bf x}^{-\alpha}}{d_1^{-\alpha}}} \right] = G\left(r_0, \frac{1}{1 + \frac{\gamma r^{-\alpha}} {d_1^{-\alpha}}}\right) \\
    &= \frac{1}{A_{\rm D}(r_0, d_1)} \iint\limits_{ \mathcal{D}_{\rm B}(0, d_1)} \exp\left(-2\lambda\int_{\sqrt{d_1^2 - ( r_0 \cos\theta - r)^2}}^{\infty}\!\left(\frac{\gamma \left[y^2 + (r_0 \cos\theta - r)^2 \right]^{-\frac{\alpha}{2}}} {d_1^{-\alpha} + \gamma \left[y^2 + (r_0 \cos\theta - r)^2 \right]^{-\frac{\alpha}{2}}} \right)\:{\rm d}y\right){\rm d}r {\rm d}\theta \\
    &\left(\frac{1}{2\pi R}\right)^{n_{\rm B}-1} \Bigg(\iint\limits_{ \mathcal{D}_{\rm B}(0, d_1)}  \exp\left(-2\lambda\int_{\sqrt{d_1^2 - ( r_0 \cos\theta - r)^2}}^{\infty}\!\left(\frac{\gamma \left[y^2 + (r_0 \cos\theta - r)^2 \right]^{-\frac{\alpha}{2}}} {d_1^{-\alpha} + \gamma \left[y^2 + (r_0 \cos\theta - r)^2 \right]^{-\frac{\alpha}{2}}} \right)\:{\rm d}y\right){\rm d}r {\rm d}\theta +\\ 
    &\iint\limits_{ \mathcal{D} \backslash \mathcal{D}_{\rm B}(0, d_1)} \exp\left(-2\lambda\int_{0}^{\infty}\!\left(\frac{\gamma \left[y^2 + (r_0 \cos\theta - r)^2 \right]^{-\frac{\alpha}{2}}} {d_1^{-\alpha} + \gamma \left[y^2 + (r_0 \cos\theta - r)^2 \right]^{-\frac{\alpha}{2}}} \right)\:{\rm d}y\right){\rm d}r {\rm d}\theta\Bigg)^{n_{\rm B}-1}.
\end{align*}
Employing the above in \eqref{eq:intermediate} completes the proof.
\end{proof}

\begin{corollary}{}
 For $\alpha=2$ and $n_{\rm B} = 1$, the PGFL of $\Phi'$ for the function $\frac{1}{1 + \frac{\gamma r^{-2}} {d_1^{-2}}}$ is given as
    \begin{multline*}
        G\left(r_0, \frac{1}{1 + \frac{\gamma r^{-2}} {d_1^{-2}}}\right) = \frac{1}{A_{\rm D}(r_0, d_1)} \iint\limits_{ \mathcal{D}_{\rm B}(0, d_1)} \exp\Biggl(-2\lambda \frac{\gamma d_1^2}{\sqrt{\gamma d_1^2 +(r_0 \cos\theta - r)^2}} \\
        \times\arctan{\left(\sqrt{\frac{\gamma d_1^2 + (r_0 \cos\theta - r)^2}{d_1^2-(r_0 \cos\theta - r)^2}}\right)} \Bigg){\rm d}r {\rm d}\theta.
    \end{multline*}
\end{corollary}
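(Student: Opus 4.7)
The plan is to specialize Theorem~\ref{theo:pgfl} to the function $f(r) = 1/(1 + \gamma r^{-2}/d_1^{-2})$ with $n_{\rm B} = 1$, and then evaluate the resulting one-dimensional integral in closed form. The key observation is that when $n_{\rm B}=1$, the exponent $n_{\rm B}-1$ of the second factor in the statement of Theorem~\ref{theo:pgfl} equals zero, so that whole factor collapses to $1$, and only the prefactor
\[
\frac{1}{A_{\rm D}(r_0,d_1)} \iint\limits_{\mathcal{D}_{\rm B}(0,d_1)} \exp\!\left(-2\lambda\int_{\sqrt{d_1^2-(r_0\cos\theta-r)^2}}^{\infty}\!\!\bigl[1-f(\sqrt{y^2+(r_0\cos\theta-r)^2})\bigr]\,{\rm d}y\right){\rm d}r\,{\rm d}\theta
\]
survives. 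So the proof reduces to simplifying $1-f$ and carrying out the inner $y$-integral.

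First I would rewrite
\[
1 - \frac{1}{1 + \gamma r^{-2}/d_1^{-2}} \;=\; \frac{\gamma\, d_1^{2}}{r^{2} + \gamma\, d_1^{2}},
\]
and then substitute $r^{2}\mapsto y^{2}+(r_0\cos\theta - r)^{2}$ (abusing notation between the integration variable $r$ and the radial distance, just as in the statement). Abbreviating $a := r_0\cos\theta - r$, the integrand becomes $\gamma d_1^{2}/\bigl(y^{2}+a^{2}+\gamma d_1^{2}\bigr)$, which is a standard Cauchy-type kernel in $y$.

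Next I would evaluate the $y$-integral. Setting $b := \sqrt{a^{2}+\gamma d_1^{2}}$,
\[
\int_{\sqrt{d_1^{2}-a^{2}}}^{\infty}\frac{\gamma d_1^{2}}{y^{2}+b^{2}}\,{\rm d}y \;=\; \frac{\gamma d_1^{2}}{b}\left[\frac{\pi}{2} - \arctan\!\left(\frac{\sqrt{d_1^{2}-a^{2}}}{b}\right)\right].
\]
Applying the identity $\tfrac{\pi}{2}-\arctan(x)=\arctan(1/x)$ for $x>0$ (which is valid here because $d_1\geq|a|$ throughout $\mathcal{D}_{\rm B}(0,d_1)$ so the argument is nonnegative), this becomes
\[
\frac{\gamma d_1^{2}}{\sqrt{\gamma d_1^{2}+a^{2}}}\,\arctan\!\left(\sqrt{\frac{\gamma d_1^{2}+a^{2}}{d_1^{2}-a^{2}}}\right).
\]
Re-substituting $a = r_0\cos\theta - r$ and inserting back into the exponential yields exactly the stated expression; the outer $(1/A_{\rm D})\iint_{\mathcal{D}_{\rm B}(0,d_1)}$ is copied from the $n_{\rm B}=1$ reduction of Theorem~\ref{theo:pgfl}.

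There is no real obstacle here: the argument is essentially bookkeeping plus one elementary integral. The only subtle point worth flagging is the domain condition $|a|\le d_1$ on $\mathcal{D}_{\rm B}(0,d_1)$, which is what guarantees the lower limit $\sqrt{d_1^{2}-a^{2}}$ is real and the arctan identity applies with the correct branch. Everything else is an immediate specialization of Theorem~\ref{theo:pgfl}.
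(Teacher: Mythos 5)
Your proposal is correct and matches the paper's (implicit) derivation: the corollary is just the $n_{\rm B}=1$, $\alpha=2$ specialization of Theorem~\ref{theo:pgfl} with the inner Cauchy-kernel integral $\int_{\sqrt{d_1^2-a^2}}^{\infty}\gamma d_1^2/(y^2+a^2+\gamma d_1^2)\,\mathrm{d}y$ evaluated via $\tfrac{\pi}{2}-\arctan(x)=\arctan(1/x)$, exactly as you do. Your remark that $|r_0\cos\theta-r|\le d_1$ on $\mathcal{D}_{\rm B}$ justifies the branch choice is a worthwhile detail the paper leaves unstated.
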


One important metric of interest, especially for broadband applications, is the complementary CDF of the data rate. Let us recall that according to Shannon's formula, the instantaneous channel capacity $C$ is $C = W\log_2(1 + \xi)$. Now, for a packet deadline of $T$ and a file size $b$ for low-latency communications, the communication rate (in terms of bits per second) should be sustained at $C_0 = \frac{b}{T}$ or higher.
\begin{corollary}{}
The probability that the instantaneous channel capacity $C$ is $C_0$ or higher can be evaluated using the success probability as
    \begin{align*}
    \mathbb{P}\left(C > C_0\right) &= \mathbb{P}\left(W\log(1+\xi (r_0)) > C_0\right) = \mathbb{P}\left(\xi (r_0) > 2^{\frac{C_0}{W}} - 1\right) = \mathbb{P}\left(\xi (r_0) > 2^{\frac{b}{TW}} - 1\right) = p_{\rm S}(2^{\frac{b}{TW}} - 1) .
    \end{align*} 
\end{corollary}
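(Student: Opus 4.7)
The plan is to prove this corollary by a straightforward algebraic chain of equivalences, invoking only Shannon's capacity formula and the definition of the success probability $p_{\rm S}(\gamma)$ established earlier. There is no stochastic-geometric content left to unpack here: the BLCP analysis and the evaluation of the \ac{PGFL} have already done the heavy lifting inside $p_{\rm S}(\gamma)$, and the corollary merely re-expresses the rate outage event in terms of an \ac{SINR} outage event.

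First, I would substitute Shannon's formula $C = W\log_2(1+\xi(r_0))$ into $\mathbb{P}(C > C_0)$. Since $W\log_2(1+z)$ is strictly increasing in $z$ for $z > -1$, the event $\{W\log_2(1+\xi(r_0)) > C_0\}$ is equivalent (almost surely, since $\xi(r_0) \geq 0$) to the event $\{\xi(r_0) > 2^{C_0/W} - 1\}$, giving the second equality. Next, I would plug in the rate requirement $C_0 = b/T$ coming from the low-latency constraint that a file of size $b$ must be delivered within deadline $T$, which yields the threshold $2^{b/(TW)} - 1$. Finally, I would invoke the definition $p_{\rm S}(\gamma) = \mathbb{P}[\xi(r_0) > \gamma]$ with $\gamma = 2^{b/(TW)} - 1$ to conclude.

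There is no real obstacle in this proof; the only point worth flagging is the implicit assumption that the bandwidth $W$ is fixed and deterministic, and that the Shannon capacity is achieved (or at least used as the benchmark rate). One could also note, as a sanity check, that $2^{b/(TW)} - 1 > 0$ whenever $b, T, W > 0$, so the threshold is a valid \ac{SINR} target and the expression $p_{\rm S}(2^{b/(TW)} - 1)$ can be evaluated directly from the theorem preceding the corollary. Since all the analytical machinery for $p_{\rm S}(\cdot)$ has already been developed, the corollary then follows immediately.
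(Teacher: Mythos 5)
Your proof is correct and follows exactly the same route as the paper, which establishes the corollary by the same chain of equalities: substituting Shannon's formula, inverting the monotone map $\xi \mapsto W\log_2(1+\xi)$, inserting $C_0 = b/T$, and applying the definition $p_{\rm S}(\gamma) = \mathbb{P}[\xi(r_0) > \gamma]$. Your added remarks on monotonicity and the positivity of the threshold are harmless elaborations of what the paper leaves implicit.
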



\section{Application - Meta Distribution of the SINR in BLCP} 
\label{sec:meta}
Although the success probability is a useful metric for planning a wireless network and tuning network parameters, it only provides an average view of the network across all possible realizations of $\Phi$. This inhibits the derivation of a fine-grained view into the network. In this regard, the meta distribution, i.e., the distribution of the success probability conditioned on $\Phi$ provides a framework to study the same~\cite{haenggi2021meta1,haenggi2021meta2}. The conditional success probability, i.e., $P_s(\gamma) = \mathbb{P}(\xi (r_0) \geq \gamma | \Phi)$ is a random variable due to the random $\Phi$. It is precisely this random variable whose statistics we want to study. Its CCDF, called the meta distribution of the SINR, is given as
\begin{align}
    \mathcal{P}_{\rm M}(\gamma,\beta) = \mathbb{P}\left(P_s(\gamma) \geq \beta\right) = \mathbb{P}\left(\mathbb{P}(\xi (r_0) \geq \gamma \mid \Phi) \geq \beta\right).
\end{align}
which is a function of two parameters $\gamma \geq 0$ and $0 \leq \beta \leq 1$. In addition, consider another important aspect of wireless networks --- not all transmitters transmit simultaneously but are controlled by an {\it access scheme}. In particular, let us assume a simple ALOHA access scheme wherein, when the connected \ac{AP} transmits, each interfering \ac{AP} transmits with a probability $p$~\cite{abramson1970aloha}. Thus, each interference term is weighted by the probability of the corresponding node transmitting. Let the set of locations of the interfering nodes be denoted by $\mathcal{C} \subset \Phi'$. In this scheme, the conditional success probability can be obtained similarly to the success probability as
\begin{align*}
    P_s(\gamma) &= \mathbb{P} \left(\xi (r_0) \geq \gamma\; |\; \Phi\right) = \mathbb{P}\left[\frac{\xi_0 {d_1}^{-\alpha} h_1} {1 +  \xi_0 \sum_{{\bf x} \in \Phi^{\prime}} {h}_{\bf x} d_{\bf x}^{-\alpha}\textbf{1}({\bf x} \in \mathcal{C})} \geq \gamma\; |\; \Phi \right]\\
    &\hspace*{-0.7cm}= \mathbb{P}\left[h_1 > \frac{\gamma + \gamma \xi_0 \sum_{{\bf x} \in \Phi^{\prime}} {h}_{\bf x} d_{\bf x}^{-\alpha} \textbf{1}({\bf x} \in \mathcal{C})} {\xi_0 d_1^{-\alpha}}\right]
    \overset{(a)}{=} \mathbb{E}_{h_{\bf x}} \left[\exp\left(\frac{-\gamma - \gamma \xi_0 \sum_{{\bf x} \in \Phi^{\prime}} {h}_{\bf x} d_{\bf x}^{-\alpha} \textbf{1}({\bf x} \in \mathcal{C})} {\xi_0 d_1^{-\alpha}}\right)\right]\\
    &\hspace*{-0.7cm}= \exp{\left(\frac{-\gamma}{\xi_0 d_1^{-\alpha}}\right)} \mathbb{E}_{h_{\bf x}} \left[\exp\left(\frac{-\gamma \xi_0 \sum_{{\bf x} \in \Phi^{\prime}} {h}_{\bf x} d_{\bf x}^{-\alpha} \textbf{1}({\bf x} \in \mathcal{C})} {\xi_0 d_1^{-\alpha}}\right)\right]\\
    &\hspace*{-0.7cm}= \exp{\left(\frac{-\gamma}{\xi_0 d_1^{-\alpha}}\right)} \left(\prod_{{\bf x} \in \Phi^{\prime}} p \mathbb{E}_{h_{\bf x}} \exp\left(\frac{-\gamma \xi_0 d_{\bf x}^{-\alpha} {h}_{z}} {\xi_0 d_1^{-\alpha}}\right) + 1 - p\right)
    \overset{(b)}{=} \exp{\left(\frac{-\gamma}{\xi_0 d_1^{-\alpha}}\right)} \left( \prod_{{\bf x} \in \Phi^{\prime}} \frac{p}{1+\frac{\gamma d_{\bf x}^{-\alpha}}{d_1^{-\alpha}}} +1 - p \right).
\end{align*}
Step (a) is due to the exponential distribution of $h_1$. Step (b) follows from the Laplace transform of the exponentially distributed $h_{\bf x}$. In general, directly deriving the distribution of the random variable $P_s(\gamma)$ is most likely impossible. The standard approach to circumvent this challenge is by first deriving its moments and then transform them to the distribution~\cite{wang2022hausdorff}.
\begin{theorem}
\label{theo:mdmoments}
The $b-$th moment of $P_s(\gamma)$ conditioned on $d_1$ for any $b \in \mathbb{C}$ is given as
\begin{align*}
   M_b(d_1) &= \exp{\left(\frac{-b\gamma}{\xi_0 d_1^{-\alpha}}\right)} \frac{1}{A_{\rm D}(r_0, d_1)} \iint\limits_{\mathcal{D}_{\rm B}(0, d_1)} \exp\left(-2\lambda\int_{\sqrt{d_1^2 - l^2}}^{\infty}\! 1 - \left(\frac{pd_1^{-\alpha}} {d_1^{-\alpha} + \gamma \left[y^2 + l^2 \right]^{-\frac{\alpha}{2}}} +1-p\right)^b\:{\rm d}y\right) \nonumber \\
   & {\rm d}r {\rm d}\theta \times \left(\frac{1}{2\pi R}\right)^{n_{\rm B}-1} \Bigg(\;\,\iint\limits_{ \mathcal{D}_{\rm B}(0, d_1)}  \exp\left(-2\lambda\int_{\sqrt{d_1^2 - l^2}}^{\infty}\!1 - \left(\frac{pd_1^{-\alpha}} {d_1^{-\alpha} + \gamma \left[y^2 + l^2 \right]^{-\frac{\alpha}{2}}} +1-p\right)^b\:dy\right) \nonumber \\ 
   &{\rm d}r {\rm d}\theta + \iint\limits_{\mathcal{D} \backslash  \mathcal{D}_{\rm B}(0, d_1)} \exp\left(-2\lambda\int_{0}^{\infty}\!1 - \left(\frac{pd_1^{-\alpha}} {d_1^{-\alpha} + \gamma \left[y^2 + l^2 \right]^{-\frac{\alpha}{2}}} +1-p\right)^b\:{\rm d}y\right){\rm d}r {\rm d}\theta\Bigg)^{n_{\rm B}-1},
\end{align*}
where $l=r_0 \cos\theta - r$. Taking an expectation over $d_1$ (see Corollary~\ref{cor:distdist}) results in the unconditioned $b-$th moment.
\end{theorem}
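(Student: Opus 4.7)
\medskip

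\noindent\textbf{Proof Proposal.} The plan is to recognize that, once the noise factor is pulled out, the $b$-th moment of $P_s(\gamma)$ conditioned on $d_1$ is nothing but the PGFL of the shifted and reduced BLCP $\Phi'$ evaluated at a suitably chosen function, and then to invoke Theorem~\ref{theo:pgfl} directly.

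First I would start from the expression for $P_s(\gamma)$ derived immediately before the theorem, namely
\begin{align*}
P_s(\gamma) = \exp\!\left(\frac{-\gamma}{\xi_0 d_1^{-\alpha}}\right)\!\left(\prod_{{\bf x}\in\Phi'}\frac{p}{1+\gamma d_{\bf x}^{-\alpha}/d_1^{-\alpha}} + 1 - p\right),
\end{align*}
and raise it to the power $b$. Since the exponential noise factor depends only on $d_1$, it pulls out of the expectation over the point process, giving
\begin{align*}
M_b(d_1) = \exp\!\left(\frac{-b\gamma}{\xi_0 d_1^{-\alpha}}\right)\,\mathbb{E}^{!}_{{\bf x}_1}\!\left[\prod_{{\bf x}\in\Phi'}\left(\frac{p}{1+\gamma d_{\bf x}^{-\alpha}/d_1^{-\alpha}}+1-p\right)^{b}\,\Bigg|\,d_1\right].
\end{align*}
Here I use the Palm distribution of $\Phi-(0,r_0)$ conditioned on the nearest point being at distance $d_1$, as in the derivation of $p_{\rm S}(\gamma)$.

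Next I would identify the remaining expectation as a PGFL. Define the isotropic test function
\begin{align*}
f_b(r) \;\triangleq\; \left(\frac{p\,d_1^{-\alpha}}{d_1^{-\alpha}+\gamma r^{-\alpha}}+1-p\right)^{b},
\end{align*}
so that the product inside the expectation is exactly $\prod_{{\bf x}\in\Phi'}f_b(\|{\bf x}\|)$. For real $b\in[0,1]$ the quantity inside the parentheses lies in $[1-p,1]$, so $f_b$ is bounded, positive and measurable, exactly meeting the hypotheses used in Theorem~\ref{theo:pgfl}; the extension to complex $b$ proceeds by analytic continuation, which is standard in meta-distribution arguments and mirrors the treatment in \cite{haenggi2021meta1}. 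Substituting $f_b$ into the formula of Theorem~\ref{theo:pgfl} directly yields
\begin{align*}
\mathbb{E}^{!}_{{\bf x}_1}\!\left[\prod_{{\bf x}\in\Phi'}f_b(\|{\bf x}\|)\,\bigg|\,d_1\right] = G\!\left(r_0,f_b(\cdot)\right),
\end{align*}
and the inner $y$-integrals acquire the integrand $1-f_b\!\bigl(\sqrt{y^2+l^2}\bigr)$ with $l=r_0\cos\theta-r$, which is precisely the expression appearing in the statement. Multiplying by the exponential noise prefactor completes the derivation.

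The only non-routine step is justifying the manipulation for complex $b$, since the theorem is stated for all $b\in\mathbb{C}$ (this is the form needed to recover the meta distribution via Gil-Pelaez inversion in Section~\ref{sec:meta}). I expect the main obstacle to be purely technical: verifying that $|f_b(r)|$ remains bounded and the $y$-integral remains absolutely convergent for $b=s+\mathrm{i}t$ with $s\in[0,1]$, so that dominated convergence allows interchanging the expectation with the product and integrals exactly as in the real-$b$ case. Once this is in place, the rest is a direct substitution into the PGFL formula, and averaging the resulting expression over the distribution of $d_1$ provided by Corollary~\ref{cor:distdist} yields the unconditioned moment as noted in the statement.
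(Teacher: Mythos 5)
Your proposal is correct and follows essentially the same route as the paper: pull the noise factor $\exp(-b\gamma/(\xi_0 d_1^{-\alpha}))$ out of the expectation, recognize the remaining Palm expectation as the PGFL of $\Phi'$ evaluated at $f_b(r)=\bigl(\tfrac{p\,d_1^{-\alpha}}{d_1^{-\alpha}+\gamma r^{-\alpha}}+1-p\bigr)^{b}$, and substitute into Theorem~\ref{theo:pgfl}. Your added remarks on boundedness and analytic continuation for complex $b$ are a welcome refinement of a point the paper leaves implicit, but they do not change the argument.
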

\begin{proof}
We have
\begin{align}
    M_b &= \mathbb{E}^!_{{\bf x}_1} \left[ \left(\exp{\left(\frac{-\gamma}{\xi_0 d_1^{-\alpha}}\right)} \prod_{{\bf x} \in \Phi^{\prime}} \frac{p}{1+\frac{\gamma d_{\bf x}^{-\alpha}}{d_1^{-\alpha}}} +1 - p\right)^b\right] \stackrel{(c)}{=} \exp{\left(\frac{-b\gamma}{\xi_0 d_1^{-\alpha}}\right)} \mathbb{E}^!_{{\bf x}_1} \left[ \prod_{{\bf x} \in \Phi^{\prime}} \left(\frac{p}{1+\frac{\gamma d_{\bf x}^{-\alpha}}{d_1^{-\alpha}}} +1 - p\right)^b \right] \nonumber \\
    &\overset{(d)}{=} \exp{\left(\frac{-b\gamma}{\xi_0 d_1^{-\alpha}}\right)} \frac{1}{A_{\rm D}(r_0, d_1)} \iint\limits_{ \mathcal{D}_{\rm B}(0, d_1)} \exp\left(-2\lambda\int_{\sqrt{d_1^2 - l^2}}^{\infty}\! 1 - \left(\frac{pd_1^{-\alpha}} {d_1^{-\alpha} + \gamma \left[y^2 + l^2 \right]^{-\frac{\alpha}{2}}} +1-p\right)^b\:{\rm d}y\right) \nonumber \\
    & {\rm d}r {\rm d}\theta \times \left(\frac{1}{2\pi R}\right)^{n_{\rm B}-1} \Bigg(\;\,\iint\limits_{ \mathcal{D}_{\rm B}(0, d_1)}  \exp\left(-2\lambda\int_{\sqrt{d_1^2 - l^2}}^{\infty}\!1 - \left(\frac{pd_1^{-\alpha}} {d_1^{-\alpha} + \gamma \left[y^2 + l^2 \right]^{-\frac{\alpha}{2}}} +1-p\right)^b\:dy\right){\rm d}r {\rm d}\theta + \nonumber \\ 
    &\iint\limits_{\mathcal{D} \backslash  \mathcal{D}_{\rm B}(0, d_1)} \exp\left(-2\lambda\int_{0}^{\infty}\!1 - \left(\frac{pd_1^{-\alpha}} {d_1^{-\alpha} + \gamma \left[y^2 + l^2 \right]^{-\frac{\alpha}{2}}} +1-p\right)^b\:{\rm d}y\right){\rm d}r {\rm d}\theta\Bigg)^{n_{\rm B}-1}.
    \label{eq:moment}
\end{align}
\end{proof}
Step (c) follows because the expectation is over $\Phi^{\prime}$, thus the first term comes out of expectation. Step (d) follows from the PGFL of the BLCP. Consequently, the meta distribution of the SINR can be calculated using the Gil-Palaez theorem as~\cite{md_haenggi}
\begin{align*}
    F_{P_{\rm s}}(z) = \frac{1}{2} + \frac{1}{\pi}\int_0^\infty \frac{\Im(e^{-ju \log(z)}M_{ju})}{u} \mathrm{d}u,
\end{align*}
where $\Im\left(\cdot\right)$ is the imaginary part of the argument and $j^2 = -1$.

\begin{corollary}
For $\alpha = 2$, the first moment of $P_s$ conditioned on $d_1$ can be simplified as
\begin{align*}
     p_S(\gamma) &= M_1 = \exp{\left(\frac{-\gamma}{\xi_0 d_1^{-\alpha}}\right)}\times \frac{1}{A_{\rm D}(r_0,d_1)} \iint\limits_{ \mathcal{D}_{\rm B}(0, d_1)} \exp\left(-2\lambda \frac{p\gamma d_1^2}{\sqrt{\gamma d_1^2 +l^2}}\times\arctan{\left(\sqrt{\frac{\gamma d_1^2 + l^2}{d_1^2-l^2}}\right)} \right) \\
    &{\rm d}r {\rm d}\theta \times\left(\frac{1}{2\pi R}\right)^{n_{\rm B}-1} \Biggl(\;\,\iint\limits_{ \mathcal{D}_{\rm B}(0, d_1)}  \exp\left(-2\lambda\frac{p\gamma d_1^2}{\sqrt{\gamma d_1^2 +l^2}}\times\arctan{\left(\sqrt{\frac{\gamma d_1^2 + l^2}{d_1^2-l^2}}\right)}\right){\rm d}r {\rm d}\theta \\
    & \hspace*{5cm}+ \iint\limits_{\mathcal{D} \backslash  \mathcal{D}_{\rm B}(0, d_1)} \exp\left(-\lambda \frac{p\pi\gamma d_1^2}{\sqrt{\gamma d_1^2 +l^2}}\right){\rm d}r {\rm d}\theta\Biggr)^{n_{\rm B}-1},
\end{align*}
where $l=r_0 \cos\theta - r$.
\end{corollary}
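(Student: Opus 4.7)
The plan is to specialize the general moment formula of Theorem~\ref{theo:mdmoments} by setting $b = 1$ and $\alpha = 2$ and then to evaluate the two inner $y$-integrals in closed form. Writing $l = r_0\cos\theta - r$ for brevity, the first step is to simplify the integrand. For $b=1$,
\begin{equation*}
1 - \left(\frac{p\,d_1^{-\alpha}}{d_1^{-\alpha} + \gamma\,[y^2+l^2]^{-\alpha/2}} + 1 - p\right)
= \frac{p\gamma\,[y^2+l^2]^{-\alpha/2}}{d_1^{-\alpha} + \gamma\,[y^2+l^2]^{-\alpha/2}},
\end{equation*}
and specializing to $\alpha=2$ and multiplying numerator and denominator by $(y^2+l^2)d_1^2$ reduces this to $\dfrac{p\gamma d_1^2}{y^2 + (l^2 + \gamma d_1^2)}$. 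Each inner integral is therefore of the elementary form $\int dy/(y^2+a^2)$ with $a^2 = l^2 + \gamma d_1^2$.

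Next, I would evaluate the two $y$-integrals separately according to the two outer regions. For the intersecting-line region $\mathcal{D}_{\rm B}(0,d_1)$ the lower limit is $\sqrt{d_1^2 - l^2}$, yielding
\begin{equation*}
\int_{\sqrt{d_1^2 - l^2}}^{\infty}\!\frac{p\gamma d_1^2}{y^2 + l^2 + \gamma d_1^2}\,{\rm d}y
= \frac{p\gamma d_1^2}{\sqrt{\gamma d_1^2 + l^2}}\left(\frac{\pi}{2} - \arctan\sqrt{\frac{d_1^2 - l^2}{\gamma d_1^2 + l^2}}\right).
\end{equation*}
Applying the identity $\pi/2 - \arctan x = \arctan(1/x)$ for $x>0$ (valid here because $|l|\le d_1$ throughout $\mathcal{D}_{\rm B}(0,d_1)$) rewrites this as $\dfrac{p\gamma d_1^2}{\sqrt{\gamma d_1^2 + l^2}}\arctan\!\sqrt{(\gamma d_1^2 + l^2)/(d_1^2 - l^2)}$, so after multiplication by the outer prefactor $-2\lambda$ this reproduces the exponent appearing in both intersecting-region double integrals in the statement. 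For the non-intersecting-line region $\mathcal{D}\setminus\mathcal{D}_{\rm B}(0,d_1)$, the lower limit is $0$, so only the $\pi/2$ contribution survives, yielding $\dfrac{p\pi\gamma d_1^2}{2\sqrt{\gamma d_1^2 + l^2}}$, which under the $-2\lambda$ prefactor becomes the exponent $-\lambda p\pi\gamma d_1^2/\sqrt{\gamma d_1^2 + l^2}$ of the second double integral in the corollary.

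Finally, I would observe that the noise factor $\exp(-\gamma/(\xi_0 d_1^{-\alpha}))$, the normalization $1/A_{\rm D}(r_0,d_1)$, the weighting $(1/(2\pi R))^{n_{\rm B}-1}$, and the outer integrations over $\mathcal{D}_{\rm B}(0,d_1)$ and $\mathcal{D}\setminus\mathcal{D}_{\rm B}(0,d_1)$ all carry through unchanged, so reassembling the pieces yields exactly the stated expression. I do not anticipate any genuine obstacle here: the corollary is entirely a matter of substituting $b=1,\alpha=2$ into Theorem~\ref{theo:mdmoments} and invoking one standard antiderivative. The only delicate point is the use of the complementary-arctan identity, which requires $d_1^2 - l^2 \ge 0$; this is guaranteed precisely because the domain band $\mathcal{D}_{\rm B}(0,d_1)$ is defined by $|r_0\cos\theta - r|\le d_1$, so the argument of the square root is nonnegative wherever the first integral is taken.
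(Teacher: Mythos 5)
Your proposal is correct and follows exactly the route the paper intends: the corollary is obtained by setting $b=1$ and $\alpha=2$ in Theorem~\ref{theo:mdmoments}, simplifying the integrand to $p\gamma d_1^2/(y^2+l^2+\gamma d_1^2)$, and evaluating the elementary arctangent antiderivative over the two lower limits $\sqrt{d_1^2-l^2}$ and $0$. Your explicit justification of the identity $\pi/2-\arctan x=\arctan(1/x)$ via $|l|\le d_1$ on $\mathcal{D}_{\rm B}(0,d_1)$ is a detail the paper leaves implicit, and it checks out.
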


\par The capacity of a wireless network can be studied in terms of the successful transmission density $p\lambda p_S(\gamma)$. This represents the density of the simultaneous transmitters that attempt and experience successful transmissions. 

Furthermore, the mean local delay is the expected number of transmissions required for successful transmission. It is obtained by setting $b=-1$~\cite{md_haenggi}. Naturally, in the case of latency-constrained networks, this is a critical parameter. 

\begin{corollary}
 For $\alpha = 2$, the mean local delay conditioned on $d_1$ is
\begin{align*}
    \mathcal{D}(p) = \frac{1}{p} M_{-1},
\end{align*}
where
\begin{multline*}
    M_{-1} = \exp{\left(\frac{\gamma}{\xi_0 d_1^{-\alpha}}\right)}\times \frac{1}{A_{\rm D}(r_0, d_1)} \iint\limits_{ \mathcal{D}_{\rm B}(0, d_1)} \exp\left(2\lambda \frac{p\gamma d_1^2}{\sqrt{\gamma d_1^2(1-p) +l^2}}\times \arctan{\left(\sqrt{\frac{\gamma d_1^2(1-p) + l^2}{d_1^2-l^2}}\right)} \right) \\
    {\rm d}r {\rm d}\theta \times \left(\frac{1}{2\pi R}\right)^{n_{\rm B}-1} \Bigg(\;\,\iint\limits_{ \mathcal{D}_{\rm B}(0, d_1)}  \exp\left(2\lambda\frac{p\gamma d_1^2}{\sqrt{\gamma d_1^2(1-p) +l^2}}\times \arctan{\left(\sqrt{\frac{\gamma d_1^2(1-p) + l^2}{d_1^2-l^2}}\right)}\right){\rm d}r {\rm d}\theta +\\ \iint\limits_{\mathcal{D} \backslash  \mathcal{D}_{\rm B}(0, d_1)} \exp\left(\lambda \frac{p\pi\gamma d_1^2}{\sqrt{\gamma d_1^2(1-p) +l^2}}\right){\rm d}r {\rm d}\theta\Bigg)^{n_{\rm B}-1}.
\end{multline*}
where $l=r_0 \cos\theta - r$.
\end{corollary}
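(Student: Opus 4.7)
The plan is to derive the formula as a direct specialization of the $b$-th moment expression in Theorem~\ref{theo:mdmoments}. The mean local delay equals $(1/p)\,M_{-1}$ because the chosen transmitter itself transmits only with probability $p$ under the ALOHA scheme, so the expected number of slots until its first transmission attempt contributes the factor $1/p$, and conditional on that attempt, the $-1$st moment of the conditional success probability gives the expected number of retransmissions until decoding. Consequently, the task reduces to evaluating the expression in Theorem~\ref{theo:mdmoments} at $b=-1$ and $\alpha=2$.

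First I would simplify the integrand $1-\left(\frac{p d_1^{-\alpha}}{d_1^{-\alpha}+\gamma(y^2+l^2)^{-\alpha/2}}+1-p\right)^{b}$ at $b=-1$ with $l=r_0\cos\theta-r$. Writing $u=d_1^{-\alpha}$ and $v=\gamma(y^2+l^2)^{-\alpha/2}$, the bracketed term equals $\frac{u+(1-p)v}{u+v}$, so its reciprocal gives
\begin{equation*}
1-\frac{u+v}{u+(1-p)v}=\frac{-pv}{u+(1-p)v}.
\end{equation*}
When multiplied by the prefactor $-2\lambda$ and specialized to $\alpha=2$, this turns the exponent of the inner integral into
\begin{equation*}
2\lambda\int \frac{p\gamma\,d_1^{2}}{y^{2}+l^{2}+(1-p)\gamma d_1^{2}}\,\mathrm{d}y,
\end{equation*}
which is the key algebraic step and cleanly removes the $(\cdot)^{b}$ nonlinearity.

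Next I would perform the $y$-integration using the standard antiderivative $\int (y^{2}+a^{2})^{-1}\mathrm{d}y=a^{-1}\arctan(y/a)$ with $a=\sqrt{l^{2}+(1-p)\gamma d_1^{2}}$. For the intersecting-lines contribution, the lower limit is $y=\sqrt{d_1^{2}-l^{2}}$ and the upper is $\infty$, so the evaluation gives $\frac{\pi}{2}-\arctan\bigl(\sqrt{d_1^{2}-l^{2}}/a\bigr)$, which I would rewrite as $\arctan\bigl(a/\sqrt{d_1^{2}-l^{2}}\bigr)$ via the identity $\arctan(1/x)=\pi/2-\arctan(x)$ for $x>0$, matching the form stated in the corollary. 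For the non-intersecting-lines contribution the lower limit is $0$, so the arctan difference is exactly $\pi/2$, producing the $\lambda p\pi\gamma d_1^{2}/\sqrt{\gamma d_1^{2}(1-p)+l^{2}}$ term in the exponent. The noise prefactor follows immediately by substituting $b=-1$ into $\exp(-b\gamma/(\xi_0 d_1^{-\alpha}))$, which flips the sign to $\exp(\gamma/(\xi_0 d_1^{-\alpha}))$, consistent with a negative moment.

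The only delicate point is verifying that the $\arctan$ identity used to recast $\frac{\pi}{2}-\arctan(\sqrt{d_1^{2}-l^{2}}/a)$ applies, i.e.\ that the argument is nonnegative; this holds on the domain band $\mathcal{D}_{\rm B}(0,d_1)$ where by construction $|l|\le d_1$, so $d_1^{2}-l^{2}\ge 0$ and $a>0$ whenever $p<1$. Assembling the two $\mathrm{d}r\,\mathrm{d}\theta$ contributions together with the line-containing-the-server factor exactly as in the general $M_b$ expression, and dividing by $p$, yields the stated formula for $\mathcal{D}(p)$. I do not expect any genuinely hard step: the derivation is a substitution followed by one elementary antiderivative, with the Palm bookkeeping already absorbed into Theorem~\ref{theo:mdmoments}.
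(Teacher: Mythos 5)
Your proposal is correct and follows exactly the route the paper intends: substitute $b=-1$ and $\alpha=2$ into the $b$-th moment expression of Theorem~\ref{theo:mdmoments}, simplify $1-\bigl(\frac{u+(1-p)v}{u+v}\bigr)^{-1}=\frac{-pv}{u+(1-p)v}$, and evaluate the resulting inner integral with the elementary antiderivative $\int(y^2+a^2)^{-1}\,\mathrm{d}y=a^{-1}\arctan(y/a)$, using $\arctan(1/x)=\pi/2-\arctan(x)$ on the intersecting band and the full $\pi/2$ on the non-intersecting part. The sign flips in the noise prefactor and in the exponents, the $1/p$ factor from the serving AP's ALOHA access, and the domain check $|l|\le d_1$ on $\mathcal{D}_{\rm B}(0,d_1)$ are all handled correctly.
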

In the next section, we discuss how due to the impact of interference, optimizing the channel access probability $p$ for minimizing the mean local delay is non-trivial.


\section{Numerical Results and Discussion}
\label{sec:NRD}
In this section, we discuss some numerical results to highlight the applications of the derived framework in analyzing the wireless network. Unless otherwise stated, all results are for $R = 50, \alpha=2, \xi_0 = 2.9858\cdot 10^{-8}$ and, $\gamma=0.1$.

\begin{figure}[t]
    \centering
    \includegraphics[width = 0.5\textwidth]{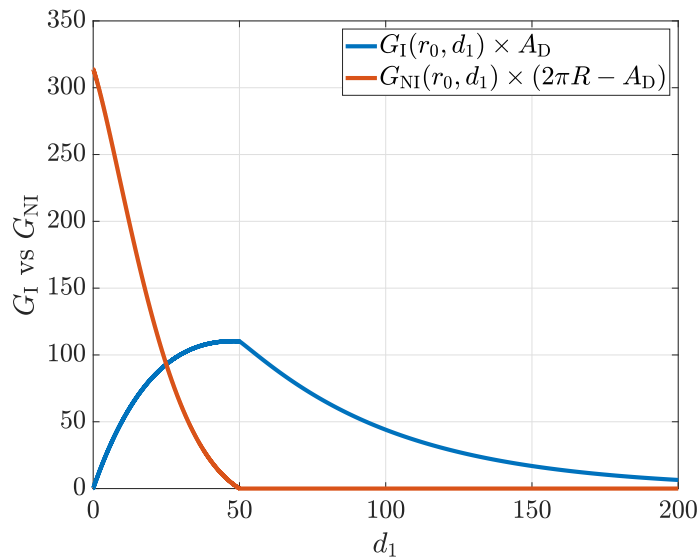}
    \caption{Conditional PGFL of intersecting and non-intersecting lines (see \eqref{eqivsni}). Here $r_0 = 0$, $R = 50$, $\lambda = 0.1$ and $n_{\rm B} = 10$. Recall that under Palm, one line intersects almost surely.}
    \label{fig:ivsni}
\end{figure}
\subsection{On the Success Probability}
First, let us observe some features of the weighted conditional PGFLs $A_{\rm D}(r_0, d_1) G_{\rm I}(r_0, d_1)$ and \newline $\left(2\pi R - A_{\rm D}(r_0, d_1)\right) \cdot G_{\rm NI}(r_0, d_1)$ for the function $f(r) = \frac{1}{1 + \gamma (d_1/r)^\alpha}$ as derived in \eqref{eqivsni}). From \eqref{eq:cpgfl} note that these conditional \acp{PGFL} constitute the overall functional $G(r_0, f(\cdot))$, i.e., 

\begin{align}
G(r_0, f(\cdot)) =  G_{\rm I}(r_0,d_1) \left(\frac{1}{2\pi R}\right)^{n_{\rm B}-1}  \left(A_{\rm D}(r_0, d_1) \; G_{\rm I}(r_0, d_1) + \left({2\pi R - A_{\rm D}(r_0, d_1)}\right) \; G_{\rm NI}(r_0, d_1)\right)^{n_{\rm B}  - 1}.
\nonumber 
\end{align}

Moreover, since $p_{\rm S}$ is proportional to $G(r_0, f(\cdot))$ for a given $d_1$, it is important to study the trends in the weighted conditional PGFLs with $d_1$. For lower values of $d_1$, $\mathcal{D}_{\rm B}(0, d_1)$ is smaller than $\mathcal{D} \backslash \mathcal{D}_{\rm B}(0, d_1)$. Accordingly, Fig.~\ref{fig:ivsni} shows that $(2\pi R - A_{\rm D}(0, d_1))G_{\rm NI}(0, d_1)$ has a value $2\pi R = 314.15$ at $d_1 = 0$ and decreases with $d_1$ until it becomes zero exactly at $d_1=50$. This is due to the fact that at $d_1 = R$, all lines intersect the disk. On the contrary, $A_{\rm D}(0, d_1)G_{\rm I}(0, d_1)$ has a value 0 at $d_1 = 0$ and increases till $d_1=50$, as the area corresponding to intersecting lines increases. Beyond $d_1=50$, $A_{\rm D}(0, d_1)G_{\rm I}(0, d_1)$ decreases precisely due to the increasing distance of the serving \ac{AP} from the receiver.



\begin{figure}[t]
\centering
\subfloat[]
{\includegraphics[width=0.4\textwidth]{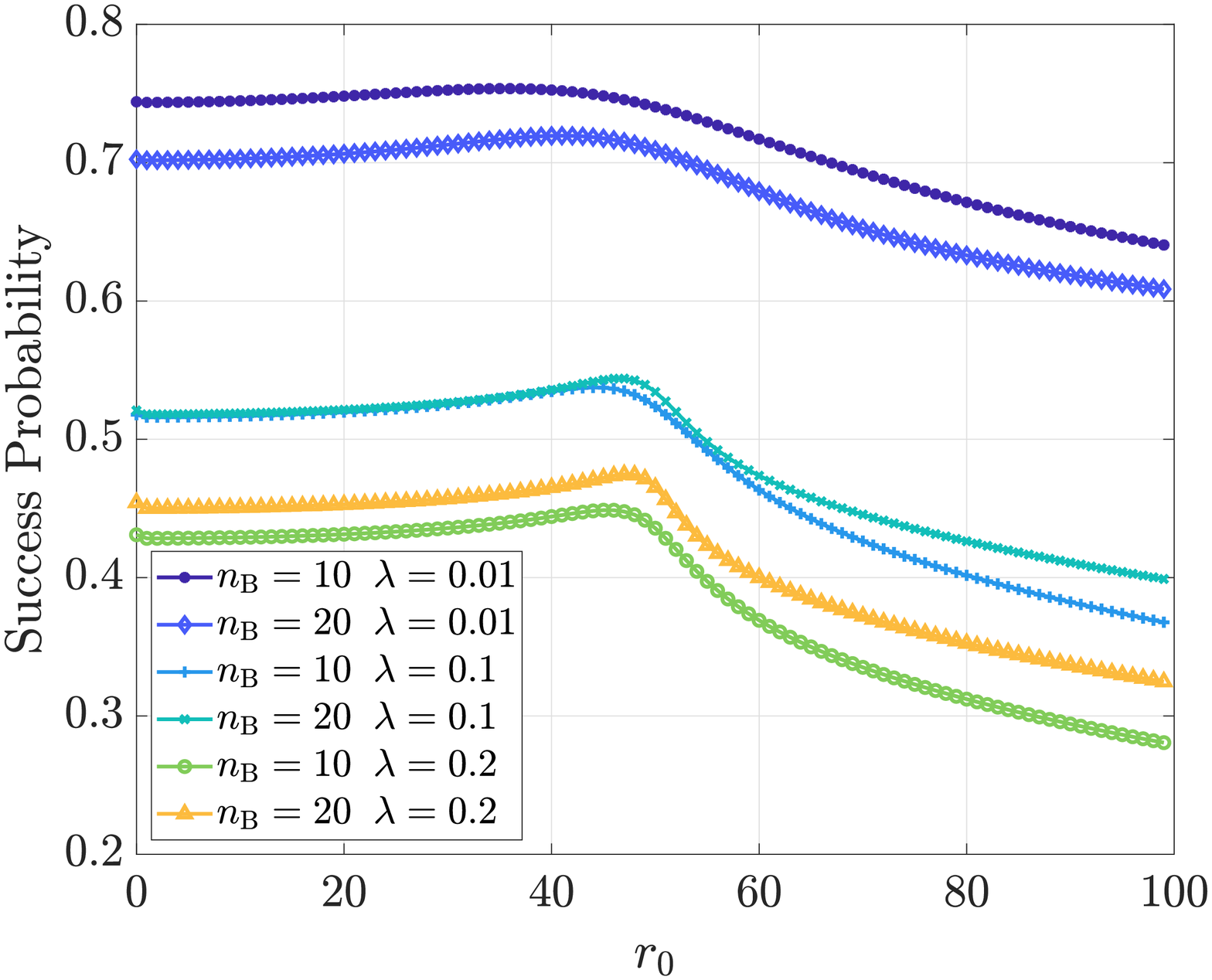}
\label{fig:cov_vs_xt}}
\hfil
\subfloat[]
{\includegraphics[width=0.42\textwidth]{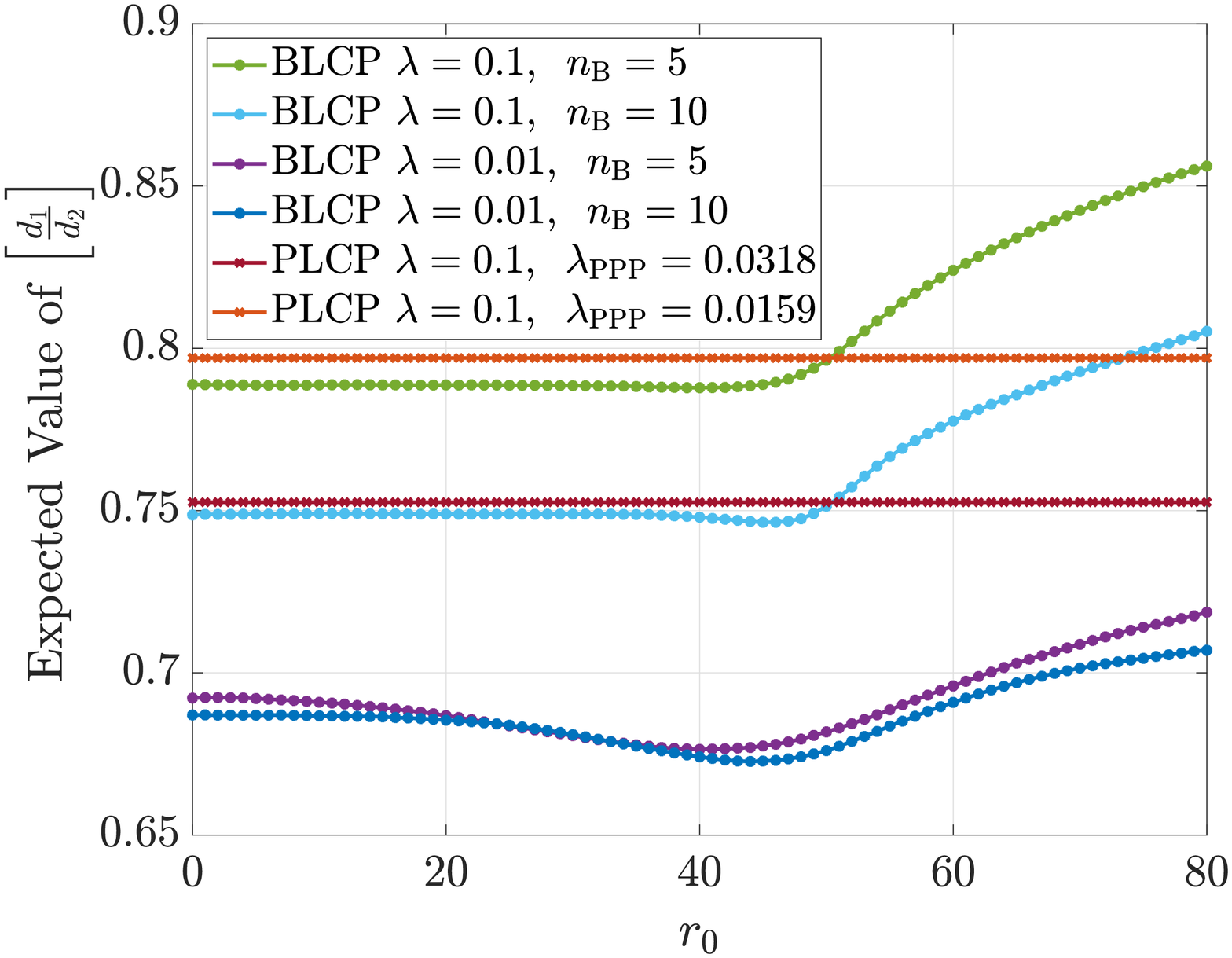}
\label{fig:e_d1_d2}}
\caption{(a) Success probability with respect to $r_0$. Here $R = 50$. (b) $\mathbb{E}\left[\frac{d_1}{d_2}\right]$ with respect to $r_0$.}
\label{fig:10_11} 
\end{figure}

In Fig.~\ref{fig:cov_vs_xt}, we plot the success probability with respect to $r_0$ for different values of $\lambda$ and $n_{\rm B}$. We observe that the success probability first increases slightly with $r_0$, reaches a maximum and starts decreasing. To delve deeper into this phenomenon, we plot the expected value of the ratio of the distance from the nearest BLCP point to the distance of the second nearest BLCP point with respect to $r_0$. In systems where the majority of the interference power is contributed by the nearest interferer, this parameter acts as a good indicator of the success probability. From Fig.~\ref{fig:e_d1_d2}, we note that the value of $\mathbb{E}\left[\frac{d_1}{d_2}\right]$ decreases with $r_0$ at first, reaches its minimum value near $r_0 = R$ and increases beyond that. This indicates that as we move away from the city center, although both the serving \ac{AP} and the interferers become statistically distant from the test point, the relative increase in $d_1$ is higher as compared to the relative increase in $d_2$ with $r_0$. Such an insight for urban networks, in case the streets follow a BLCP cannot be obtained with PLCP models (also shown in Fig.~\ref{fig:e_d1_d2} for different densities).

\subsection{Optimal Network Parameters}
In Fig.~\ref{fig:cov_vs_nb} and Fig.~\ref{fig:cov_vs_lambda} we plot the success probability with respect to $n_{\rm B}$ and $\lambda$ respectively for different values of $r_0$. We observe that for a given location of the test device, increasing the number of lines may increase or decrease the success probability depending on the location of the test device and the density of deployment. In particular, the results suggest that a higher number of \acp{AP} are needed to be deployed in a scenario with dense streets. Based on the location of the test device, there may exist an optimal deployment density of the \ac{AP}s. For example, for a test device located at $r_0 = 50$, a deployment density of $\lambda = 50$ maximizes the coverage. Fig.~\ref{fig:cov_vs_lambda} shows that as $\lambda$ increases the coverage decreases, due to the increased interference. 
After a certain $\lambda$ value, more streets provide better coverage than fewer streets.
Similarly, in Fig.~\ref{fig:cov_vs_nb}(b), we observe that $n_{\rm B}$ maximizes coverage for a given $r_0$ and $\lambda$. Also, for high values of $\lambda$, coverage increases with $n_{\rm B}$ and then decreases.





\begin{figure}[t]
    \centering
    \includegraphics[width = 0.85\textwidth, height = 7cm]{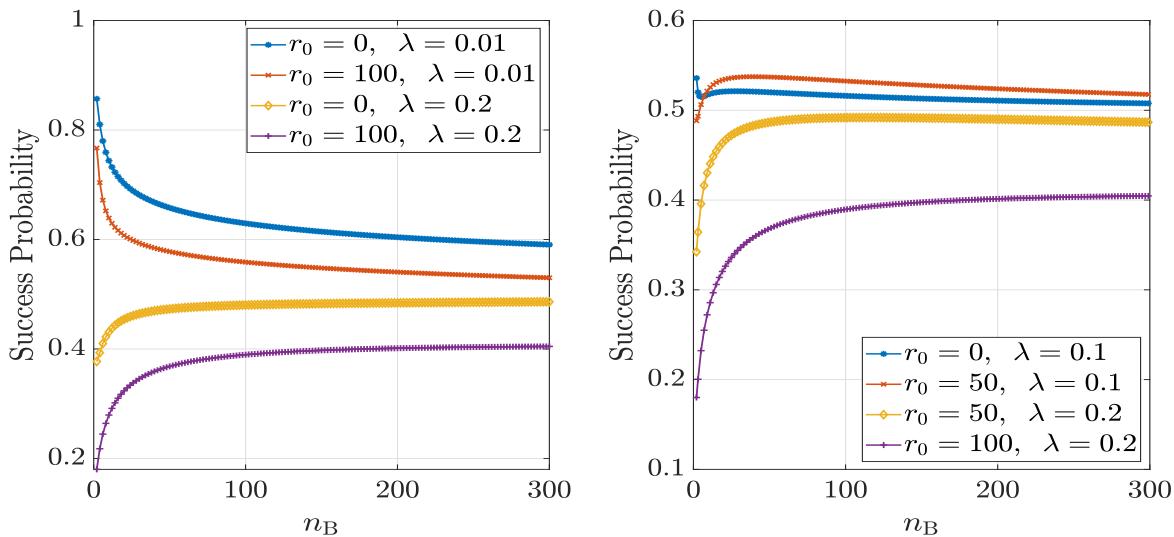}
    \caption{Success probability with respect to $n_{\rm B}$.}
    \label{fig:cov_vs_nb}
\end{figure}
\begin{figure}
    \centering
    \includegraphics[width = 0.45\textwidth]{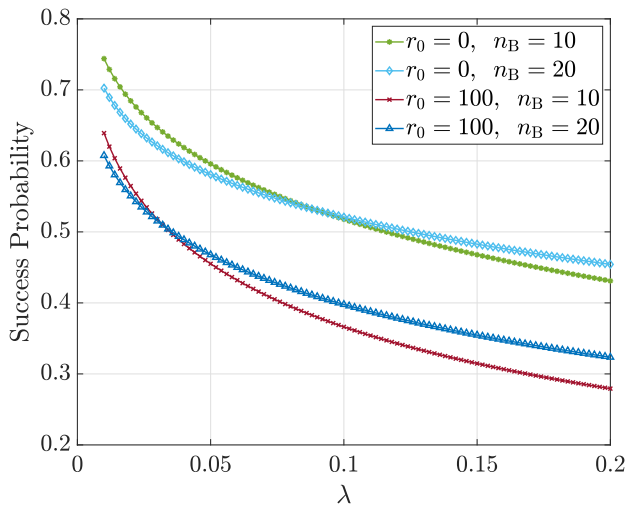}
    \caption{Success probability with respect to the density of \ac{AP}s.}
    \label{fig:cov_vs_lambda}
\end{figure}

\begin{figure}[t]
\centering
\subfloat[]
{\includegraphics[width=0.4\textwidth]{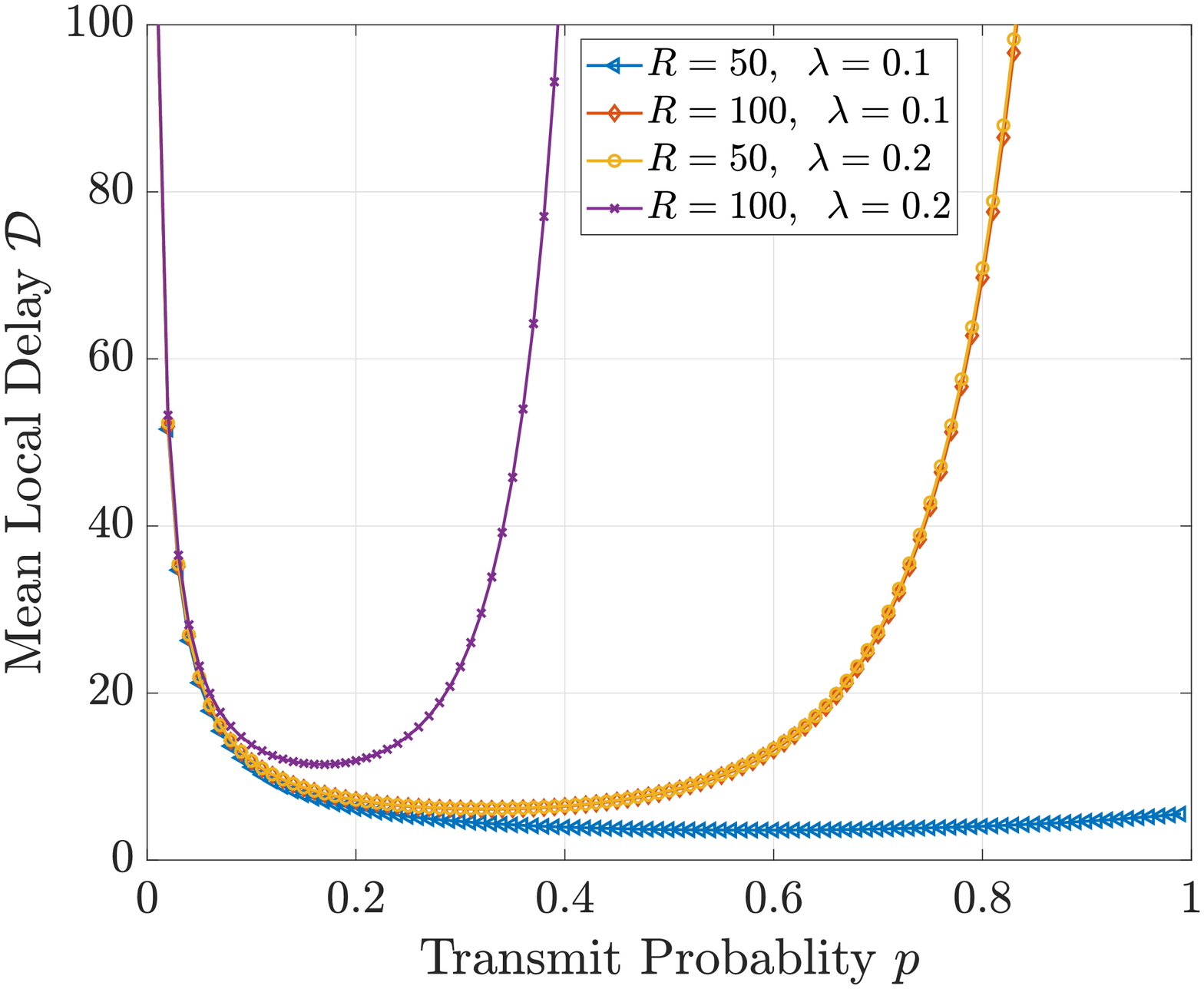}
\label{fig:delay}}
\hfil
\subfloat[]
{\includegraphics[width=0.4\textwidth]{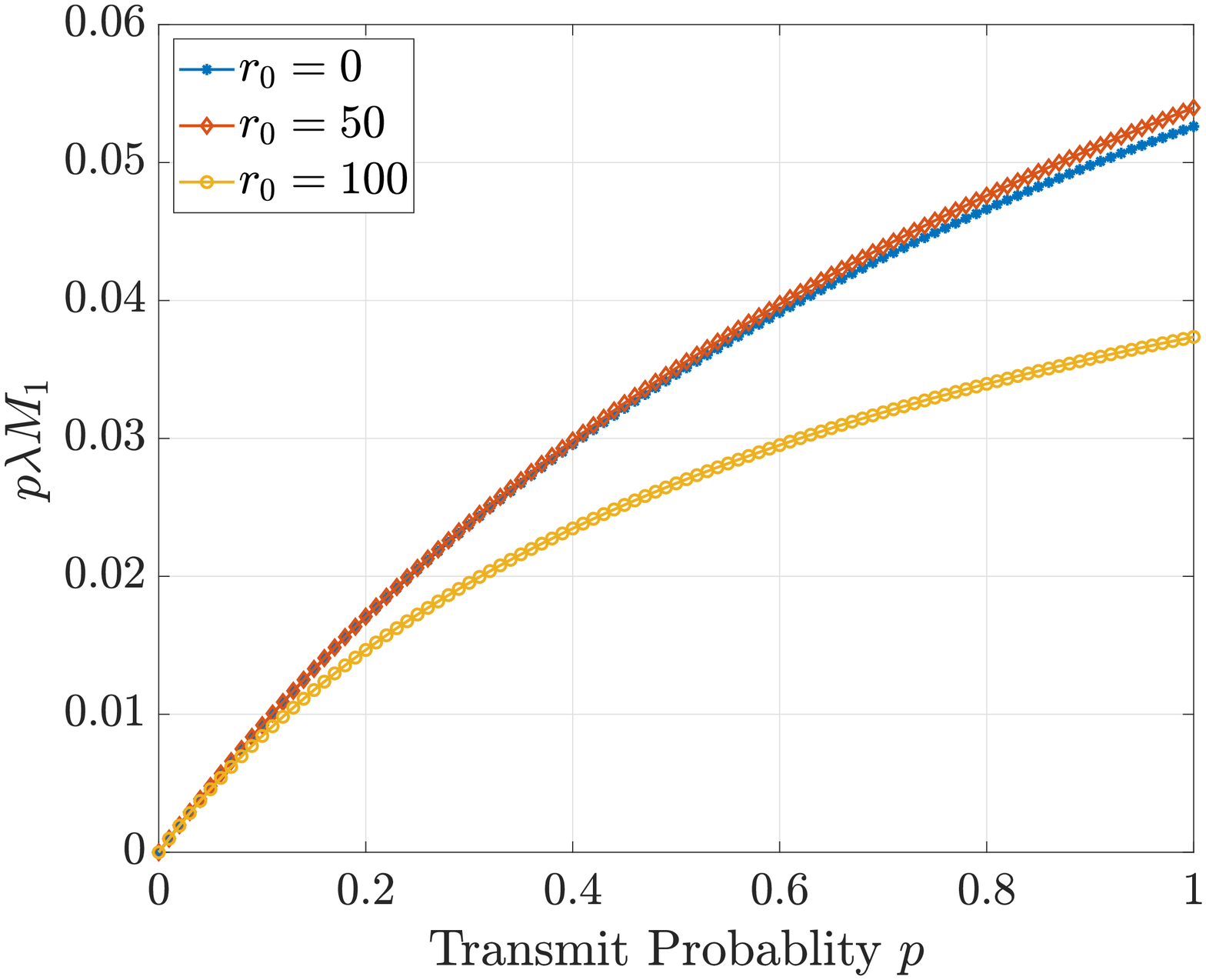}
\label{fig:std}}
\caption{(a) Mean local delay with respect to the transmit probability for different values of $R$ and $\lambda$. Here, $r_0 = 0$. (b) Successful transmission density.}
\label{fig:13_14} 
\end{figure}

\subsection{Results on Moments of Conditional Success Probability and Meta Distribution}
In Fig.~\ref{fig:delay} we plot the mean local delay for a test node present at the origin. As the transmit probability increases, we see that initially, the local delay decreases since the transmitter accesses the channel more frequently. However, as the transmit probability increases, a higher density of interfering \ac{AP}s results in the deterioration of the coverage and accordingly, an increase in the delay. This indicates the existence of optimal transmit probability for minimizing the delay. The optimal transmit probability for minimizing the mean local delay is non-trivial to derive due to the two contending phenomena - increasing $p$ i) increases the frequency with which the service to the test device is attempted thereby reducing the delay and ii) increases the interference which reduces the transmission success and increases the delay. Such nuances of the wireless network are kept for future study.

Fig.~\ref{fig:std} shows the successful transmission density, $p\lambda M_1$, which is the number of successful transmissions per unit area. This acts as an indicator of the network capacity. Here we have assumed $R=50$, $n_{\rm B} = 10$, and $\gamma = -10$ dB. We see that $p\lambda M_1$ increases as $p$ increases specifying that more transmission leads to better transmission density. As we move closer to the city edge i.e., $r_0=50$ successful transmission density increases slightly as compared to $r_0=0$ due to reduced interference. However, at a further distance from the city center, the successful transmission density decreases due to the increasing distance from the serving AP. This is consistent with the results of Fig.~\ref{fig:cov_vs_xt} where we see that as $r_0$ increases coverage first increases then decreases. Such characteristics of the network as a function of $r_0$ cannot be obtained using classical models such as PLP and PLCP.


The optimal transmit probability for minimizing the mean local delay is plotted in Fig. \ref{fig:optdelay} for different locations of the test node, $R$, $\lambda$, and $n_{\rm B}$. We observe that as $r_0$ increases, $p^{\ast}$ also increases first and then decreases. We see that the maximum value of $p^{\ast}$ occurs near the edge of the domain of the BLP. As $r_0$ increases further, the distance to the nearest transmitter as well as the other interfering nodes increases. Consequently, the transmit probability reduces in order to limit the device outage. Interestingly, for $r_0\leq30$, $p^{\ast}$ for $n_{\rm B}=10$ is higher as compared to $n_{\rm B}=20$ since the lines are densely packed around the city center. On the contrary, for $r_0\geq30$, $n_{\rm B}=20$ needs a higher transmission probability than the case with $n_{\rm B}=10$.

\begin{figure}[t]
\centering
\subfloat[]
{\includegraphics[width=0.4\textwidth]{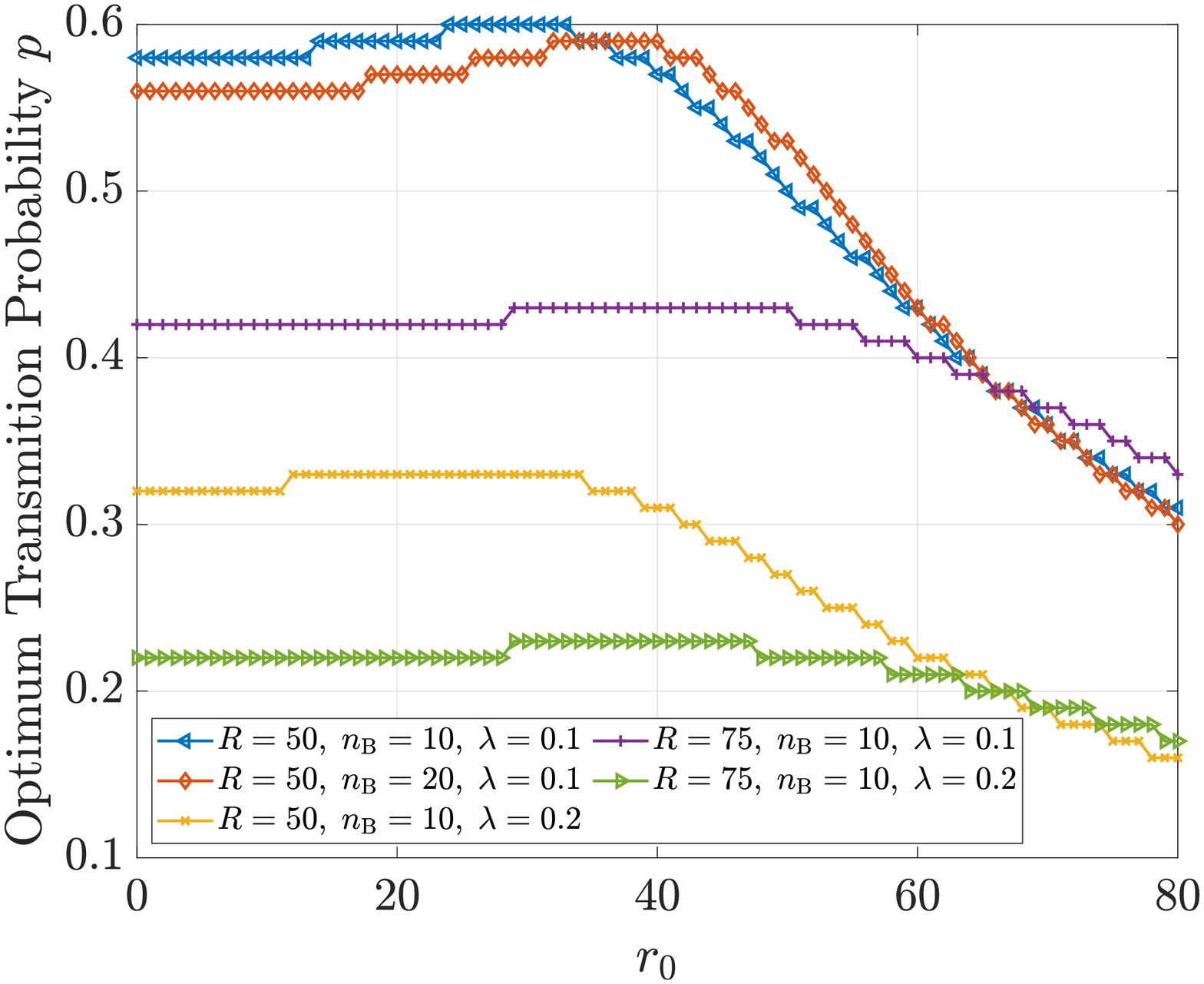}
\label{fig:optdelay}}
\hfil
\subfloat[]
{\includegraphics[width=0.4\textwidth]{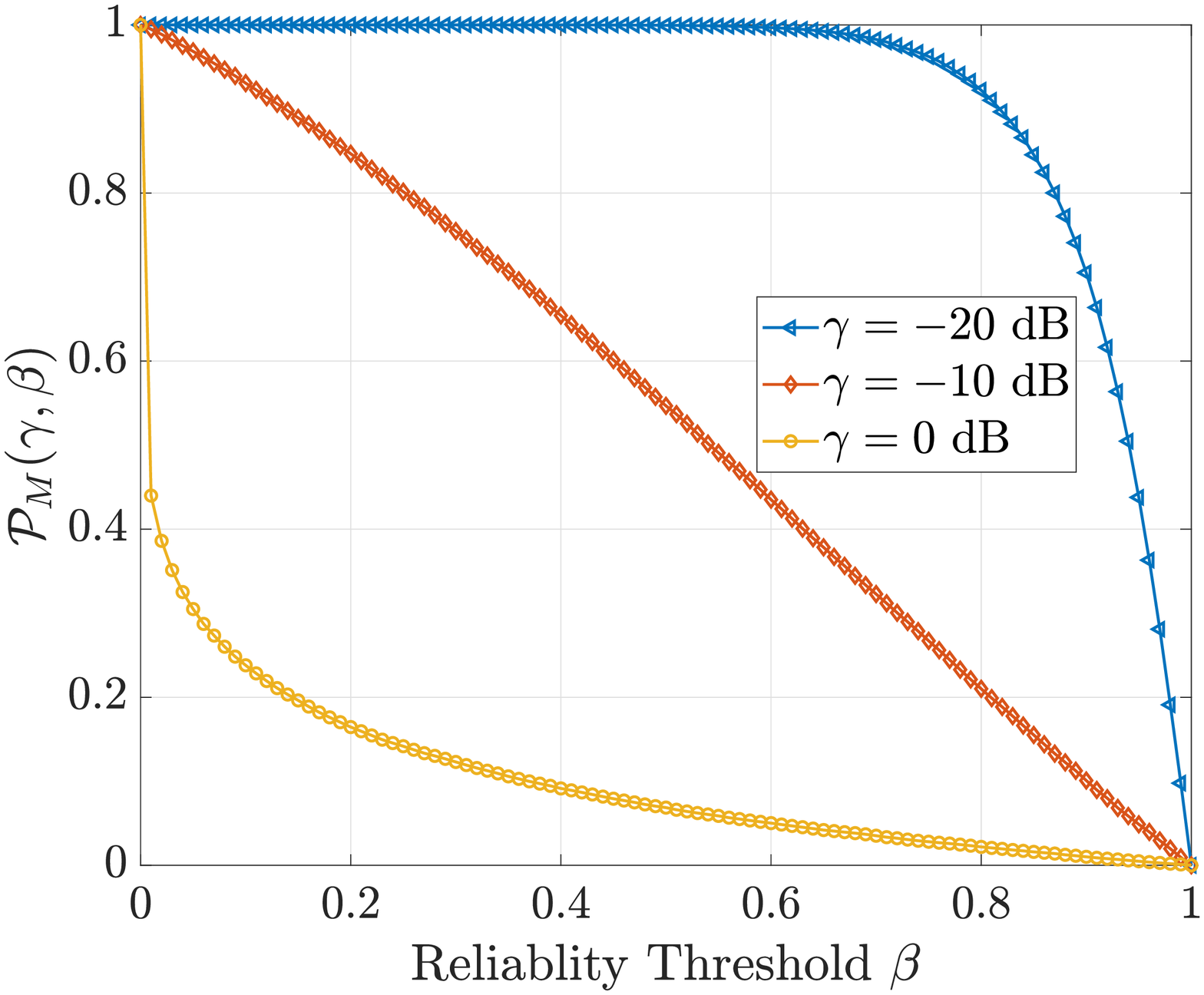}
\label{fig:meta_vs_t}}
\caption{(a) Optimal transmit probability for minimizing the mean local delay. (b) SINR meta distribution.}
\label{fig:15_16} 
\end{figure}

The meta distribution of the success probability is plotted in Fig.~\ref{fig:meta_vs_t} with respect to the reliability threshold $\beta$, for $R=50$ and $r_0=0$. We observe that for an SINR threshold of $\gamma=-20$ dB, the majority of the users are under coverage with reliability of $70\%$ (or probability 0.7). On the contrary, for a service characterized by $\gamma=0$ dB, with a 90\% guarantee, it can be claimed that none of the users will be under coverage. For $\gamma=-10$ dB we see that there is a near-linear relationship between the reliability and the number of users under coverage.


\section{Conclusions and Future Work}
\label{sec:con}
In this paper we have characterized the binomial line Cox process (BLCP) which takes into account the non-homogeneity of lines in a Euclidean plane. Although there are several line processes studied in literature, none of the existing models take into account the non-homogeneity of the lines. This is a drawback of the existing models since practical problems e.g., wireless network planning or transport infrastructure planning need to deal with non-homogeneous streets in a city. We have derived the distribution to the distance of the nearest intersection, the probability generating functional of the BLCP and used it to analyze the transmission success probability in a wireless network. Then, we provide extensive numerical results to derive system design insights for such network deployments. Furthermore, we characterize the meta-distribution of the SINR in order to gain a fine-grained view of the network.
We envisage that the statistical model developed in this paper will be employed in the study of practical problems involving urban street planning. The shortest path-length, multi-tier nodes, and non-homogeneous Poisson line Cox processes are indeed interesting directions of research, which will be taken up as a future work.


\bibliographystyle{ieeetr}
\bibliography{references}

\appendices

\section{Proof of Theorem~\ref{theo:intersection}}
\label{app:intersection}
Let $S = \mathcal{B}\left((0,0), t\right)$ and assume that the BLP is generated within a circle of radius of $R$ with $n_{\rm B}$ lines. In order to determine the average number of intersections over $S$, let us consider a BLP line $L_0$ generated at the point $(0,r_0)$ where $0\leq r_0 \leq \min\{t, R\}$. First, let us determine the domain band $\mathcal{D}_{\times}$ corresponding to the intersection on $L_0$ i.e., all such $(\theta_i,r_i)$ for which lines $L_i \in \mathcal{L}$ will intersect line $L_0$ within $S$. For a given $\theta$, the range of $r$ where if a line is generated intersects $(0,r_0)$ within $S$ is,
\begin{align*}
    \max\left\{-R,\left(r_0\cos\theta - \sqrt{t^2 - r_0^2}\sin\theta\right)\right\} \leq r_i \leq \min\left\{R,\left(r_0 \cos\theta + \sqrt{t^2 - r_0^2} \sin\theta\right)\right\}.
\end{align*}
We see that for values of $t>R$ domain band gets clipped to $R$ and $-R$ in the upper and lower limits, thus the area of the domain band for these two cases i.e. $t \leq R$ and $t > R$ can be derived as follows.
\par \textbf{Case 1:} $t \leq R$. Here $S \subset \mathcal{B}((0,0), R)$. Consequently, there is no clipping in the values of $r$ and the area of the domain band averaged for uniformly distributed $r_0$ between 0 and $t$ can be written as
\begin{align*}
    A_{D_{\times ,1}}(t) &= \frac{1}{t}\int_0^t\int_0^\pi \left(r_0\cos\theta + \sqrt{t^2 - r_0^2}\sin\theta\right) \mathrm{d}\theta\; \mathrm{d}r_0 - \frac{1}{t}\int_0^t\int_0^\pi \left(r_0 \cos\theta - \sqrt{t^2 - r_0^2} \sin\theta\right) \mathrm{d}\theta\; \mathrm{d}r_0= \pi t.
\end{align*}
\par \textbf{Case 2:} $t > R$.  Here $\mathcal{B}((0,0), R) \subset S$. Accordingly, the values of $r$ are limited to $R$ and $-R$. The values of $\theta$ for which $r$ are clipped are
\begin{align}
    \theta_{11}(r_0, t) &= \arctan{\left(\sqrt{\frac{t^2}{r_0^2} - 1}\right)} - \arccos{\left(\frac{R}{t}\right)}, 
    &\theta_{21}(r_0, t) =  \pi - \arccos{\left(\frac{R}{t}\right)} - \arctan{\left(\sqrt{\frac{t^2}{r_0^2} - 1}\right)},\nonumber \\
    \theta_{12}(r_0, t) &= \arctan{\left(\sqrt{\frac{t^2}{r_0^2} - 1}\right)} + \arccos{\left(\frac{R}{t}\right)},
    &\theta_{22}(r_0, t) = \pi + \arccos{\left(\frac{R}{t}\right)} - \arctan{\left(\sqrt{\frac{t^2}{r_0^2} - 1}\right)}. \nonumber
\end{align}
We see that $r$ is clipped to $R$ for values of $\theta$ from $\theta_{11}$ to $\theta_{12}$, similarly $r$ would be clipped to $-R$ for values of $\theta$ from $\theta_{21}$ to $\theta_{22}$. Thus the area of the domain band averaged for uniformly distributed $r_0$ between 0 and $R$ is
\begin{align*}
    A_{D_{\times ,2}}(t) &= \frac{1}{R}\int_0^R \left[2 \int_0^{\theta_{11}} \left(r_0\cos\theta + \sqrt{t^2 - r_0^2}\sin\theta\right)\; \mathrm{d}\theta\; + R(\theta_{21} - \theta_{11})\right]\mathrm{d}r_0 - \\ & \hspace*{3cm} \frac{1}{R}\int_0^R \left[2 \int_0^{\theta_{21}} \left(r_0\cos\theta - \sqrt{t^2 - r_0^2}\sin\theta\right)\; \mathrm{d}\theta\; + R(\theta_{22} - \theta_{21})\right]\mathrm{d}r_0 \\
    & = \frac{2}{R}\left(t^2 \arcsin{\left(\frac{R}{t}\right)} + R\left(2 R \arccos{\left(\frac{R}{t}\right)} - \sqrt{t^2 - R^2} \right)\right).
\end{align*}
Thus, the area of the domain band for a line to intersect the line $L_0$ within $S$ is
\begin{equation}
    A_{D_{\times}}(t) = \begin{cases} 
    \pi t,  &\text{if } t \leq R,    \\[1ex]
    \frac{2}{R}\left(t^2 \arcsin{\left(\frac{R}{t}\right)} + 2 R^2 \arccos{\left(\frac{R}{t}\right)} - \sqrt{t^2 - R^2} \right) & \text{if } t > R.
    \end{cases}
    \label{eq:areaDI}
\end{equation}
Accordingly, the probability that a line of the BLP intersects a single line within $S$ is obtained by $\frac{A_{D_{\times}}(t)}{2\pi \min\{t, R\}}$. It is given as
\begin{align*}
    \mathcal{P}_{\times} (t) = \begin{cases} 
    \frac{1}{2},  &\text{if } t \leq R ,   \\[1ex]
    \frac{1}{\pi R^2} \left(t^2 \arcsin{\left(\frac{R}{t}\right)} + 2 R^2 \arccos{\left(\frac{R}{t}\right)} - \sqrt{t^2 - R^2} \right) & \text{if } t > R.
    \end{cases}
\end{align*}
Now, let us assume that $k$ lines are generated in $S$. Each of these intersects $L_0$ with probability $\mathcal{P}_{\times} (t)$. As a result, the average number of intersections on $L_0$ within $S$ from the $k$ lines is evaluated as

\begin{align*}
    \mathcal{N}^\prime = \sum_{j=0}^k j \binom{k}{j} \left(\mathcal{P}_{\times} (t \,|\, t\leq R)\right)^{j} \left(1 - \mathcal{P}_{\times} (t \,|\, t\leq R\right)^{k-j} = \frac{k}{2}.
\end{align*}

Finally in order to determine the average number of intersections on all the lines within $S$, we take the expectation over the number of lines are generated within $S$. This is evaluated as
\begin{align*}
    \mathcal{N}_1 &= \sum_{k=0}^{n_{\rm B}-1} \underbrace{\binom{n_{\rm B}}{k+1}}_{T_1} \underbrace{\left(\frac{t}{R}\right)^{k+1} \left(1 - \frac{t}{R}\right)^{n_{\rm B}-k-1}}_{T_2} \times \underbrace{\frac{k}{2}}_{T_3} \times \underbrace{(k+1)}_{T_4} \times \underbrace{\frac{1}{2}}_{T_5}= \frac{n_{\rm B} (n_{\rm B} - 1)}{4} \left(\frac{t}{R}\right)^2. 
\end{align*}

where $T_1$ corresponds to choosing $k$ out of $n_{\rm B}$ lines, $T_2$ refers to the probability that exactly $k$ lines are generated in $S$, $T_3$ is the average number of intersections on a single line given that $k$ lines are generated in $S$, $T_4$ is due to the fact that including $L_0$ there are $k+1$ lines in $S$, and finally, $T_5$ is to avoid the double counting of the intersections. Similarly, in case $t > R$ there are $n_{\rm B}$ lines generated, thus the average number of intersections on all lines is
\begin{align*}
    \mathcal{N}_2 = \sum_{k_1=0}^{n_{\rm B} - 1} k_1\mathcal{P}_{\times} (t \,|\, t > R) = \frac{n_{\rm B} (n_{\rm B} - 1)}{2 \pi R^2} \left(t^2 \arcsin{\left(\frac{R}{t}\right)} + R\left(2 R \arccos{\left(\frac{R}{t}\right)} - \sqrt{t^2 - R^2} \right)\right).
\end{align*}
Combining the above, the average number of intersections within a disk of radius $t$ centered at the origin is
\begin{equation}
    \mathcal{N} = \begin{cases} 
    \frac{n_{\rm B} (n_{\rm B} - 1)}{4} \left(\frac{t}{R}\right)^2,  &\text{if } t \leq R   , \\[1ex]
    \frac{n_{\rm B} (n_{\rm B} - 1)}{2 \pi R^2} \left(t^2 \arcsin{\left(\frac{R}{t}\right)} + R\left(2 R \arccos{\left(\frac{R}{t}\right)} - \sqrt{t^2 - R^2} \right)\right) & \text{if } t > R.
    \end{cases}
    \label{eq:intersectionMeasure}
\end{equation}
Next, similar to subsection 2.3, we consider concentric circles centered at the origin having radii $l = \{w, 2w, \dots \}$. As a result, the annuli formed by these concentric circles have the same width, $w$. In the $i-${th} annulus of width $w$, we use (\ref{eq:intersectionMeasure}) to determine the ratio of the average number of intersections to the area to determine intersection radial density as,
\begin{align}
    &\rho_{{\times}, i} (w) = 
    \begin{cases}
        \frac{1}{\pi w^2 (2i+1)} \left(\frac{n_{\rm B} (n_{\rm B} - 1)}{4} \times \left(\frac{(w(i+1))^2}{R^2} - \frac{(wi)^2}{R^2}\right)\right) ; \quad  \hspace{3cm}  \text{for } (i+1)w \leq R, \\
        \frac{n_{\rm B} (n_{\rm B} - 1)}{2\pi^2 R^2 w^2 (2i+1)} \Bigg((i+1)^2w^2 \arcsin{\left(\frac{R}{(i+1)w}\right)} + 2R^2 \arccos{\left(\frac{R}{(i+1)w}\right)} \\
         \hspace{1cm} - R\sqrt{(i+1)^2w^2 - R^2} - i^2w^2 \arcsin{\left(\frac{R}{iw}\right)} - 2R^2 \arccos{\left(\frac{R}{iw}\right)}\\ 
         \hspace{5.4cm} + R\sqrt{i^2w^2 - R^2} \Bigg); \hspace{2cm} \text{for } (i+1)w>R.
    \end{cases}
    \label{eq:intersectionAnnulus}
\end{align}
The final result of the intersection radial density of a BLP can be obtained by substituting $iw = r$ and taking the limit $w \to 0$ in \eqref{eq:intersectionAnnulus}.

\end{document}